\documentclass[12pt]{article}

\usepackage[top=1.25in, bottom=1.25in, left=1.25in, right=1.25in]{geometry}

\usepackage{graphicx}

\usepackage[colorlinks,citecolor=black,urlcolor=black]{hyperref}

\usepackage{enumerate}
\usepackage[normalem]{ulem}
\usepackage{amsmath,amssymb,stmaryrd}
\usepackage{setspace}
\usepackage[amsthm]{newtx}
\usepackage{slantsc}\usepackage[cmtip,all]{xy}
\usepackage{accents}
\usepackage[cmyk]{xcolor}
\usepackage{titlesec}

\usepackage{epsfig}
\usepackage{sgame}
\usepackage{subcaption}

\usepackage{pgfplots}

\usetikzlibrary{calc,arrows,automata,shapes.misc,shapes.arrows,chains,matrix,positioning,scopes,decorations.pathmorphing,shadows}
\usepackage{comment}

\usepackage{natbib}
\setcitestyle{authoryear,open={(},close={)}}

\newtheorem{notation}{Notation}
\newtheorem{lemma}{Lemma}

\newtheorem{remark}{Remark}

\newtheorem{corollary}{Corollary}

\newtheorem{proposition}{Proposition}
\newtheorem{observation}{Observation}

\theoremstyle{definition}
\newtheorem{example}{Example}
\newtheorem{definition}{Definition}

\newcommand{\argmax}{\mathrm{argmax}}
\newcommand{\argmin}{\mathrm{argmin}}
\DeclareMathOperator{\Tr}{Tr}

\newcommand{\brac}[1]{\left[#1\right]}
\newcommand{\paren}[1]{\left(#1\right)}
\newcommand{\curlyb}[1]{\left\{#1\right\}}
\newcommand{\real}{\mathbb{R}}
\newcommand{\var}{\mathbb{V}}

\newcommand{\bbE}{\mathbb{E}}
\newcommand{\expec}[1]{\bbE\brac{#1}}

\newcommand{\mat}[2]{\real^{#1\times#2}}
\newcommand{\partialop}[1]{\tfrac{\partial}{\partial#1}}

\newcommand{\scalar}{\beta}
\newcommand{\func}{\varphi}

\newcommand{\amat}{D}
\newcommand{\co}{\mathrm{co}} 

\newcommand{\bign}{n}
\newcommand{\litn}{k}
\newcommand{\litnt}{{\tilde{\litn}}}
\newcommand{\optlitnb}{\litn^*}
\newcommand{\litnb}{\optlitnb}
\newcommand{\optlitnw}{\litn_*}
\newcommand{\litnw}{\optlitnw}
\newcommand{\hoptlitnb}{{\litn}^*_1}\newcommand{\loptlitnb}{{\litn}^*_2}\newcommand{\hlitn}{\bar{\litn}}
\newcommand{\llitn}{\underline{\litn}}

\newcommand{\reallit}{\real^{\litn}}
\newcommand{\somen}{m}

\newcommand{\state}{\boldsymbol\omega}
\newcommand{\stateb}{\tilde{\state}}
\newcommand{\bias}{b}
\newcommand{\stater}{\omega}

\newcommand{\defstate}{\hat\stater}
\newcommand{\staterow}{\state^{R}}
\newcommand{\statenull}{\state^{N}}
\newcommand{\mvar}{V}
\newcommand{\mvarb}{\tilde\mvar}
\newcommand{\rmvar}{\sqrt{\mvar}}
\newcommand{\act}{\mathbf{a}}
\newcommand{\actr}{a}
\newcommand{\defact}{\hat\actr}

\newcommand{\vect}{x}
\newcommand{\vectb}{y}
\newcommand{\mess}{\mathbf{m}}
\newcommand{\messr}{m}

\newcommand{\slo}{S}
\newcommand{\rlo}{R}

\newcommand{\slob}{\tilde\slo}
\newcommand{\rlob}{\tilde\rlo}
\newcommand{\Loss}{\mathbb{L}}
\newcommand{\hslo}{\slo_1}\newcommand{\lslo}{\slo_2}

\newcommand{\sstrat}{C}
\newcommand{\sstratb}{\tilde\sstrat}

\newcommand{\rstrat}{D}
\newcommand{\rstratb}{\tilde\rstrat}
\newcommand{\rstrats}{\mat\bign\litn}
\newcommand{\Rstrat}{\mathcal{D}}
\newcommand{\rstratcol}{d}
\newcommand{\strats}{A}
\newcommand{\stratsb}{\tilde\strats}
\newcommand{\stratsset}{\mathcal{A}}
\newcommand{\epset}{\mathcal{E}}

\newcommand{\cost}{\gamma}
\newcommand{\hcost}{\cost_H}
\newcommand{\lcost}{\cost_L}
\newcommand{\mcost}{\cost_M}
\newcommand{\hhcost}{\cost_{HH}}
\newcommand{\llcost}{\cost_{LL}}

\newcommand{\someeigen}{\mu}
\newcommand{\seigen}{\sigma}
\newcommand{\reigen}{\rho}

\newcommand{\seig}{\seigen}
\newcommand{\reig}{\reigen}

\newcommand{\ur}{\mathbf{u}_R}
\newcommand{\us}{\mathbf{u}_S}
\newcommand{\goodur}{\bar{u}^\litn_R}
\newcommand{\badur}{\underline{u}^\litn_R}

\newcommand{\rpay}{r}
\newcommand{\spay}{s}
\newcommand{\lowur}{\underline\rpay}
\newcommand{\highur}{\bar\rpay}
\newcommand{\medur}{\tilde\rpay}
\newcommand{\lowurb}{\rpay_L}
\newcommand{\highurb}{\rpay_H}
\newcommand{\medurb}{\rpay_M}
\newcommand{\issur}{v}

\newcommand{\aissur}{\hat\issur}
\newcommand{\hissur}{\issur^1}\newcommand{\lissur}{\issur^2}

\newcommand{\View}{P}
\newcommand{\Views}{\mathcal{P}}
\newcommand{\view}{p}
\newcommand{\Viewb}{Q}
\newcommand{\Viewt}{\tilde\View}
\newcommand{\Viewbt}{\tilde\Viewb}
\newcommand{\viewb}{q}
\newcommand{\viewt}{\tilde\view}
\newcommand{\Viewh}{\hat\View}
\newcommand{\hView}{\View_1}\newcommand{\lView}{\View_2}\newcommand{\wViewb}{W}

\newcommand{\basis}{b}

\newcommand{\gain}{G}
\newcommand{\somevec}{x}
\newcommand{\somenhd}{\mathcal{N}}
\newcommand{\somevect}{\tilde\somevec}
\newcommand{\somevech}{\hat\somevec}
\newcommand{\somevecb}{y}
\newcommand{\vectspace}{\mathcal X}
\newcommand{\vectset}{X}
\newcommand{\indset}{J}
\newcommand{\indsetb}{K}
\newcommand{\indsetset}{\mathfrak{J}}
\newcommand{\rootslo}{\sqrt{\slo}}

\newcommand{\excor}{\rho}

\newcommand{\somemat}{M}
\newcommand{\somematt}{\tilde\somemat}
\newcommand{\somemath}{\hat\somemat}
\newcommand{\matset}{\mathcal{M}}
\newcommand{\hsomemat}{\somemat_1}\newcommand{\lsomemat}{\somemat_2}\newcommand{\perm}{\Pi}
\newcommand{\permh}{\hat\perm}
\newcommand{\permt}{\tilde\perm}

\newcommand{\somedmat}{\Lambda}

\newcommand{\someorth}{G}
\newcommand{\somebi}{B}
\newcommand{\somesk}{Z}

\newcommand{\ran}{\mathrm{ran } }
\newcommand{\rank}{\mathrm{rank}}
\newcommand{\matnn}{\mat\bign\bign}
\newcommand{\matnk}{\mat\bign\litn}
\newcommand{\matkn}{\mat\litn\bign}

\newcommand{\ip}[2]{\langle#1,#2\rangle}
\newcommand{\inv}[1]{{#1}^{-1}}
\newcommand{\cormat}[2]{\mathfrak{R}(#1,#2)}
\newcommand{\simmat}[2]{\mathfrak{R}^2(#1,#2)}
\newcommand{\codet}[2]{\mathfrak{R}^2(#1,#2)}
\newcommand{\codett}{\mathfrak{R}^2}
\newcommand{\elem}{e}
\newcommand{\normproj}{E}
\newcommand{\wtset}{\mathcal{K}}
\newcommand{\wt}{\kappa}
\newcommand{\wtb}{\lambda}

\newcommand{\twomat}[4]{\begin{bmatrix} 
	#1 & #2 \\
	#3 & #4 \\
	\end{bmatrix}}
\newcommand{\twovec}[2]{\begin{pmatrix} 
	#1\\
	#2\\
	\end{pmatrix}}
\newcommand{\threevec}[3]{\begin{pmatrix} 
		#1\\
		#2\\
		#3\\
\end{pmatrix}}
\newcommand{\threemat}[9]{\begin{bmatrix} 
		#1 & #2 & #3 \\
		#4 & #5 & #6 \\
		#7 & #8 & #9 \\
\end{bmatrix}}

\newcommand{\acomment}[1]{}

\begin{document}

\title{{\bf Communicating about Endogenous Issues}\footnote{We would like to thank Ricardo Alonso, Jeff Ely, Ben Golub, Daniel Gottlieb, Faruk Gul, Yoram Halevy, Alex Jakobsen, Emir Kamenica, Navin Kartik, Marina Halac, Michael Lipnowski, Alessandro Pavan, Wolfgang Pesendorfer, Phil Reny, Joel Sobel, and various  seminar and conference audiences for extremely helpful feedback.}}
\author{
\begin{minipage}{0.3\textwidth}\centering  
Elliot Lipnowski\footnote{\texttt{website: 
https://elliotlipnowski.com, email: elliot.lipnowski@yale.edu}} \\ \centering \it \small Yale University
\end{minipage}                  
\begin{minipage}{0.3\textwidth}\centering 
Doron Ravid\footnote{\texttt{website: https://doronravid.com, email: doronr@umich.edu}}  \\ \centering \it \small University of Michigan 
\end{minipage} 
}

\date{August 2026}

\maketitle

\begin{abstract}
    Limited attention forces organizations to decide not only how much to discuss, but also which issues merit discussion. We study strategic communication about a multidimensional decision when a receiver can respond only along a few endogenously chosen issues. Players agree on the ideal action but prioritize different errors. In equilibrium, communicated and omitted issues must be statistically unrelated and separable according to the sender's preferences. Thus, the sender’s priorities determine the agenda; the receiver’s do not. Synchronized priorities raise the receiver’s best equilibrium payoff but lower his worst, so he may prefer a less synchronized sender.
\end{abstract}

\onehalfspacing

\section{Introduction}

Organizations cannot discuss every aspect of every decision. A CEO cannot learn every local condition known to a division manager; a school board cannot deliberate over every detail known to a union representative; and a policymaker cannot absorb every implication of an expert's analysis. Limited attention therefore forces people to communicate using summaries. But a summary does more than leave information out. It also groups the underlying facts into a small number of issues.

This observation has deep roots in organization theory. \citet{march1958organizations} use the term \emph{uncertainty absorption} for communication in which an inference drawn from evidence is transmitted in place of the evidence itself. An information-processing view emphasizes the need to match communication capacity to organizational uncertainty and interdependence \citep{tushman1978information}. Research on strategic issue diagnosis studies how data and stimuli are translated into focused issues that organize attention \citep{dutton1983strategic}, while the interpretation-systems and attention-based views emphasize how organizations give information meaning and direct decision makers toward particular ``issues and answers'' \citep{daft1984interpretation,ocasio1997attention}. 

Together, these theories suggest that organizational summaries solve two problems at once: they economize on attention by limiting how much is discussed, and they organize attention by determining what the discussion is about. This distinction is especially pertinent for multidimensional decision problems, which rarely come with a unique list of issues. The same facts can be summarized using different aggregates, comparisons, or categories, each of which makes some aspects of the decision transparent and others obscure. This paper studies which divisions of the decision---that is, which issues---will be transmitted in an organization when communication is strategic. 

For concreteness, consider a firm owner deciding how much money to allocate to the development of each of two products. The firm's manager knows this ideal pair, but the owner has the time and energy to understand only one figure. That figure could for instance describe the ideal budget for the first product, the ideal total budget for developing the two products, or the ideal difference between the two product budgets. Each report uses the same communication capacity. Nevertheless, each frames the owner's decision in a different way. 

What figure should we expect the manager to communicate to the owner? Our analysis shows the answer depends on the manager's perspective. For a heuristic illustration, suppose the manager cares only about assigning the division the right total budget, while the owner cares mainly about the first product. Intuitively, if the owner were to use the manager's report to set the budget of the first product alone, the manager would inflate that report when the second product needs more money. In effect, she uses the first product's budget to partially correct the total. By contrast, if the owner uses the report only to set the division's total budget, the manager should have no reason to slant it: errors in either direction move the total away from its ideal level. Thus, it is the manager's perspective on the decision that determines which issues can be credibly communicated.

Our paper formalizes this logic within a model of strategic communication with bounded rationality. A receiver consults an informed sender about a multidimensional action. Both players would like the action to match an unknown state perfectly. Although they perfectly agree on the ideal action in each state, they have different priorities, and hence evaluate imperfectly tailored actions differently. For example, both the owner and the manager agree on the ideal budget for each product, but one cares more about getting a particular product's budget right while the other cares more about setting the correct the total. Finally, the receiver is boundedly rational: he must choose a low-dimensional interpretation of the sender's message. 

Formally, the state and action are vectors $\state,\act\in\real^\bign$. We normalize the state's mean to zero and assume it has finite variance. The sender and receiver have quadratic losses,
\[
 (\act-\state)^\top \slo(\act-\state)
 \qquad\text{and}\qquad
 (\act-\state)^\top \rlo(\act-\state),
\]
where the positive-definite matrices $\slo$ and $\rlo$ describe their respective priorities. The sender observes $\state$ and sends a message $\mess\in\real^\litn$, where $\litn<\bign$. The receiver interprets messages through a full-rank linear decision rule $\rstrat$, taking action $\act=\rstrat\mess$. The $\litn$ columns of $\rstrat$ are the issues the receiver understands the sender to be discussing. So $\litn$ measures the capacity of the communication channel, while the range of $\rstrat$ describes its substantive orientation. Thus, the receiver can choose how to interpret the sender, but he must settle on a $\litn$-dimensional way of doing so.

We show equilibrium communication is shaped by the sender's \emph{perspective}. A perspective is a basis that decomposes the decision into statistically unrelated issues. A sender perspective is one for which she finds issues also unrelated in a payoff sense: in these coordinates, her loss is a weighted sum of squared errors, with one weight for each issue and no cross-issue spillovers. Every selection of $\litn$ issues from a sender perspective can be communicated in equilibrium. In such an equilibrium, the sender reports the realized values of those issues, whereas the receiver takes her report on them at face value and takes the prior-optimal action on the remaining issues. Conversely, every equilibrium outcome can be represented in this way. Hence, equilibrium communication is shaped by the sender's priorities and the underlying state distribution, but not by the receiver's priorities.

This result follows from the players' optimality conditions. If the receiver's action space mixes dimensions to which the sender assigns different weights, the sender wants to tilt her report toward the dimensions she considers more important. Truthful communication is sustainable only when the communicated and uncommunicated components are separable from the sender's point of view. Meanwhile, the receiver faces  a signal-extraction problem with the sender's equilibrium report as the signal. The receiver's incentives therefore require the communicated and residual components to be statistically unrelated, while the sender's incentives determine which decompositions are credible.

Let us return to the owner and manager, and consider the implications of our result when the two products' ideal budgets are independent and identically distributed. Suppose the manager evaluates errors in the total budget more heavily than errors in the difference between the product budgets, and cares about these two issues separably. Her perspective then consists of exactly these two issues: the total budget and the difference. If the owner can contemplate only one issue, some equilibrium exists in which the manager communicates the ideal total, and another exists in which she communicates the ideal difference. In contrast, no equilibrium has the manager simply reporting the first product's ideal budget, even though the owner cares more about the first product than the second.

Would the owner be better off with a manager whose perspective is more similar to his own? We show the answer depends on whether the manager expects his favorite or least favorite equilibrium. For a heuristic demonstration, suppose the owner places weights $98$ and $2$ on matching the ideal budget of the first and second product, respectively. Because either issue in the manager's perspective mixes the two products symmetrically, the owner obtains a payoff of $50$ in either equilibrium. But what if the owner instead consults a manager who shares his perspective? Such a manager could communicate either product's ideal budget, giving the owner a payoff of $98$ in one equilibrium and $2$ in the other. 

We show the above paragraph demonstrates a more general phenomena: a sender who is better for the receiver than another under best-equilibrium selection tends to be worse under adversarial selection. The broad intuition is that improving the receiver's best-case payoff requires the sender's perspective to better separate the receiver's high- and low-priority concerns. Such a separation stretches the set of equilibrium payoffs: it makes especially useful conversations possible while also making it possible to focus on minutiae. 

We formalize this idea by defining a synchronization order between sender perspectives, and establishing for every given receiver perspective a three way equivalence between: (1) one sender being more synchronized with that perspective than another sender; (2) one sender giving a higher best-equilibrium payoff than another for all receivers with that perspective; and (3) one sender giving a lower worst-equilibrium payoff than another to all receivers with that perspective. Thus, while seeing eye-to-eye is better when the players talk about high-priority issues, having different worldviews provides insurance against coordinating on low-priority topics.

Finally, we allow the receiver to choose how much attention to devote to the sender. Under favorable equilibrium selection, the receiver learns about his favorite $\litn$ sender issues. Attention therefore has decreasing marginal returns, and an interior capacity can be optimal. Under adversarial selection, he expects to hear about his least valuable sender issues. Thus, attention instead has increasing marginal returns, so the receiver either listens fully or not at all. Moreover, the average value of an issue is invariant to the sender's perspective in this case. Consequently, the receiver's attention choice under adversarial selection does not depend on whom he consults.

\paragraph{Related literature.}
We build on the cheap-talk framework of \citet{Crawford1982} and \citet{Green2007two}. In multidimensional cheap talk, equilibrium often determines the directions along which information is conveyed. Multiple experts gain influence over endogenous directions of local agreement in \citet{Battaglini2002}, while \citet{levy2007limits} show how cross-issue spillovers can obstruct that logic for a single sender. \citet{chakraborty2007comparative} study comparative statements across ordered issues, and \citet{Chakraborty2010} obtain endogenous directions that balance a sender's exaggeration incentives across dimensions. Most directly related, \citet{hancart2025simple} study common-interest communication in which a two-dimensional state is summarized by a one-dimensional score. With a Gaussian prior, their equilibrium linear scores are projections onto the eigenvectors of the covariance-adjusted common loss matrix---equivalently given their bivariate setting, the ex ante best and worst linear scores. We instead characterize all rank-$k$ equilibrium action rules under differing priorities: the sender's loss matrix shapes the equilibrium set, whereas the receiver's does not.

Our dimensional restriction also differs from other technological constraints on communication. With common interest, \citet{jager2011voronoi} and \citet{rodriguez2026language} study finite languages in multidimensional environments. With sender commitment and possible misalignment, \citet{leTreustTomala2019capacity} bound information-theoretic capacity, while \citet{aybasTurkel2026coarse} fix the number of messages. Work on limited verification instead specifies which primitive facts can be checked \citep{glazer2004optimal,carroll2019strategic,antic2024selected,ballGao2026checking}. In our model, each message coordinate can carry arbitrarily fine information; the restriction is on the dimension of the receiver's continuous response, and equilibrium determines which linear combinations become issues.

Our low-dimensional assumption on receiver behavior connects as well to rational inattention, decision simplification, and organizational design. Most closely within this strand, \citet{hu2020multidimensional} studies communication of correlated multidimensional information to a rationally inattentive agent when the players agree on the appropriate state-contingent actions but differ in the relative importance they assign to different dimensions. There, the principal controls the information provided to the agent; here, communication is cheap talk, and reporting incentives determine which linear combinations of the state can constitute separate issues. More generally, information can be suppressed to redirect attention \citep{lipnowski2020attention}, and persuasion changes when receivers may disregard costly information \citep{BloedelSegal2021Persuading}. Scarce attention can focus coordination on a few tasks \citep{dessein2016rational}, while sparse attention and mental budgeting simplify individual choice \citep{gabaix2014sparsity,kHoszegi2020choice}. Related organizational work studies communication networks, knowledge hierarchies, task representations, technical languages, codes, and modular organization \citep{bolton1994firm,garicano2000hierarchies,cronin2007representational,cremer2007language,wernerfelt2004organizational,kocak2022separated,matouschek2025organizing}. Our players already share labels and coordinates; the friction is that only some groupings can be treated as separate issues without introducing misreporting incentives.

Finally, our paper relates to work on how conventional or natural meanings discipline strategic communication
\citep{farrell1993meaning,rabin1994model,reny2025natural,blume2025meaning}.
Whereas this literature begins with some pre-existing language and studies how its meanings constrain its strategic use and interpretation, our message coordinates have no exogenous substantive labels.
The receiver's linear response gives each coordinate an action interpretation, whereas equilibrium determines the state summary statistic it conveys; thus, the issues that message coordinates come to represent are themselves equilibrium objects.

\section{Endogenous issues
}\label{sec: model}
A receiver (R, he) consults with a sender (S, she) about what action $\actr \in \mathbb{R}^{\bign}$ he should take. R's optimal action is summarized by a stochastic state $\state \in \mathbb{R}^{\bign}$. S agrees it is best that R's action matches $\state$ as closely as possible. However, the two players disagree about how they should trade off the errors in the different dimensions, and so may differ in the way they rank suboptimal actions. Specifically, S and R suffer a loss that is quadratic in the difference between $\actr$ and $\state$, with each player's loss having different weights---in other words, the respective payoffs of S and R are
\[ \us=\state^\top \slo\state- (\state-\act)^\top \slo(\state-\act)
\quad \text{ and }\quad
\ur=\state^\top \rlo\state- (\state-\act)^\top \rlo(\state-\act),\]
where $\slo,\rlo\in\matnn$ are positive definite, and $\act$ is the (endogenous) random variable representing R's action. Note the above payoffs include the strategically irrelevant terms $\state^\top \slo \state$ and $\state^\top \rlo \state$ (normalizing the payoff from action zero to zero) to simplify payoff expressions.

The distribution of $\state$ has finite mean and variance. Essentially without further loss of generality, we take $\state$ to have a full-rank covariance matrix $\mvar$ (otherwise, one can restrict attention to a lower dimensional subspace), and zero mean (see \autoref{sec: disc}).

We view each action $\actr$ as representing a myriad of activities that can be decomposed and combined in various ways formalized via vector-space operations. Thus, each action can be separate into $\bign$ different components: 
given a basis $\{\basis_1,\ldots,\basis_\bign\}$, one can uniquely decompose every action $\actr \in \mathbb{R}^\bign$ into a sum $\actr = \sum_{i} \scalar_{i} \basis_{i}$ for some collection of scalars $\scalar_i \in \mathbb{R}$. We interpret a basis as a way of dividing R's action into separate issues, where $\actr = \sum_{i} \scalar_{i} \basis_{i}$ means R takes action $\scalar_i$ on issue $\basis_i$. 

The key novelty of our model is the assumption that players can only communicate about a small number $\litn \in \{0,\ldots,\bign\}$ of endogenously chosen issues.  To formalize this restriction, we use the strategic form of a standard cheap talk game \citep{Crawford1982,Green2007two} as a baseline. Thus, S sees $\state$ and chooses a message $\messr$ to send to R, who then takes an action. However, we modify this strategic form in two ways. First, S's message consists of $\litn$ real numbers, meaning $\messr \in \real^\litn$. And second---and more importantly---we restrict R's strategy space: instead of allowing R to use all maps from messages to actions, we restrict R to using a map that is \emph{linear} and \emph{full-rank}. Consequently, R's strategy is describable via a full-rank matrix, $\rstrat \in \rstrats$. Hence, if S sends message $\messr$ and R uses strategy $\rstrat$, then R's action equals $\actr = \rstrat \messr$. 

R's strategy space expresses a type of bounded rationality. In particular, R must interpret S's messages as a recommendation as to what to do on $\litn$ issues. However, R can choose which $\litn$ issues he interprets S as actually discussing. Thus, by choosing a full-rank matrix $\rstrat=[\rstratcol_1, \ldots, \rstratcol_\litn]$, R is deciding to interpret every S message $\messr$ as telling him to take action $\messr_i$ on issue $\rstratcol_i$, resulting in the aggregate action $\actr = \rstrat \messr = \sum_{i=1}^{\litn} \messr_i\rstratcol_i$.

Hence, a strategy profile is a pair, $(\sstrat,\rstrat)$, where $\sstrat:\real^\bign \rightarrow \real^\litn$ describes S's chosen message as a function of the state, and $\rstrat \in \rstrats$ has full rank. Let $\Rstrat$ be the set of all full-rank $\bign\times\litn$ real-matrices. A profile $(\sstrat,\rstrat)$ is a pure-strategy \textbf{Bayes-Nash equilibrium (BNE)} if and only if two conditions hold:\footnote{One can show the restriction to pure strategies is without loss.}
\begin{enumerate}    \item S's message is optimal given $\state$: 
    \[ 
    \sstrat(\stater) \in \argmin_{\messr \in \reallit}\brac{(\rstrat\messr - \stater)^\top \slo (\rstrat\messr - \stater)} \quad\forall\ \stater\in\real^\bign, \label{eq: S-BR} \tag{S-BR}
    \]
    \item R's strategy is optimal:\footnote{Whenever S's incentive constraint is satisfied, it follows from our analysis that this R expected value is well defined for any strategy he chooses.} 
    \[
    \rstrat \in \argmin_{\rstrat \in \Rstrat}\bbE\curlyb{\brac{\rstrat\sstrat(\state) - \state}^\top \rlo \brac{\rstrat\sstrat(\state) - \state}}. \label{eq: R-BR} \tag{R-BR}
    \]
\end{enumerate}
In what follows, we use the term \textbf{equilibrium} as a short-hand for pure-strategy BNE.

Our model shares some limitations with the standard cheap talk game. In particular, R must choose how to interpret S's messages before they are realized. Consequently, S has no way of alerting R that a certain issue is more urgent than the others. In other words, S cannot send R a message that changes the issues he expects S to talk about. Instead, the issues S talks about are determined endogenously through R's expectations and the players' equilibrium behaviors.

One limitation of the usual cheap talk game that we avoid is the possibility of babbling. The reason is that we assume R uses the entire capacity of his communication channel---i.e., that R's linear map must have full rank. Under this assumption, when $\litn\geq 1$, R's action is always responsive to S's message and some information is always communicated in equilibrium.\footnote{We discuss the full-rank assumption in greater depth in \autoref{sec: disc}.}

Thus, the only way to get babbling in our model is to have $\litn=0$. In this case, the players cannot do anything but babble, and R always chooses $0$. At the opposite extreme, when $\litn=\bign$, equilibrium communication is perfect: every R strategy has full range, and so S can always induce R to match the state perfectly. We therefore restrict the capacity $\litn$ to be strictly between $0$ and $\bign$ in what follows, except for \autoref{sec: investment}, where we endogenize~$\litn$. 

Before proceeding, we note there are two other game forms whose solutions coincide with the equilibria we analyze in the sequel. The first equivalent model is a standard cheap talk model with the usual strategy space for R, but where we restrict the set of allowable equilibria. Specifically, R is free to choose his action after seeing S's message in $\real^\litn$. However, we consider only pure-strategy perfect Bayesian equilibria in which R's strategy happens to be a full-rank linear map. For such equilibria to exist and coincide with the equilibria of our model, one must impose the additional assumption that $\state$ has an elliptical distribution (e.g., Gaussian).\footnote{One may be interested in requiring also that S's equilibrium strategy is a full-rank linear map. This property turns out to hold automatically given the  structure of R's strategy.}

The second equivalent game form is one in which both agents choose linear maps. Here, S and R move simultaneously without any knowledge of the state's realization, with S choosing a full-rank matrix $\sstrat\in\matkn$ and R choosing a full-rank matrix $\rstrat\in\matnk$, resulting in action $\act:=\rstrat\sstrat\state$. The solution concept is pure-strategy Nash equilibrium.

In both of the above formulations, the characterization of equilibrium action rules (and hence payoffs) is exactly the same as in our main model. We therefore welcome a reader to interpret our model in any of these three equivalent ways.

\section{An illustrative example}\label{sec: example}

This section illustrates the key tensions in our model using a simple example. To give the example some color, suppose R is a school board making decisions on two dimensions: the school's curriculum and teacher staffing. The head of the teacher's union, S, knows the optimal decision on both of these dimensions. To simplify the analysis, suppose the pair of optimal decision on  these two dimensions is Gaussian, with each dimension having a mean of $0$ and a variance of $1$. More importantly, the two dimensions are positively correlated, with correlation $\excor \in (0,1)$. Note that we do not require $\state$ to be Gaussian in the general model.

We assume R cares more about choosing the correct curriculum, while S cares more about teacher staffing. More concretely, 
\[
\slo =\twomat{\epsilon}{0}{0}{100}\quad \text{and} \quad \rlo = \twomat{98}002,
\]
for some $\epsilon>0$. That is, S and R's payoffs are respectively given by
\[
\us = \state^\top\slo\state - \epsilon (\act_1 - \state_1)^2 - 100(\act_2-\state_2)^2
\quad \text{and} \quad 
\ur = \state^\top \rlo \state -98(\act_1-\state_1)^2-2(\act_2-\state_2)^2,
\]
where dimension 1 represents the choice of curriculum (e.g., the relative emphasis on sciences vs. the arts), and dimension 2 represents changes in teacher staffing. We caricature this situation by considering the case where $\epsilon\approx 0$, and describe the equilibria in the limit as $\epsilon$ vanishes.\footnote{More precisely, we will characterize the maps from state to action that arise as the pointwise limit of equilibria as $\epsilon\searrow0$.}

Due to various attention and time constraints, R can consult with S about only one issue, that is $\litn=1$. Our main interest is in understanding which issue S ends up communicating in equilibrium.

If S could have her way, she would use R's attention to ensure he takes the optimal staffing decision. One can create such an equilibrium by having S tell R the value of that dimension, $\mess = \state_2$. For an explanation, note that having seen such a message, R is clearly best choosing the correct teacher staffing, $\act_2 = \mess =\state_2$. Moreover, the correlation between the two dimensions enables R to make inferences about the optimal curriculum.  In particular, the example's Gaussian assumption means $\bbE[\state_1|\mess]=\excor\state_2$. Therefore, by setting $\rstrat = [\excor, 1]^\top$, R can ensure his optimal action matches the conditional expectation of $\state$ given $\mess$---which is optimal. R's payoff is $100-98(1-\excor^2)$, while S gets her first-best payoff of $100$. 

Unlike the optimal teacher staffing, S cannot credibly tell R the optimal curriculum---that is, $\mess = \state_1$ cannot occur in equilibrium. To see why, observe that a symmetric argument to the one above implies that R's best reply to $\mess = \state_1$ is $\rstrat = [1, \excor]^\top$. Hence, $\act_1 = \mess$, and $\act_2 = \excor\mess$. Given this R strategy, S's best reply is not to send $\mess =
\state_1$. Rather, she prefers to send $\mess = \state_2/\excor$: R would then take $\act_2 = \state_2$, giving S her first-best payoff. This deviation illustrates our environment's key communication friction: S cannot credibly communicate about issues that mix topics she prioritizes unevenly.  

To have S communicate about the optimal curriculum, S must clean out the information it provides about teacher staffing. More concretely, suppose S sent the message $\mess = \state_1 - \excor \state_2$. In this case, R's conditional expectations are given by $\bbE[\state_1|\mess] = \mess$, and $\bbE[\state_2|\mess] = 0$, and so setting $\rstrat = [1 , 0]^\top$ is optimal. Moreover, S cannot benefit from deviating: her utility is zero regardless of her message. R's payoff in this equilibrium is $98(1-\excor^2)$.

These two equilibria demonstrate a more general phenomenon: S can communicate an issue to R in equilibrium if and only if that issue combines only dimensions to which S assigns the same value. Whenever an issue mixes topics that S values unevenly, S benefits from slanting communication in favor of the topics that matter more to her. Conversely, S never gains from miscommunicating an issue that influences R's actions on dimensions that concern S equally: such miscommunication can only lead R to take a wrong action on said issues, thereby reducing S's payoff. Moreover, if R expects S to honestly communicate an issue, then R should always take S's report at face value, making accurately communicating said issue self-enforcing.   

The above intuition also suggests that R's priorities play no role in shaping equilibrium communication. Broadly speaking, the reason is that, once R knows S's communication is honest, R faces a simple signal extraction problem, the solution to which is the best linear estimator of the state given S's report. Since this estimator does not depend on R's preferences, neither does the equilibrium set. One can see this independence in the above example: the equilibria we derived remain even if we changed R's loss matrix, since that matrix has no impact on his best replies.\footnote{While the argument we presented in the example relied on the example's Gaussian assumption, we show this assumption is not necessary in general.}

In the sequel, we show that equilibrium communication is shaped by S's preferences and the state's variance-covariance matrix. In particular, equilibria are shaped by an $\slo$-perspective, which is a decomposition of R's actions into issues that S views as unrelated. The next section defines this object formally and uses it to characterize the model's equilibrium set. 

\section{What can you talk about?}\label{sec: eqm char}

This section presents our equilibrium characterization. Our characterization shows  the equilibrium is shaped by S's perspective---an object which we will define shortly. Intuitively speaking, an $\slo$-perspective provides a way of framing R's decision as $\bign$ separate choices that S views as unrelated. This perspective depends only on the state's variance-covariance matrix $\mvar$ and S's loss matrix $\slo$. As a consequence, we show R's priorities are irrelevant in equilibrium---changing R's loss matrix has no impact on the equilibrium set.

We begin by defining a perspective. Given an $\bign\times\bign$ matrix $\View = [\view_1,\ldots,\view_\litn]$ and an index $i$, we let $\state^{\View}_{i}:=\view_i \cdot \state$ and $\act^{\View}_{i}:=\view_i \cdot \act$ respectively be the (non-normalized) projection of $\state$ and $\act$ onto the vector $\view_i$. We say that $\View$ is a \textbf{perspective} if $\state^{\View}_{i}$ and $\state^{\View}_{j}$ are uncorrelated for all $i\neq j$, and if $\state^{\View}_i$ has unit variance for every $i$. Intuitively, a perspective decomposes R's decision problem into $\bign$ statistically unrelated choices. Geometrically, $\View$ is a perspective if $(\sqrt{\mvar}\view_i)_{i =1}^{\bign}$ is an \emph{orthonormal basis} for $\real^\bign$; that is, the vectors $(\sqrt{\mvar}\view_i)_{i =1}^{\bign}$ have unit length and are pairwise orthogonal (hence linearly independent). 

A perspective $\View$ is an \textbf{$\slo$-perspective} if some weights $\seig_1,\ldots,\seig_\bign \in \real_{++}$ exist such that every $\stater,\actr\in\real^\bign$ have
\begin{equation*}
    (\actr - \stater)^{\top}\slo (\actr - \stater) = \sum_{i=1}^{\bign} \seig_i (\actr^{\View}_i - \stater^{\View}_i)^2.
\end{equation*}
Thus, an $\slo$-perspective separates R's action into $\bign$ decisions that S views as unrelated to each other---both statistically and in the sense of payoff separability. The weights $\seig_1,\ldots \seig_\bign$ designate how much S cares about each issue. The order and directionality of the issues is irrelevant: we can turn one $\slo$-perspective into a different one by permuting columns and multiplying some of them by $-1$. We refer to two $\slo$-perspectives that are related in this way as \textbf{issue equivalent}. If  all $\slo$-perspectives are issue equivalent, we say the $\slo$-perspective is \textbf{unique}.
Geometrically, an $\slo$-perspective is an eigendecomposition of $\sqrt{\mvar}\slo\sqrt{\mvar}$. As such, an $\slo$-perspective always exists, and is unique whenever all $\bign$ weights are distinct---a generic property.\footnote{See \autoref{def: genericity} for a formal definition of genericity, and \autoref{lem: generically distinct weights} for a proof of genericity.} Moreover, these weights are unique up to the same permutation as the corresponding columns.

We now demonstrate these definitions in the context of another example.

\begin{example}\label{ex: CEO-manager}
    Suppose R is a CEO of a company, and S is a division manager. S's division has two products. R's decision is how much money to dedicate to each product (net of some status quo), where $\act_i$ is the amount R allocates to product $i$. S knows the optimal allocation $(\state_1,\state_2)$, which is stochastic and has a unit variance-covariance matrix $\mvar = I$. Both S and R would like to give the right amount of resources to each product, but differ in their priorities, as expressed by the different loss matrices:
    \[
    \slo =\twomat{3}{1}{1}{3} ,\ \ \ \ \ \  \rlo = \twomat{98}002.
    \]
    The matrix representation makes R's priorities quite clear: R cares mostly about product 1 (e.g., because it is R's pet project). Indeed, writing R's utility more explicitly gives
    \[
    \ur = \state^{\top}\rlo\state- 98(\act_1 - \state_1)^2 -2(\act_2 - \state_2)^2.
    \]
    By contrast, S's priorities are harder to discern from the matrix representation. Expanding the matrix term is not very helpful either:
    \[
    \us = \state^{\top}\slo \state - 3 (\act_1 - \state_1)^2 - 3 (\act_2 - \state_2)^2 -2 (\act_1 - \state_1) (\act_2 - \state_2).
    \]
    One can get a clearer picture of S's priorities, however, by finding her perspective. Using the above definition reveals that S has a unique perspective,
    \[
    \View = \tfrac{1}{\sqrt{2}} \left[\twovec{1}{1} \twovec{1}{-1} \right], \quad \text{and}\quad (\seig_1,\seig_2)=(4,2).
    \]
    Accordingly, one can write S's loss as
    \[
    \us = \state' \slo \state - 2\left[(\act_1 + \act_2) - (\state_1 + \state_2)\right]^2 - \left[(\act_1 - \act_2) - (\state_1 - \state_2)\right]^2. 
    \]
    These expressions reveal S's true priorities: S's cares more about the total budget for her division than she cares about the way that budget is allocated.  \qed
\end{example}

To get a sense of why S's perspective is central for equilibrium, consider S's problem \eqref{eq: S-BR} for a fixed state realization $\stater \in \real^\bign$. This problem has a strictly concave and differentiable objective, and so its solution is characterized by a first order condition. This condition describes a tangency between two objects. The first object is the range of R's strategy $\rstrat$, which is a $\litn$-dimensional linear subspace of $\mathbb{R}^\bign$. This range represents S's ``budget'' set: S can induce any action in this range by choosing an appropriate message. The second object is an S indifference curve. These curves are ellipsoids centered at $\stater$. S's first-order condition says that inducing R to take $\actr$ is optimal for S only if $\actr$ is the tangency point between its indifference curve and the range of $\rstrat$ (see \autoref{fig:S tangency}).

\begin{figure}
    \centering
    \includegraphics[width=0.5\linewidth]{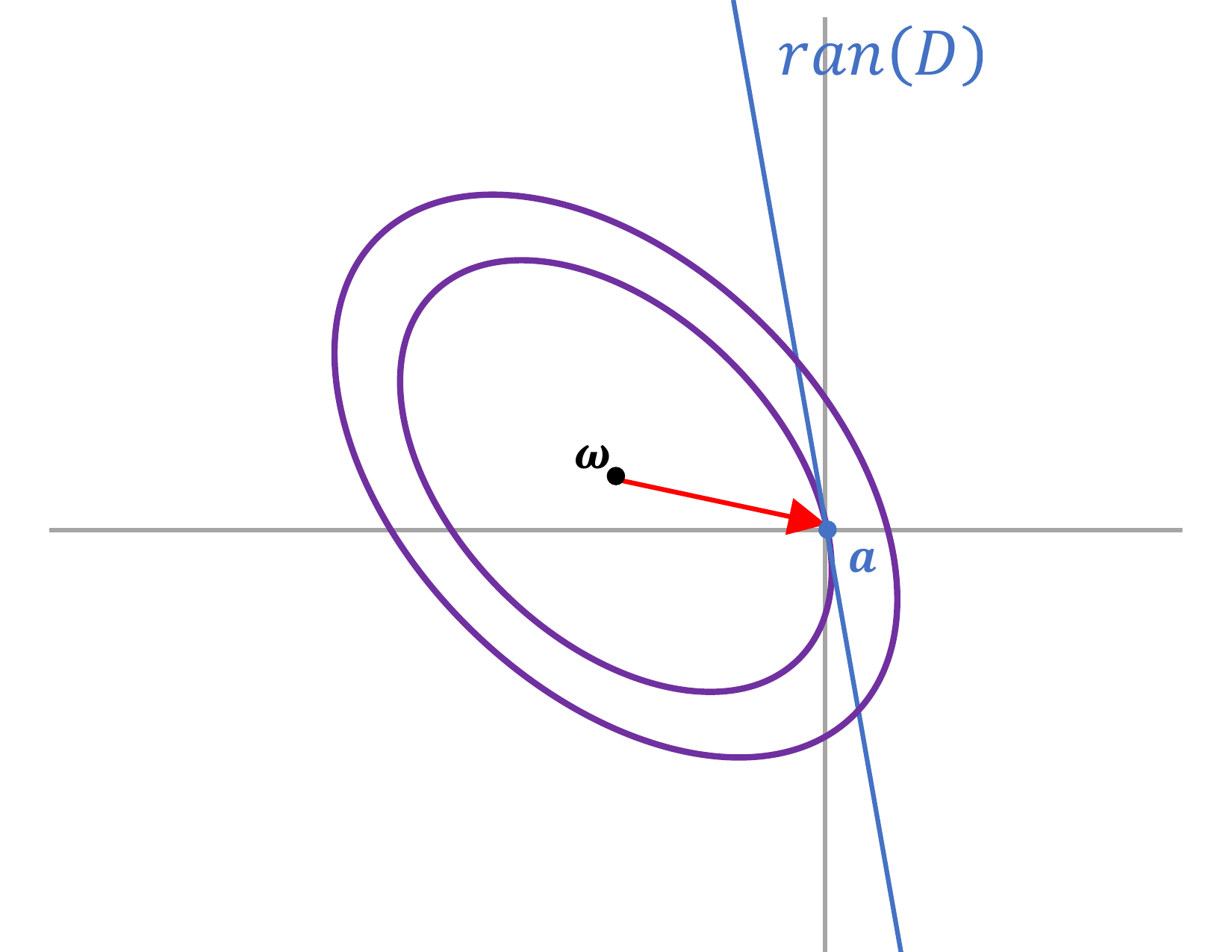}
    \caption{Geometric illustration of the tangency conditions characterizing S-optimality when $\bign=2$ and $\litn=1$.}
    \label{fig:S tangency}
\end{figure}

S's perspective characterizes an orthogonality between the linear space tangent to S's indifference curve at an action $\actr$ and the indifference curve's radius vector that goes to $\actr$---i.e., the line connecting $\actr$ to the realized state $\stater$. When $\mvar$ is the identity, one can describe this orthogonality using the standard vector dot product. In particular, the radius vector to $\actr$ is perpendicular to S's indifference curve's tangent subspace at $\actr$ if and only if that subspace admits a subset of S's perspective as a basis. \autoref{fig:radius-tangent} demonstrates this tangency in the context of \autoref{ex: CEO-manager}.

\begin{figure}
    \centering
    \includegraphics[width=0.5\linewidth]{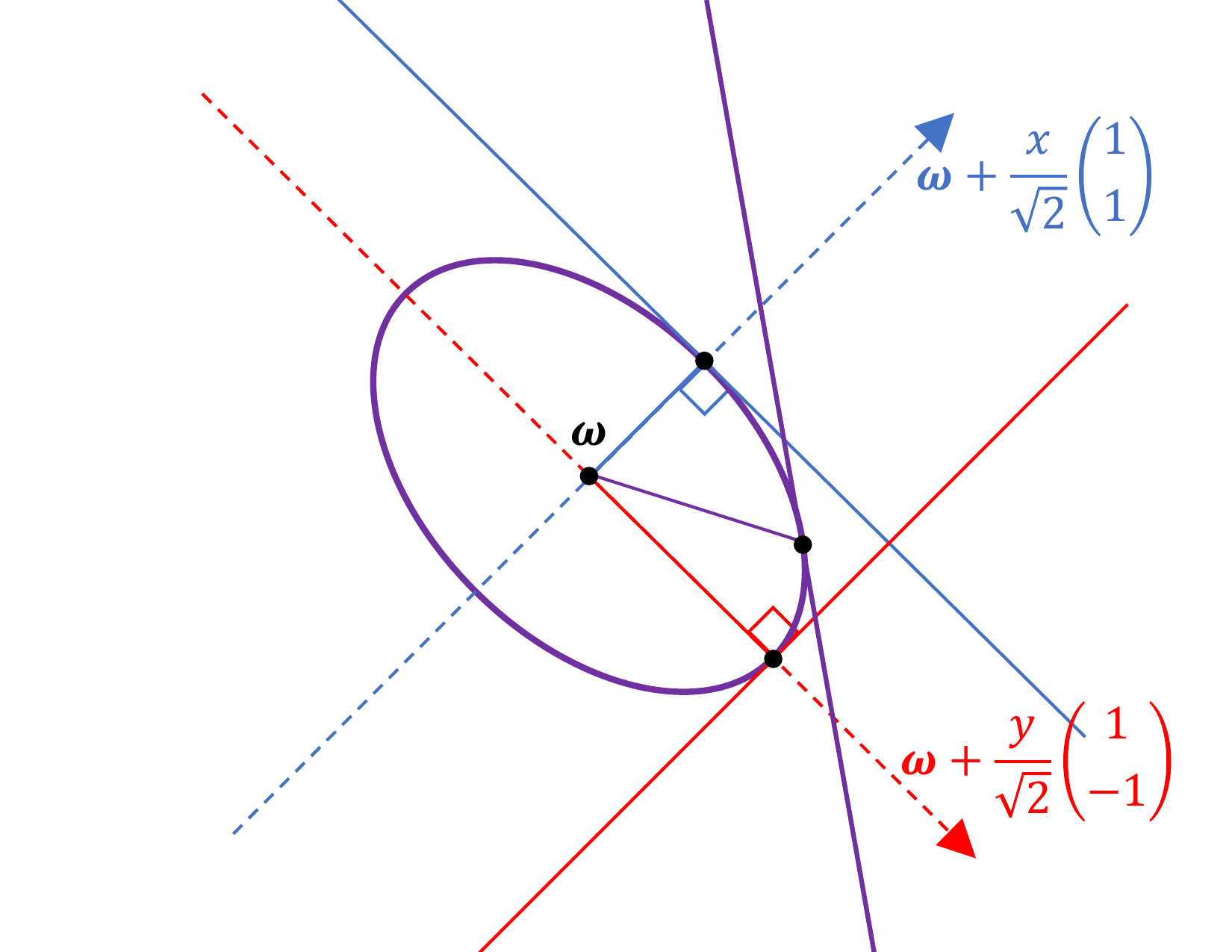}
    \caption{Illustration of the relationship between S's perspective and orthogonality in \autoref{ex: CEO-manager}. Dashed lines represent shifts in the state in the directions of S's perspective. These directions coincide with the direction of the radius vector to a point on the indifference curve exactly when the vector is perpendicular to the tangent at the point.}
    \label{fig:radius-tangent}
\end{figure}

Intuitively, the radius vector to $\actr$ represents the unique non-compensable direction: no local movement of $\actr$ would compensate S for a local push of $\actr$ away from $\stater$ in the radius's direction. By contrast, the tangent hyperplane represents all the directions of indifference---that is, all directions to which one can perturb $\actr$ without meaningfully changing S's losses. The previously mentioned  orthogonality means that the indifference directions and the non-compensable direction are separable from each other both in a statistical sense and from the point of view of S's preferences. 

As our next result shows, S's perspective shapes equilibrium communication. To present the result formally, say that a strategy profile $(\sstrat,\rstrat)$ \textbf{communicates an $\slo$-perspective} if an $\slo$-perspective $\View$ exists such that $\mess = \sstrat(\state)$ and $\act = \rstrat \mess$ have\footnote{Every perspective admits some strategy profile that communicates it---see \autoref{sec: proofs} for an explicit description.}
\begin{equation*}
\begin{split}
\mess_{i} & = \state^{\View}_{i} \hspace{0.16cm} \text{ for }i=1,\ldots,\litn, \\
\act^{\View}_{i} & = \begin{cases}
	\mess_{i} & \text{if }i=1,\ldots,\litn,\\ 
	0 & \text{if }i=\litn+1,\ldots,\bign.
\end{cases}
\end{split}
\end{equation*}
In words, a profile communicates an $\slo$-perspective $\View$ if S tells R what to do on the first $\litn$ issues of $\View$; R responds by doing as he is told on those $\litn$ issues, and taking the (prior-optimal) zero action on the remaining issues. We use the convention that communication is restricted to the first $\litn$ issues of $\View$. 
Hence, two profiles that communicate issue-equivalent $\slo$-perspectives typically transmit different information. 

As a demonstration, consider \autoref{ex: CEO-manager}. In this example, there are essentially only two strategy profiles that communicate S's perspective. We illustrate these profiles in \autoref{fig: both equilibria}.  
In the first profile, $(\sstrat,\rstrat)$, the sender tells the receiver the size of $\state$'s projection onto $\tfrac{1}{\sqrt{2}}(1,1)^\top$. The second profile $(\sstratb,\rstratb)$ has S reporting the size of the projection of $\state$ onto $\tfrac{1}{\sqrt{2}}(1,-1)^\top$. These correspond to the respective S-strategies $\sstrat(\state) = \tfrac{1}{\sqrt{2}}(\state_1 + \state_2)$ and $\sstratb(\state) = \tfrac{1}{\sqrt{2}}(\state_1 - \state_2)$. In both profiles, R responds by matching the size of the projection to the appropriate vector, and setting the projection onto the other vector to zero. More precisely, $\rstrat_1\mess = \rstrat_2\mess= \tfrac{1}{\sqrt2}\mess$, and $\rstratb_1\mess = - \rstratb_2\mess= \tfrac{1}{\sqrt2}\mess.$ Thus, the first profile is equivalent to the division manager communicating the appropriate total budget for the division and the CEO splitting that budget evenly between the two products. The second profile is equivalent to the division manager telling the CEO the correct difference in the ideal budget for the division's two products, and the CEO splitting that difference evenly.

\begin{figure}[htbp]
    \centering

    \begin{subfigure}[t]{0.48\textwidth}
        \centering
        \includegraphics[width=\linewidth]{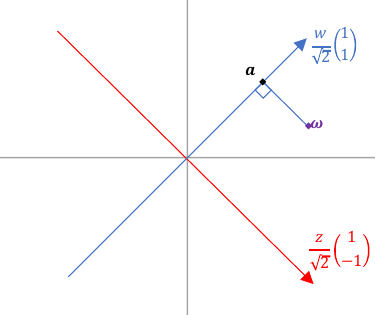}
        \caption{S dictates the total budget, and R splits it equally between the two products}
        \label{fig:left}
    \end{subfigure}
    \hfill
    \begin{subfigure}[t]{0.48\textwidth}
        \centering
        \includegraphics[width=\linewidth]{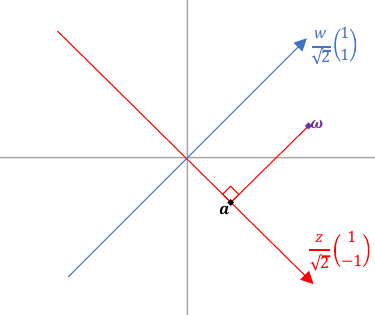}
        \caption{S dictates the relative budget, and R uses the ex-ante correct total budget.}
        \label{fig:right}
    \end{subfigure}

    \caption{The two equilibrium state-contingent action choices for \autoref{ex: CEO-manager}.}
    \label{fig: both equilibria}
\end{figure}

\autoref{prop: eqm char} says the game's equilibria are characterized by the profiles that communicate S's perspective. To state this result, say that two strategy profiles are \textbf{equivalent} if they induce the same joint distribution for $\state$ and $\act$.

\begin{proposition}\label{prop: eqm char} \ 
\begin{enumerate}[(i)]
    \item Any strategy profile that communicates an $\slo$-perspective is an equilibrium.
    \item Any equilibrium is equivalent to one that communicates an $\slo$-perspective.
\end{enumerate}
\end{proposition}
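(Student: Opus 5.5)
The plan is to change coordinates so that the state has identity covariance. Then both players' best replies become projections onto $\ran \rstrat$: an $\slo$-weighted projection for S and a variance-weighted one for R. An equilibrium therefore requires the two projections to coincide, which happens exactly when the transformed subspace is spanned by eigenvectors of $\sqrt{\mvar}\slo\sqrt{\mvar}$. By the geometric remark after the definition, such eigenvectors are the $\slo$-perspectives.

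\textbf{Setup.} Put $y:=\inv{\sqrt{\mvar}}\state$, so that $\expec{yy^\top}=I$, and let $\somemat:=\sqrt{\mvar}\slo\sqrt{\mvar}$. For actions, write $a':=\inv{\sqrt{\mvar}}\actr$. Then S's loss becomes $(a'-y)^\top\somemat(a'-y)$. A perspective $\View$ corresponds to an orthonormal basis $u_i=\sqrt{\mvar}\view_i$, with $\state^\View_i=u_i\cdot y$ and $\actr^\View_i=u_i\cdot a'$. $\View$ is an $\slo$-perspective iff $\somemat=\sum_i\seig_i u_iu_i^\top$, i.e.\ iff the $u_i$ are eigenvectors of $\somemat$.

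\textbf{Step 1 (S's best reply).} Fix a full-rank $\rstrat$. S's objective is strictly convex in $\messr$ because $\rstrat^\top\slo\rstrat$ is positive definite. Hence \eqref{eq: S-BR} has the unique solution $\sstrat(\stater)=(\rstrat^\top\slo\rstrat)^{-1}\rstrat^\top\slo\stater$. This is linear and full rank, and the induced action is $\act=\strats\state$, where $\strats$ is the $\slo$-orthogonal projector onto $\ran\rstrat$. In $y$-coordinates, $a'=\somemath y$ with $\somemath$ the $\somemat$-orthogonal projector onto $\vectspace:=\inv{\sqrt{\mvar}}\ran\rstrat$.

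\textbf{Step 2 (R's best reply).} Fix S's strategy $\sstrat$, which Step 1 shows is linear, $\mess=\sstrat\state$, with $\sstrat V\sstrat^\top$ positive definite. R's objective is a strictly convex quadratic in $\rstrat$ over all of $\matnk$. Any minimizer over the open set $\Rstrat$ must therefore satisfy the first-order condition $\rlo\rstrat\,\sstrat\mvar\sstrat^\top=\rlo\,\mvar\sstrat^\top$. Since $\rlo$ is invertible, this gives $\rstrat=\mvar\sstrat^\top(\sstrat\mvar\sstrat^\top)^{-1}$, the best linear predictor, which does not involve $\rlo$. Conversely, if this matrix is full rank, it is the global minimizer and so satisfies \eqref{eq: R-BR}. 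In $y$-coordinates the induced action is $a'=\somematt y$, where $\somematt$ is the Euclidean orthogonal projector onto $\ran(\sqrt{\mvar}\sstrat^\top)$.

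\textbf{Step 3 (combining, part (ii)).} In an equilibrium the action rule must satisfy both descriptions. Hence $\somemath=\somematt$ is a projector onto $\vectspace$ that is simultaneously Euclidean-orthogonal and $\somemat$-orthogonal. Its kernel is then both $\vectspace^\perp$ and $\inv{\somemat}\vectspace^\perp$, so $\vectspace^\perp$, and therefore $\vectspace$, is $\somemat$-invariant. This is the step I expect to need the most care, in particular in pinning down which kernel characterization applies to each projector.

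Since $\somemat$ is symmetric and $\vectspace$ is invariant, $\somemat$ restricted to $\vectspace$ and to $\vectspace^\perp$ can be diagonalized separately. This yields an orthonormal eigenbasis $u_1,\dots,u_\bign$ with $u_1,\dots,u_\litn$ spanning $\vectspace$. Set $\view_i:=\inv{\sqrt{\mvar}}u_i$; this defines an $\slo$-perspective. The equilibrium action then satisfies $\actr^\View_i=u_i\cdot\somematt y$, which equals $\state^\View_i$ for $i\le\litn$ and $0$ otherwise. This is the joint law produced by the profile with $\sstrat(\state)=(\view_1\cdot\state,\dots,\view_\litn\cdot\state)$ and $\rstrat=\mvar[\view_1,\dots,\view_\litn]$. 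That profile communicates $\View$, since $\view_i^\top\mvar\view_j=\delta_{ij}$ gives $\actr^\View_i=\mess_i$ for $i\le\litn$ and $0$ for $i>\litn$.

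\textbf{Part (i).} Conversely, take a profile communicating an $\slo$-perspective $\View$. Its action rule is $a'=\somematt y$ with $\somematt$ the orthogonal projector onto $\vectspace=\mathrm{span}(u_1,\dots,u_\litn)$, which is $\somemat$-invariant.

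For S: invariance makes $\somematt$ also the $\somemat$-orthogonal projector, so by Step 1 each message induces S's optimal action. Because $\rstrat$ has full rank, the optimal message is unique, and S's optimality holds pointwise in $\stater$. The identification $\rstrat=\mvar[\view_1,\dots,\view_\litn]$ follows from the definition of communicating, together with the full rank of $[\view_1,\dots,\view_\litn]$.

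For R: the message has covariance $I_\litn$, and $\rstrat=\mvar\sstrat^\top$ is exactly the best-linear-predictor formula of Step 2. This matrix has full rank, so Step 2's converse gives \eqref{eq: R-BR}.
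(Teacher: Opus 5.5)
Your proposal is correct and follows essentially the same route as the paper. S's unique best reply makes $\rstrat\sstrat$ the $\slo$-orthogonal projection onto $\ran\rstrat$ (\autoref{lem: S best response}), and R's best reply, the $\rlo$-independent best linear predictor, makes it the $\inv{\mvar}$-orthogonal projection (\autoref{lem: R best response}). A rank-$\litn$ matrix with both properties is exactly $\inv{(\View^\top)}\normproj\View^\top$ for an $\slo$-perspective $\View$ built from eigenvectors of $\rmvar\slo\rmvar$ (\autoref{lem: perspective representation of A matrices}); your invariant-subspace argument in whitened coordinates is just a rephrasing of the paper's commuting-matrices/simultaneous-diagonalization step. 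The step you flagged is fine: the kernel of the $\somemat$-orthogonal projector onto $\vectspace$ is $\{z:\ \somemat z\in\vectspace^\perp\}=\inv{\somemat}\vectspace^\perp$, so its equality with $\vectspace^\perp$ gives $\somemat$-invariance.
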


The proposition highlights how equilibrium communication is shaped by two features of the environment: S's priorities and the state's stochastic structure. Intuitively, S's incentives requires S to view the issues on which R acts  as separate from the issues about which he does not act---otherwise, S would distort her messages so as to better tailor R's actions on issues she assigns a higher weight. Meanwhile, R chooses his issue interpretation to make S's communication maximally instrumental; hence, the issues on which R acts are uncorrelated with the issues he leaves idle.

For a more detailed overview of the proposition's logic, consider again S's problem given an arbitrary state realization $\stater\in \real^\bign$. As explained earlier, the problem's solution is characterized by tangency between S's indifference curve and the range of R's strategy. This tangency is defined by a linear equation, and so implies S's best response is a linear function of the state's realization $\stater$. Geometrically, this linearity is most transparent when $\litn=\bign-1$. In that case, one can decompose any change in the state's realization $\stater$ into two linear operations. The first operation is a translation in a direction parallel to the range of $\rstrat$. Such a translation results in an identical translation of the resulting tangency point. The second operation is a shift toward or away from the range of $\rstrat$ in the direction of the radius vector of S's indifference curve---a shift that leaves the resulting tangency point unchanged.

Consider now R's best reply. For intuition, suppose the states are uncorrelated and have unit variance, $\mvar=I$. In equilibrium, R expects S's messaging strategy $\sstrat$ to be a linear map. One can therefore decompose the state into two parts: its projection onto the map's null space $\statenull$, and its projection $\staterow$ onto the null's orthogonal complement, also known as the map's row space. Moreover, S's strategy is invertible when restricted to this row space. It turns out that R's best reply inverts this map, matching $\staterow$ exactly and estimating $\statenull$ using its best linear predictor given $\staterow$. Intuitively, R's action is uncorrelated with $\statenull$ no matter what strategy he chooses, and hence perfectly matching $\staterow$ comes at no opportunity cost.\footnote{When $\mvar\neq I$, an analogous decomposition applies, using orthogonality with respect to a prior-normalized coordinate system.}

Geometrically, R optimality can be described via an orthogonality condition between the range of R's strategy and the line connecting each state realization to its associated action. When $\mvar=I$, we can express this orthogonality via the usual dot product: the line connecting the state to the action must be perpendicular to the range of R's strategy. In fact, in this case, since the best linear predictor of $\statenull$ given $\staterow$ is zero, the range of R's strategy and the row space of S's message strategy must coincide. \autoref{fig:null-rowspace} illustrates this relationship.

\begin{figure}
    \centering
    \includegraphics[width=0.5\linewidth]{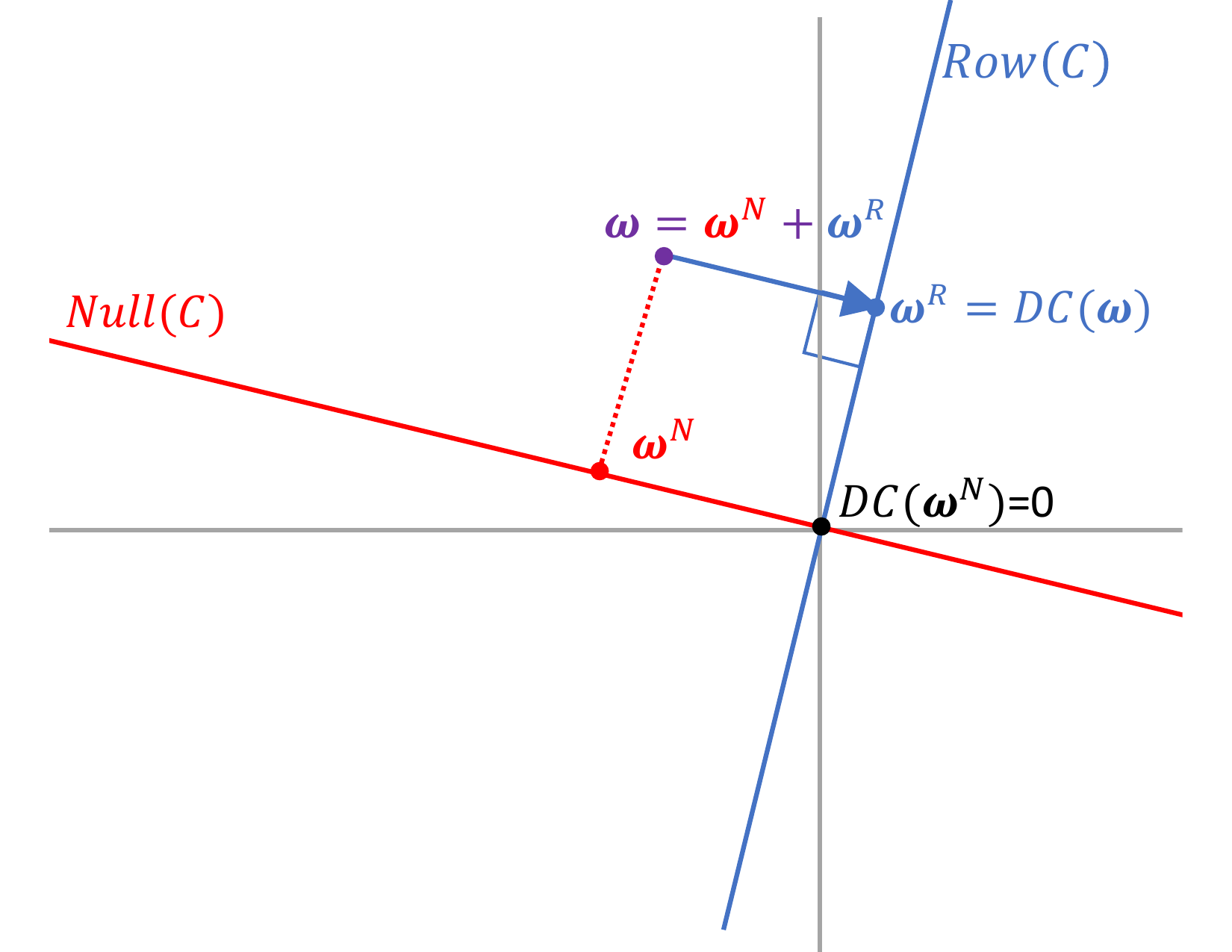}
    \caption{The orthogonality condition characterizing R optimality for $\bign=2$ and $\litn=1$, specialized to the case of $\mvar = I$. The figure illustrates the decomposition of $\state$ into  an element $\statenull$ of the null space of $\sstrat$ (denoted by $Null(\sstrat)$) and an element $\staterow$ of the row space of $\sstrat$ (denoted by $Row(\sstrat)$). In this setting, R optimality is equivalent to requiring the line connecting  $\state$ to $\staterow = \rstrat\sstrat(\state)$ to be perpendicular to the range of $\rstrat$. Consequently, the row space of $\sstrat$ and the range of $\rstrat$ coincide.}
    \label{fig:null-rowspace}
\end{figure}

The proposition follows from combining the above-mentioned geometric facts. By R incentives, the line connecting a state realization $\stater$ to the resulting action $\actr$ must be orthogonal to R's action space. This space, in turn, needs to be tangent to the S-indifference curve going through $\actr$. All that remains is to recall the fact mentioned above: a radius vector of an S-indifference curve is  orthogonal to a tangent linear subspace if and only if that space admits a subset of an $\slo$-perspective as a basis. Replicating equilibrium communication using this basis delivers an equivalent profile that communicates S's perspective. 

\section{What do you want to talk about?}\label{sec: eqm payoffs}
By \autoref{prop: eqm char}, communication must decompose R's decision according to an $\slo$-perspective. Subject to this decomposition, however, S can communicate to R about any collection of issues. This section asks: which issues would each player want to talk about?

For a more formal exposition of this section's question, note that an $\slo$-perspective $\View$ specifies two things. First, the columns of $\View$ specify a particular way of framing R's decision; that is, decomposing it into separate issues. Second, the order of the columns specifies which issues are discussed in a $\View$-equilibrium. Our interest in this section is in the second component; that is, which of the issues in $\View$ would R like S to communicate, and which would S prefer to communicate to R?

To answer our question, we calculate the players' payoffs from a given equilibrium. To present the calculation's outcome, let $\Viewb$ be an $\rlo$-perspective with associated weights $\reig_1,\ldots,\reig_\bign$, and $\View$ be an $\slo$-perspective with associated weights $\seig_1,\ldots,\seig_\bign$. For $i,j=1,\ldots\bign$, define the \textbf{coefficient of determination} (also known as the ``$r^2$'') between $\state^{\View}_i$ and $\state^{\Viewb}_j$ as
\[
\codet{\state^{\View}_i}{\state^{\Viewb}_j}:=\frac{\left(\bbE[\state^{\View}_i\state^{\Viewb}_j]\right)^2}{\var(\state^{\View}_i)\var(\state^{\Viewb}_j)} = \left(\bbE[\state^{\View}_i\state^{\Viewb}_j]\right)^2.
\]
In words, $\codet{\state^{\View}_i}{\state^{\Viewb}_j}$ is the percentage of $\state^{\Viewb}_j$'s variance one can capture if one were to use the best linear estimate of $\state^{\Viewb}_j$ given $\state^{\View}_i$ (or vice versa). We interpret $\codet{\state^{\View}_i}{\state^{\Viewb}_j}$ as a measure of informativeness: the higher $\codet{\state^{\View}_i}{\state^{\Viewb}_j}$ is, the more information issue $\view_i$ provides about issue $\viewb_j$ (and vice versa).

\begin{proposition}\label{prop: eqlbm payoffs}
    Fix an $\slo$-perspective $\View$ with weights $\seig_1,\ldots,\seig_\bign$, and an $\rlo$-perspective $\Viewb$ with weights $\reig_1,\ldots,\reig_\bign$. S and R's payoffs in any equilibrium that communicates $\View$ are, respectively,
\begin{equation*}\label{eq: eqlbm payoffs}
        \bbE[\us] = \sum_{i=1}^{\litn}\seig_i \quad \text{ and }\quad \bbE[\ur]=\sum_{i=1}^\litn\sum_{j=1}^\bign \reig_j \codet{\state^{\View}_i}{\state^{\Viewb}_{j}}.
    \end{equation*}
\end{proposition}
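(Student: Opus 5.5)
The plan is to compute both expected payoffs directly from the defining properties of a profile that communicates $\View$, working in the coordinates $\state^{\View}$ and $\act^{\View}$. Let $\View$ be the $\slo$-perspective communicated in the equilibrium. By definition, $\act^{\View}_i=\state^{\View}_i$ for $i\le\litn$ and $\act^{\View}_i=0$ for $i>\litn$. The vectors $\sqrt{\mvar}\view_i$ form an orthonormal basis, so $\View^\top\mvar\View=I$ and $\View$ is invertible. It follows that the pair $(\state,\act)$ is determined by $\state^{\View}=\View^\top\state$ and $\act^{\View}=\View^\top\act$. Moreover, $\bbE[\state^{\View}_i\state^{\View}_j]=\delta_{ij}$, using the zero mean and the perspective property.

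For S, I expand $\us=\state^\top\slo\state-(\state-\act)^\top\slo(\state-\act)$ using the $\slo$-perspective identity, which holds for arbitrary $\stater,\actr$. Applied at $\actr=0$, it gives $\state^\top\slo\state=\sum_i\seig_i(\state^{\View}_i)^2$. Applied at the equilibrium action, the error on issues $i\le\litn$ vanishes and the error on $i>\litn$ equals $\state^{\View}_i$. Hence $\us=\sum_{i\le\litn}\seig_i(\state^{\View}_i)^2$, and taking expectations with unit variances yields $\sum_{i=1}^\litn\seig_i$.

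For R, I use the same identity with $\Viewb$ and the weights $\reig_j$. This gives $\ur=\sum_j\reig_j\brac{(\state^{\Viewb}_j)^2-(\state^{\Viewb}_j-\act^{\Viewb}_j)^2}=\sum_j\reig_j\brac{2\state^{\Viewb}_j\act^{\Viewb}_j-(\act^{\Viewb}_j)^2}$. The key step is to express $\act^{\Viewb}_j=\viewb_j\cdot\act$ in terms of $\state^{\View}$. Set $u_i:=\sqrt{\mvar}\view_i$, which form an orthonormal basis. Since $\act^{\View}_i=\view_i\cdot\act=u_i\cdot\inv{\sqrt{\mvar}}\act$, we get $\inv{\sqrt{\mvar}}\act=\sum_{i\le\litn}\state^{\View}_iu_i$, and so $\act=\sum_{i\le\litn}\state^{\View}_i\,\mvar\view_i$. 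Consequently $\act^{\Viewb}_j=\sum_{i\le\litn}\state^{\View}_i\,c_{ij}$ with $c_{ij}:=\viewb_j^\top\mvar\view_i=\bbE[\state^{\View}_i\state^{\Viewb}_j]$.

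Taking expectations with $\bbE[\state^{\View}_i\state^{\View}_{i'}]=\delta_{ii'}$ gives $\bbE[(\act^{\Viewb}_j)^2]=\sum_{i\le\litn}c_{ij}^2$ and $\bbE[\state^{\Viewb}_j\act^{\Viewb}_j]=\sum_{i\le\litn}c_{ij}^2$. Therefore $\bbE[\ur]=\sum_j\reig_j\sum_{i\le\litn}c_{ij}^2$, which is the claimed formula because $c_{ij}^2=\codet{\state^{\View}_i}{\state^{\Viewb}_j}$ under unit variances.

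The only real obstacle is the change-of-coordinates step: recovering $\act$ from $\act^{\View}$ through the $\mvar$-orthonormality of $\View$, so that the cross-moments $c_{ij}$ appear. Everything else is routine expansion of the quadratic forms. The finiteness of second moments guarantees that all the expectations involved are well defined.
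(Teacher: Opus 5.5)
Your proof is correct and is essentially the paper's argument carried out in scalar coordinates. The paper writes $\bbE[\ur]=\Tr(\rlo\strats\mvar)$ with $\strats=\rstrat\sstrat$, dropping the quadratic term because $\strats$ is a $\inv{\mvar}$-projection (the matrix form of your observation $\bbE[\state^{\Viewb}_j\act^{\Viewb}_j]=\bbE[(\act^{\Viewb}_j)^2]$), expands $\rlo=\sum_j\reig_j\Viewb\elem_j\elem_j^\top\Viewb^\top$, and shows each $\elem_j^\top\Viewb^\top\strats\mvar\Viewb\elem_j=\sum_{i\le\litn}c_{ij}^2$, which is your per-issue computation; for S it reuses that formula with $\Viewb=\View$, whereas your pathwise identity $\us=\sum_{i\le\litn}\seig_i(\state^{\View}_i)^2$ gets there a bit more directly.
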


The result highlights the way each player evaluates the different issues. S's valuations are straightforward: the benefit to S from communicating issue $\view_i$ to R is $\seig_i$. Thus, the weight associated with issue in $\View$ gives that issue's value to S. 

R's evaluation of different issues is more nuanced. If S communicates to R the issue $\view_i$, R obtains a value of
\begin{equation}\label{eq: R issue payoffs}
    \issur_{i} := \sum_{j=1}^\bign \reig_j \codet{\state^{\View}_{i}}{\state^{\Viewb}_{j}}.
\end{equation}
In other words, R prefers S to communicate issues that are provide more information about issues R views as more important. This preference is easiest to see when $\View=\Viewb$: in this case, R's benefit from learning issue $\view_j=\viewb_j$ is just $\reig_j$, and so R would like to order $\View$ according to his weights.  

Next, we illustrate \autoref{prop: eqlbm payoffs} in the context of \autoref{ex: CEO-manager}.
\addtocounter{example}{-1}
\begin{example}[continued] 
Recall this example has $\mvar=I$ and loss matrices
    \[
    \slo =\twomat{3}{1}{1}{3} \text{ and } \rlo = \twomat{98}002.
    \]
    The corresponding perspectives are
    \[
    \View = \tfrac{1}{\sqrt{2}} \left[\twovec{1}{1} \twovec{1}{-1} \right], (\seig_1,\seig_2)=(4,2) \quad \text{and} \quad \Viewb=\left[\twovec{1}{0} \twovec{0}{1} \right], (\reig_1,\reig_2)=(98,2).
    \]
    By \autoref{prop: eqm char}, this example admits at most two equilibria: one which communicates the issue $\view_1=(1,1)^{\top}$, and the other that communicates $\view_2=(-1,1)^{\top}$. If S communicates the first issue,  she can guarantees that R will allocate the correct total budget for her division, resulting in a payoff of $4$. This payoff is larger than the payoff of 2 that S would get if she were to communicate the second issue---i.e., the optimal difference in budget allocation between the two products. 

    It turns out that R does not care which of these two equilibria is played. To see this identity, note the coefficient of determination between the issues in $\View$ and in $\Viewb$ are $\codet{\state^{\View}_i}{\state^{\Viewb}_j}=0.5$ for any $i$ and $j$. Therefore, R's payoff is equal to $0.5(2)+0.5(98)=50$ in both equilibria. 

        What if instead R interacted with a sender with loss matrix $\rlo$? In this case, S and R would have the same perspective, and so every equilibrium is equivalent to S communicating either the issue $\viewb_1 = (1,0)^\top$ or the issue $\viewb_2 = (0,1)^\top$. S's and R's payoffs from each of these equilibria are the same: they both get a payoff of $\reig_1=98$ from the first equilibrium, and $\reig_2=2$ from the second.     \qed
\end{example}

\section{Who do you want to talk to?}\label{sec: which sender}

This section explores the implications of our equilibrium characterization for the question: who would you like to talk to? The answer is clearly relevant when one can actually choose one's conversation partner. But the answer can also be relevant for organizational design; by changing the design of an organization, one can impact the priorities of its members. For an illustration, consider the CEO in \autoref{ex: CEO-manager}. In addition to replacing the manager, the CEO can change the division manager's priorities (i.e., $\slo$) in many ways: he can change what products are under the manager's authority, restructure the manager's contract, change the way the division interacts with other parts of the organization, etc.

The results of this section highlight that while S does not care about $\rlo$, R's payoff vary with $\slo$ in a  
rich way. As previously noted, R evaluates a given $\slo$ according to the correlation between S's perspective and his own. Equilibrium selection turns out to play a critical role: whether R wants more or less correlation depends on whether equilibrium selection is favorable or adversarial. Consequently, the way R ranks senders depends on which equilibrium R expects to be played. In particular, we show there is a sense in which an S that is good under R-favorable equilibrium selection is often bad under R-adversarial equilibrium selection. A rough intuition is that an S whose perspective is ``synchronized'' with R's has issues that are highly correlated with R's most important issues and other issues that are highly correlated with R's least important issues.

The asymmetry between S and R is driven by the forces that shape equilibrium communication. In particular, R's preferences are irrelevant in equilibrium; R simply attempts to match the state as best as he can given his information. Therefore:

\begin{corollary}\label{cor: S cares less about R}
    S's equilibrium payoff set does not depend on $\rlo$. 
\end{corollary}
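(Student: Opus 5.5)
The plan is to combine the two earlier propositions, which pin down the equilibrium set and the sender's payoffs without reference to $\rlo$. First I will show that S's equilibrium payoff set equals
\[
\Big\{\textstyle\sum_{i=1}^{\litn}\seig_i \;:\; \View \text{ is an } \slo\text{-perspective with weights } \seig_1,\ldots,\seig_\bign\Big\}.
\]
Then I will observe that this set depends only on $\slo$ and $\mvar$.

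The first step is the inclusion $\supseteq$. Fix any $\slo$-perspective $\View$. As noted in the footnote to the definition, some strategy profile communicates $\View$. By \autoref{prop: eqm char}(i), that profile is an equilibrium. By \autoref{prop: eqlbm payoffs}, S's expected payoff in it is $\sum_{i=1}^{\litn}\seig_i$.

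The second step is the inclusion $\subseteq$. Take any equilibrium. By \autoref{prop: eqm char}(ii), it is equivalent to a profile that communicates some $\slo$-perspective $\View$, and that profile is itself an equilibrium by part (i). Equivalence means the two profiles induce the same joint distribution of $(\state,\act)$. S's payoff $\us$ is a function of $(\state,\act)$ alone, so S's expected payoff is the same in both. By \autoref{prop: eqlbm payoffs}, that payoff is $\sum_{i=1}^{\litn}\seig_i$.

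The final step is to check that the displayed set does not involve $\rlo$. An $\slo$-perspective and its weights are defined purely through $\mvar$ (the statistical unrelatedness and unit-variance conditions) and $\slo$ (the payoff-separability identity). Equivalently, they are an eigendecomposition of $\sqrt{\mvar}\slo\sqrt{\mvar}$. Neither object refers to $\rlo$, so the set of achievable values $\sum_{i\le\litn}\seig_i$, ranging over all orderings of the eigenvalues, is the same for every positive-definite $\rlo$.

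The only point needing care is the equivalence step. One must confirm that equality of S's expected payoff follows from equality of the joint law of $(\state,\act)$, and that the expectation is finite. The payoff is quadratic in $(\state,\act)$, and finite variance together with linearity of equilibrium actions gives integrability, so this is routine. Apart from that, the corollary is a direct consequence of the earlier propositions.
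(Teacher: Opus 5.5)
Your proposal is correct and is essentially the paper's own (implicit) argument: Proposition~\ref{prop: eqm char} pins down equilibrium outcomes through $\slo$-perspectives, which depend only on $\slo$ and $\mvar$, and Proposition~\ref{prop: eqlbm payoffs} gives S the payoff $\sum_{i\le\litn}\seig_i$, so $\rlo$ never enters. The paper in fact has a slightly stronger fact behind it: by Lemmas~\ref{lem: S best response} and~\ref{lem: R best response}, the equilibrium profiles themselves are independent of $\rlo$, because R's best-response condition reduces to $\rstrat\sstrat$ being an $\inv{\mvar}$-projection; your integrability remark is also unnecessary, since equal joint laws of $(\state,\act)$ already give equal distributions of $\us$.
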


Whereas S's equilibrium payoff set is independent of R's loss matrix, R's equilibrium payoff set clearly depends on $\slo$---specifically, S's perspective. However, as previously mentioned, R's ranking of different $\slo$ depends on the equilibrium selection criterion. Indeed, we have already observed an instance of this dependency in \autoref{ex: CEO-manager}: the CEO would like to replace the division manager with a manager who shares his priorities only if he expects his favorite equilibrium to be played. By contrast, if the CEO expects his worst equilibrium to be played, he prefers the original manager.

Our next result outlines the way in which R's best and worst equilibrium payoffs vary with $\slo$. To state this result, given an $\slo$, take $\goodur(\slo,\rlo)$ and $\badur(\slo,\rlo)$ respectively to be R's highest and lowest equilibrium payoffs (which our analysis shows exist) when his loss matrix is $\rlo$ and S's matrix is $\slo$. Letting $(\reig_i)_{i=1}^\bign$ be R's weights associated with some $\rlo$-perspective, take $\lowurb$ to be the sum of the $\litn$ lowest weights, $\highurb$ be the sum of the $\litn$ highest weights, and $\medurb$ be $\litn$ times the average weight.

\begin{proposition}\label{prop: receiver payoff set}
For a pair $(\highur,\lowur)\in\real^2$, the following are equivalent. 
\begin{enumerate}[(i)]
    \item Some $\slo$ has $\goodur(\slo,\rlo)=\highur$ and $\badur(\slo,\rlo)=\lowur$.\label{prop: receiver payoff set, payoff part}
    \item\label{prop: receiver payoff set, defining inequalities} $\lowurb\leq\lowur\leq\medurb\leq\highur\leq\highurb$ and 
$\medurb=(1-\scalar)\lowur+\scalar\highur$ for some $\scalar\in\co\curlyb{\tfrac\litn\bign,\ 1-\tfrac\litn\bign}$.
\end{enumerate}
\end{proposition}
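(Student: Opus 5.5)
The plan is to reduce the statement to a question about vectors majorized by R's weights, and then use the Schur--Horn theorem. Fix an $\rlo$-perspective $\Viewb$ with weights $\reig_1,\ldots,\reig_\bign$, and write $T:=\sum_j\reig_j$, so that $\medurb=\litn T/\bign$.

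\emph{Step 1: reduce to the issue values.} For any $\slo$ with perspective $\View$, \autoref{prop: eqlbm payoffs} gives R's payoff in a $\View$-communicating equilibrium as $\sum_{i\le\litn}\issur_i$, where $\issur_i=\sum_j\reig_j\codet{\state^{\View}_i}{\state^{\Viewb}_j}$. Permuting the columns of $\View$ gives another $\slo$-perspective, so by \autoref{prop: eqm char} every $\litn$-subset of issues is communicated in some equilibrium, and these are all equilibria up to equivalence. Hence $\goodur(\slo,\rlo)$ is the sum of the $\litn$ largest $\issur_i$, and $\badur(\slo,\rlo)$ is the sum of the $\litn$ smallest.

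Next I identify which vectors $\issur$ can arise. Put $u_i:=\sqrt{\mvar}\view_i$ and $w_j:=\sqrt{\mvar}\viewb_j$; both are orthonormal bases. Then $\codet{\state^{\View}_i}{\state^{\Viewb}_j}=(u_i\cdot w_j)^2$. It follows that $\issur_i=u_i^\top H u_i$ with $H:=\sum_j\reig_j w_jw_j^\top=\sqrt{\mvar}\rlo\sqrt{\mvar}$, whose eigenvalues are the $\reig_j$. So $\issur$ is the diagonal of $H$ written in the basis $(u_i)$.

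Conversely, every orthonormal basis $(u_i)$ arises from some $\slo$. Take $\slo:=\sqrt{\mvar}^{-1}U\,\mathrm{diag}(\seig)\,U^\top\sqrt{\mvar}^{-1}$ with distinct positive $\seig$ and $U=[u_1,\ldots,u_\bign]$. Then $\slo$ is positive definite, and its unique perspective is $\View=\sqrt{\mvar}^{-1}U$.

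By the real Schur--Horn theorem, the set of attainable $\issur$ is exactly the set of vectors majorized by $\reig$.

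\emph{Step 2: necessity.} Let $\issur$ be majorized by $\reig$, sorted in decreasing order, and set $h:=\sum_{i\le\litn}\issur_i$ and $\ell:=\sum_{i>\bign-\litn}\issur_i$. Majorization gives $h\le\highurb$ and $T-\ell\le T-\lowurb$, i.e.\ $\ell\ge\lowurb$. Averages give $\ell/\litn\le T/\bign\le h/\litn$.

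For the convex-combination condition, first take $\litn\le\bign/2$. The top and bottom blocks are then disjoint, and each of the $\bign-2\litn$ middle entries lies in $[\ell/\litn,h/\litn]$. So the middle sum is $T-h-\ell\in[(\bign-2\litn)\ell/\litn,\,(\bign-2\litn)h/\litn]$. Multiplying by $\litn/\bign$ yields
\[
\tfrac{\litn}{\bign}h+\tfrac{\bign-\litn}{\bign}\ell\;\le\;\medurb\;\le\;\tfrac{\bign-\litn}{\bign}h+\tfrac{\litn}{\bign}\ell ,
\]
which is precisely $\medurb=(1-\scalar)\ell+\scalar h$ for some $\scalar\in\co\{\tfrac\litn\bign,1-\tfrac\litn\bign\}$.

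For $\litn>\bign/2$, I apply the same argument to the complementary $\bign-\litn$ blocks. The top-$\litn$ sum equals $T$ minus the bottom-$(\bign-\litn)$ sum, and similarly for the bottom-$\litn$ sum, which produces the same interval with the roles of $\litn/\bign$ and $1-\litn/\bign$ exchanged.

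\emph{Step 3: sufficiency.} Given $(h,\ell)$ satisfying (ii), take $\litn\le\bign/2$ and build a three-level vector: $\litn$ entries equal to $h/\litn$, $\litn$ entries equal to $\ell/\litn$, and $\bign-2\litn$ entries equal to $m:=(T-h-\ell)/(\bign-2\litn)$. Reversing the computation in Step 2, the $\scalar$-condition is exactly $m\in[\ell/\litn,h/\litn]$. This ensures the vector is correctly sorted and that its top- and bottom-$\litn$ sums are $h$ and $\ell$. When $\bign=2\litn$ the condition forces $h+\ell=T$ and there is no middle block.

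It remains to check that this vector is majorized by $\reig$. Its partial-sum function $j\mapsto\sum_{i\le j}\issur_i$ is piecewise linear, with kinks only at $j=\litn$ and $j=\bign-\litn$, and it agrees with the concave partial-sum function of the sorted $\reig$ at $j=0$ and $j=\bign$. A concave function dominates a piecewise-linear one as soon as it does so at the kinks. At the kinks the required inequalities are exactly $h\le\highurb$ and $T-\ell\le T-\lowurb$. The case $\litn>\bign/2$ again follows by complementing. Schur--Horn and Step 1 then produce the desired $\slo$.

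I expect the main obstacle to be Step 1: correctly invoking real Schur--Horn (diagonals of real symmetric matrices with a given spectrum form the permutohedron) and verifying that every orthonormal basis is realized by a positive-definite $\slo$. Once the attainable set is identified as the permutohedron, Steps 2--3 are elementary majorization bookkeeping.
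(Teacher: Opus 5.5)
Your Steps 2 and 3 are correct and closely track the paper's argument. Like the paper, you reduce via Schur--Horn to vectors $\issur$ majorized by $\reig$, collapse to three-level vectors, and check majorization only at the kinks $\litn$ and $\bign-\litn$.

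The gap is in Step 1, in the claim that the column permutations of one fixed $\slo$-perspective $\View$ give all equilibria up to equivalence. That holds only when the $\slo$-perspective is unique, i.e.\ when S's weights are distinct. When weights repeat, the $\slo$-perspectives also include arbitrary orthogonal rotations inside each eigenspace of $\sqrt{\mvar}\slo\sqrt{\mvar}$ (\autoref{lem: set of S perspectives}). R's best and worst equilibria can then come from perspectives that are not issue-equivalent to your $\View$, nor to each other. For example, with $\slo=\mvar^{-1}$ every perspective is an $\slo$-perspective, so $\goodur(\slo,\rlo)=\highurb$ and $\badur(\slo,\rlo)=\lowurb$. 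Yet the top and bottom $\litn$-sums of the $\issur$ built from an arbitrary fixed $\View$ typically lie strictly inside $[\lowurb,\highurb]$. Since (i) quantifies over all positive definite $\slo$, your proof of (i)$\Rightarrow$(ii) covers only senders with distinct weights. The direction (ii)$\Rightarrow$(i) is unaffected, because the $\slo$ you construct has distinct weights.

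The paper closes this hole with its $\rlo$-extremal perspective lemma. Applying Schur--Horn blockwise on each eigenspace, it finds a single $\slo$-perspective whose issue-equivalence class attains both R's best and worst equilibrium payoffs. Hence $(\goodur,\badur)$ are the top and bottom $\litn$-sums of one $\issur$ majorized by $\reig$.

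A lighter fix is available inside your own argument. By \autoref{prop: eqm char}, $\goodur(\slo,\rlo)=h_a$ and $\badur(\slo,\rlo)=\ell_b$, where $(h_a,\ell_a)$ and $(h_b,\ell_b)$ are the top/bottom $\litn$-sum pairs of two possibly different $\slo$-perspectives. The extrema are attained because the set of $\slo$-perspectives is compact, and automatically $h_a\geq h_b$ and $\ell_b\leq\ell_a$. Your Step 2 shows each of these two pairs satisfies (ii). Write $\beta_-\leq\beta_+$ for the endpoints of $\co\{\litn/\bign,1-\litn/\bign\}$. Given $\lowur\leq\highur$, condition (ii) is equivalent to $\lowurb\leq\lowur\leq\medurb\leq\highur\leq\highurb$ together with $(1-\beta_-)\lowur+\beta_-\highur\leq\medurb\leq(1-\beta_+)\lowur+\beta_+\highur$. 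Both bounds are nondecreasing in $\lowur$ and $\highur$. So the mixed pair $(h_a,\ell_b)$ inherits the lower bound from $(h_a,\ell_a)$ and the upper bound from $(h_b,\ell_b)$, and therefore satisfies (ii).
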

The proposition shows that an $\slo$ with an R-best equilibrium payoff of $\highur$ and a R-worst equilibrium payoff of $\lowur$ exist if and only if $\highur$ and $\lowur$ can be expressed as a moderate binary splitting of $\medurb$. This moderation is expressed both in terms of its support---$\lowur$ and $\highur$ must be between $\lowurb$ and $\highurb$--- and in terms of each point's mass: neither $\highur$ and $\lowur$ can have a mass outside $\co\curlyb{\tfrac\litn\bign,\ 1-\tfrac\litn\bign}$, which is an interval centered around $\tfrac{1}{2}$.

Underlying the proposition are two related constraints on the amount of information (as measured by $\codett$) the issues in S's perspective $\View$ can provide about issues in R's perspective $\Viewb$, and vice versa.
The first constraint is about the amount of information one can learn about $\Viewb$ by observing a single issue from $\View$. In particular, the more information $\view_i$ provides about $\viewb_j$, the less information $\view_i$ must provide about other issues in $\Viewb$. The reason is that knowing $(\state^{\Viewb}_m)_{m=1}^{\bign}$ is equivalent to knowing the state, and so is sufficient for knowing $\state^{\View}_{i}$ as well. Therefore, 
\begin{equation}
\sum_j \codet{\state^{\View}_{i}}{\state^{\Viewb}_{j}}=1. \label{eq: Rsqr is row stochastic}
\end{equation}
Hence, the higher is $\codet{\state^{\View}_{i}}{\state^{\Viewb}_{j}}$, the lower must be $\sum_{m\neq j}\codet{\state^{\View}_{i}}{\state^{\Viewb}_{m}}.$

The second constraint is about the total amount of information the issues in $\View$ provide about a single issue from $\Viewb$. This constraint is a mirror image of the previous one: because knowing the variables $(\state^{\View}_m)_{m=1}^{\bign}$ is sufficient for knowing the state, knowing these variables is also sufficient for determining $\state^{\Viewb}_{j}$. Therefore, 
\begin{equation}
\sum_i \codet{\state^{\View}_{i}}{\state^{\Viewb}_{j}}=1.\label{eq: Rsqr is column stochastic}
\end{equation}
An implication of this fact is that the less informative a particular issue $\view_i$ is about $\viewb_j$, the more informative the other issues in $\View$ must be about $\viewb_j$. 

A pertinent consequence of the above constraints is that the values R obtains from learning each issue in $\View$---represented by the vector $\issur=(\issur_1,\ldots,\issur_\bign)^\top$--- must be ``less spread out'' than $\reig = (\reig_1,\ldots,\reig_\bign)^\top$ in the sense of \emph{majorization}. Formally, given two vectors, $\vect, \vectb \in \mathbb{R}^{\bign}$, we say $\vect$ \textbf{majorizes} $\vectb$ if $\vectb = \amat \vect$ for a bistochastic $\bign\times\bign$ matrix $\amat$. In particular, this means that every $\vectb_i$ is a weighted average of elements of $\vect$. 

To connect this definition back to the information constraint, 
define the \textbf{similarity matrix} $\simmat{\View}{\Viewb}$ of $\View$ and $\Viewb$ to be the matrix whose $ij$-entry is the coefficient of determination between $\state^{\View}_i$ and $\state^{\Viewb}_j$---that is, $\codet{\state^{\View}_i}{\state^{\Viewb}_j}$. By the information constraints, \eqref{eq: Rsqr is row stochastic} and \eqref{eq: Rsqr is column stochastic}, this matrix is bistochastic. Since $\issur=\simmat{\View}{\Viewb}\reig$ (see equation~\eqref{eq: R issue payoffs}), it follows the vector $\reig$ majorizes $\issur$. 

There are at least two other equivalent definitions of majorization \citep[see][]{marshall2011inequalities}. The first definition says that $\vect$ majorizes $\vectb$ if for every $m\in\{1,\ldots,\bign\}$ the sum of the $m$ largest entries of $\vect$ is larger than the sum of the $m$ largest entries of $\vectb$, and the sum of all $\bign$ entries of both vectors is the same. The second definition is the mirror image of the first: in addition to all $\bign$ entries of both vectors having an equal sum, the sum of the $m$ smallest entries of $\vect$ must be smaller than the sum of the smallest $m$ entries of $\vectb$ for every $m\in\{1,\ldots,\bign\}$. 

The fact that $\reig$ majorizes $\issur$ is the reason \autoref{prop: eqlbm payoffs}'s part~\eqref{prop: receiver payoff set, payoff part} implies part~\eqref{prop: receiver payoff set, defining inequalities}. For an explanation, note that since $\reig$ majorizes $\issur$, the sum of $\issur$'s least $\litn$ entries and the sum of $\issur$'s largest $\litn$ entries must be less extreme than the corresponding sums for $\reig$. In addition, the average of $\issur$'s entries coincides with the average of $\reig$'s, and must also be between the average of $\issur$'s largest and smallest $\litn$ entries. This latter observation also delivers that one can think of of $\lowur$ and $\highur$ as a splitting of $\medurb$. The fact that the weights associated with this splitting cannot be too extreme follows from noting that $\medurb$ is $\litn$ times the average of $\bign$ numbers, with the lowest $\litn$ summing up to $\lowur$, and the highest $\litn$ summing to $\highur$. 

To show the converse---namely that \autoref{prop: eqlbm payoffs}'s part~\eqref{prop: receiver payoff set, defining inequalities} implies part~\eqref{prop: receiver payoff set, payoff part}---we 
use condition~\eqref{prop: receiver payoff set, defining inequalities} to find some $\issur$ that is majorized by $\reig$ whose highest and lowest $\litn$ entries sum to $\highur$ and $\lowur$, respectively. We then construct a perspective $\View$ such that $\simmat{\View}{\Viewb}\reig=\issur$. A key difficulty in this construction is the fact that not every bistochastic matrix can arise as a similarity matrix for some S-perspective. To circumvent this issue, we rely on the Schur-Horn theorem, which uses majorization to characterize the vectors that can arise as the diagonal of a matrix with a fixed set of eigenvalues.

Proposition~\ref{prop: receiver payoff set} implies that when $\litn=\bign/2$, the information constraints above create a stark tradeoff: the higher the payoff $\slo$ gives R in R's favorite equilibrium, the lower must be the payoff $\slo$ gives R in R's worst equilibrium. This tradeoff is immediately apparent from the Proposition's part~\eqref{prop: receiver payoff set, defining inequalities}. When $\litn=\bign/2$, this part says that $\medurb$ must be the unweighted average of $\lowur$ and $\highur$, and so a higher $\highur$ must be accompanied by a lower $\lowur$. 

The intuition  is easiest to see when $\bign=2$ and $\litn=1$. Suppose $\slo$ and $\rlo$ have unique perspectives $\View$ and $\Viewb$, respectively, and that S communicates the issue $\view_1$ in R's favorite equilibrium. To increase R's payoff in his favorite equilibrium, one would need to make $\view_1$ more informative about R's favorite issue---say $\viewb_1$. By the above constraints, this means that $\view_2$ must become less informative about $\viewb_1$ and more informative about the issue R cares about less, namely $\viewb_2$. Consequently, R's least-favorite equilibrium---which has S communicating about $\view_2$---becomes worse for R. 

This tradeoff above remains when $\litn \neq \bign/2$, but is less stark. For example, an S that maximizes R's worst equilibrium payoff must also minimize R's best equilibrium, and vice versa. We document this fact in the corollary below. To state the corollary, let $\View$ and 
$\Viewb$ be an $\slo$-perspective and 
$\rlo$-perspective, respectively, and let $\reig=(\reig_i)_{i=1}^{\bign}$ be the vector of weights associated with $\rlo$-perspective $\Viewb$. 
\begin{corollary}\label{cor: maximizing worst minimizes best} Given $\rlo$-perspective  $\Viewb$, the following are equivalent:
    \begin{enumerate}[(i)]
    \item R's worst equilibrium payoff is maximized, $\badur(\slo,\rlo)=\medurb.$
    \item R's best equilibrium payoff is minimized, $\goodur(\slo,\rlo)=\medurb.$
    \item The vector $\simmat{\View}{\Viewb}\reig$ is constant for any $\slo$-perspective $\View$.
    \end{enumerate}
\end{corollary}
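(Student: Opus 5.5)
The plan is to reduce everything to statements about the vector $\issur=\simmat{\View}{\Viewb}\reig$, using \autoref{prop: eqm char} and \autoref{prop: eqlbm payoffs}. By \autoref{prop: eqm char}, every equilibrium is equivalent to one that communicates some $\slo$-perspective. Conversely, every profile communicating an $\slo$-perspective is an equilibrium. Any permutation of the columns of an $\slo$-perspective is again an $\slo$-perspective. Combining these facts with \autoref{prop: eqlbm payoffs}, R's set of equilibrium payoffs is exactly the set of sums $\sum_{i\in \indset}\issur_i$. Here $\View$ ranges over all $\slo$-perspectives, $\issur=\simmat{\View}{\Viewb}\reig$, and $\indset$ ranges over all $\litn$-element subsets of $\{1,\ldots,\bign\}$. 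Hence $\badur(\slo,\rlo)$ is the minimum over $\slo$-perspectives $\View$ of the sum of the $\litn$ smallest entries of $\issur$, and $\goodur(\slo,\rlo)$ is the maximum over $\View$ of the sum of the $\litn$ largest entries. Both extrema are attained, since the set of $\slo$-perspectives is compact.

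The second step records that the mean of $\issur$ is always $\tfrac1\bign\sum_j\reig_j$. This holds because $\simmat{\View}{\Viewb}$ is bistochastic, by \eqref{eq: Rsqr is row stochastic} and \eqref{eq: Rsqr is column stochastic}, so $\sum_i\issur_i=\sum_j\reig_j$. Consequently $\medurb=\tfrac{\litn}{\bign}\sum_i \issur_i$ for every $\slo$-perspective $\View$.

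The key elementary lemma is the following. For any $x\in\real^\bign$ and $0<\litn<\bign$, the sum of the $\litn$ smallest entries of $x$ is at most $\tfrac\litn\bign\sum_i x_i$, and the sum of the $\litn$ largest entries is at least this amount. Moreover, equality in either inequality holds if and only if $x$ is constant. To prove it, sort $x_{(1)}\le\dots\le x_{(\bign)}$. The average of the bottom $\litn$ entries is at most $x_{(\litn)}$. Next, $x_{(\litn)}\le x_{(\litn+1)}$, and $x_{(\litn+1)}$ is at most the average of the top $\bign-\litn$ entries. The overall mean is a convex combination, with both weights positive, of the bottom-$\litn$ average and the top-$(\bign-\litn)$ average. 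It therefore lies weakly above the bottom-$\litn$ average, with equality only if the whole chain of inequalities binds. That forces all entries to equal $x_{(\litn)}$. The argument for the top $\litn$ entries is symmetric. This lemma is the only step needing care, because the strictness uses $0<\litn<\bign$.

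The equivalences then follow. For (iii)$\Rightarrow$(i),(ii): if $\issur$ is constant for every $\slo$-perspective, then every entry equals $\tfrac1\bign\sum_j\reig_j$. Every $\litn$-subset sum therefore equals $\medurb$, so $\badur=\goodur=\medurb$. For (i)$\Rightarrow$(iii): by the lemma, every $\slo$-perspective has a $\litn$-smallest sum at most $\medurb$. So $\badur=\medurb$ forces this sum to equal $\medurb$ for every $\View$. The equality case of the lemma then makes every such $\issur$ constant. The argument for (ii)$\Rightarrow$(iii) is the same, using the $\litn$ largest entries. Quantifying over all $\slo$-perspectives, rather than only one, is what handles a non-unique $\slo$-perspective.
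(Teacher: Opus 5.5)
Your proof is correct, but it gets the equivalence of (i) and (ii) by a different route than the paper.

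The paper obtains (i)$\Leftrightarrow$(ii) in one line from Proposition~\ref{prop: receiver payoff set}. For any $\slo$, the pair $(\goodur(\slo,\rlo),\badur(\slo,\rlo))$ is feasible. Hence $\badur(\slo,\rlo)\le\medurb\le\goodur(\slo,\rlo)$ and $\medurb=(1-\scalar)\badur(\slo,\rlo)+\scalar\goodur(\slo,\rlo)$ for some $\scalar\in\co\{\tfrac\litn\bign,1-\tfrac\litn\bign\}\subset(0,1)$, so one endpoint equals $\medurb$ if and only if the other does. The paper then links (iii) to (i) by noting that the entries of $\simmat{\View}{\Viewb}\reig$ average to $\medurb/\litn$, and invoking Proposition~\ref{prop: eqlbm payoffs}.

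You never use Proposition~\ref{prop: receiver payoff set}. Your lemma says the sum of the $\litn$ smallest (resp.\ largest) entries is at most (resp.\ at least) $\tfrac\litn\bign$ times the total, with equality only for constant vectors when $0<\litn<\bign$. That lemma proves (i)$\Leftrightarrow$(iii) and (ii)$\Leftrightarrow$(iii) separately, so (i)$\Leftrightarrow$(ii) passes through (iii).

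The paper's route is shorter and reads the corollary off the feasible-payoff characterization, via the strict interiority of the splitting weight. That characterization, however, relies on the $\rlo$-extremal-perspective construction (via Schur--Horn) to ensure the best and worst payoffs are partial sums of a single vector majorized by $\reig$. Your route needs only Propositions~\ref{prop: eqm char} and~\ref{prop: eqlbm payoffs} plus column-stochasticity of the similarity matrix. It also makes explicit two things the paper's ``follows directly'' leaves implicit: the equality case, and non-unique $\slo$-perspectives. You handle the latter by optimizing over all $\slo$-perspectives, so the best and worst need not come from the same $\View$. This is also why (iii) must be read as ``for every'' $\slo$-perspective, as you do.
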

Thus, an equivalence exists between $\slo$ maximizing R's worst equilibrium payoff, S minimizing R's best equilibrium payoff, and all issues in $\slo$'s perspective $\View$ having the same value to R. To get a sense as to why the corollary holds, suppose $\slo$ and $\rlo$ have a unique perspective. In this case, R's lowest equilibrium payoff is given by the sum of $\litn$ lowest values of $\issur=\simmat{\View}{\Viewb}\reig$. Since $\reig$ majorizes $\issur$ by the information constraints, this sum can equal $\medurb$ if and only if all entries of $\issur$ are equal to $\medurb/\litn$, which, in turn, is equivalent to  the sum of the $\litn$ highest entries of $\issur$ equaling $\medurb$---in other words, R's highest equilibrium value must be $\medurb$. 

Another feature of the $\litn=\bign/2$ case that holds generally is that an $\slo$ that shares a perspective with $\rlo$ both maximizes R's best equilibrium payoff and minimizes R's worst equilibrium payoff across all S loss matrices. 
\begin{observation}\label{obs: Shared perspective is best and worst}
    If $\slo$ and $\rlo$ share a perspective, then $\goodur(\slo,\rlo) = \highurb$ and $\badur(\slo,\rlo) = \lowurb$. 
\end{observation}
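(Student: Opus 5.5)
The plan is to combine the two halves of the argument: an achievability step via \autoref{prop: eqm char} and \autoref{prop: eqlbm payoffs}, and a bound step via the majorization facts (or simply \autoref{prop: receiver payoff set}).

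\emph{Setup.} Suppose $\slo$ and $\rlo$ share a perspective $\View$. Then $\View$ is simultaneously an $\slo$-perspective, with weights $\seig_i$, and an $\rlo$-perspective, with weights $\reig_i$. Take $\Viewb=\View$ in \autoref{prop: eqlbm payoffs}. Since $\View$ is a perspective, $\bbE[\state^{\View}_i\state^{\View}_j]$ equals $1$ if $i=j$ and $0$ otherwise. Hence $\simmat{\View}{\View}=I$, and the value R gets from issue $\view_i$ is $\issur_i=\reig_i$.

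\emph{Achievability.} Reorder the columns of $\View$ so that the first $\litn$ issues are those with the $\litn$ highest $\reig$ weights. The reordered matrix is still an $\slo$-perspective, because it is issue equivalent to $\View$, with the weights permuted accordingly. By \autoref{prop: eqm char}(i), the profile communicating this reordered perspective is an equilibrium. By \autoref{prop: eqlbm payoffs}, it gives R the payoff $\sum_{i\le\litn}\reig_{(i)}=\highurb$. Ordering the columns the other way, with the $\litn$ lowest weights first, gives an equilibrium with R payoff $\lowurb$.

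\emph{Bounds.} It remains to show that no equilibrium, for any $\slo$, gives R more than $\highurb$ or less than $\lowurb$. By \autoref{prop: eqm char}(ii), every equilibrium is equivalent to one communicating some $\slo$-perspective $\View'$. Its R payoff is the sum of $\litn$ entries of $\issur=\simmat{\View'}{\Viewb}\reig$. That matrix is bistochastic by \eqref{eq: Rsqr is row stochastic} and \eqref{eq: Rsqr is column stochastic}, so $\reig$ majorizes $\issur$. The partial-sum characterization of majorization then says that any $\litn$ entries of $\issur$ sum to at most the $\litn$ largest entries of $\reig$, which is $\highurb$, and to at least $\lowurb$. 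Equivalently, one can read these bounds off \autoref{prop: receiver payoff set}(ii).

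Combining the two steps gives $\goodur(\slo,\rlo)=\highurb$ and $\badur(\slo,\rlo)=\lowurb$. The only point needing care is confirming that a reordering of a shared perspective remains an $\slo$-perspective, which holds by the definition of issue equivalence. I expect no real obstacle.
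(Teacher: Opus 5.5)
Your argument is correct and follows the paper's implicit reasoning. A shared perspective gives $\simmat{\View}{\View}=I$, so R's issue values are his own weights, and reordering the shared perspective attains $\highurb$ and $\lowurb$; the paper invokes exactly this in the proof of \autoref{cor: payoff set dominance}. Bistochasticity of the similarity matrix (equivalently, the bounds in \autoref{prop: receiver payoff set}) then rules out any equilibrium doing better or worse. The paper gives no separate proof of this observation, and yours fills it in cleanly.
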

Intuitively, if $\slo$ shares a perspective with $\rlo$, then $\slo$ can communicate clearly both about the issues R cares about most and the issues R cares about least. The former is helpful for maximizing R's payoff in his favorite equilibrium. The above-mentioned information constraints mean that the latter is helpful for minimizing R's payoff in his least favorite equilibrium: the more information an S-issue provides about the R-issues R cares about less, the less information that issue must provide about the R-issues R cares about more.

It turns out the above intuition holds more generally. In particular, there is a sense in which the better synchronized S's perspective is with R's, the better she is for R in R's favorite equilibrium, and the worse she is in R's worst equilibrium. The converse also holds: if, regardless of R's priorities, one S perspective is always better (worse) than another in R's favorite (worst) equilibrium, then the former perspective must be better synchronized with R's perspective than the latter one.

This relationship is the content of our next result. To state this result, define a \textbf{$\litn$-average row} of a matrix $\somemat\in\matnn$ to be an unweighted average of $\litn$ distinct rows of $\somemat$, and a \textbf{mixed $\litn$-average row} of $\somemat$ to be a convex combination of $\litn$-averages.\footnote{Given the Birkhoff-von Neumann theorem, a vector is a mixed $\litn$-average row of $\somemat$ if and only if it is a convex combination of the rows of $\somemat$, with each row having weight at most $\tfrac1\litn$. 

}

\begin{proposition}\label{prop: ranking sender perspectives}
Fix a receiver perspective $\Viewb$. Consider senders $\slo_1$ and $\slo_2$ each with a unique perspective $\View_1$ and $\View_2$, respectively. The following are equivalent. 
\begin{enumerate}[(i)]
    \item Any $\rlo$ with perspective $\Viewb$ has $\goodur(\slo_1,\rlo)\geq\goodur(\slo_2,\rlo)$. \label{prop: ranking sender perspectives - best}
    \item Any $\rlo$ with perspective $\Viewb$ has $\badur(\slo_1,\rlo)\leq\badur(\slo_2,\rlo)$. \label{prop: ranking sender perspectives - worst}
    \item Every $\litn$-average row of $\simmat{\View_2}{\Viewb}$ is a mixed $\litn$-average row of $\simmat{\View_1}{\Viewb}$. \label{prop: ranking sender perspectives - averages}
\end{enumerate}
\end{proposition}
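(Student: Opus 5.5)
Fix $\Viewb$ and write $\somemat_\ell:=\simmat{\View_\ell}{\Viewb}$ for $\ell=1,2$, both bistochastic. An $\rlo$ with perspective $\Viewb$ is determined by an arbitrary weight vector $\reig\in\real^\bign_{++}$, via $\rlo=\inv{(\Viewb^\top)}\mathrm{diag}(\reig)\inv{\Viewb}$. Since each $\View_\ell$ is unique up to issue equivalence, \autoref{prop: eqm char} and \autoref{prop: eqlbm payoffs} show that the equilibria correspond exactly to choices of $\litn$ issues from $\View_\ell$. Each such choice gives R the payoff $\sum_{i\in \indset}(\somemat_\ell\reig)_i$, and this is invariant to sign flips because $\codett$ is squared. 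Let $\matset_\ell$ be the set of $\litn$-average rows of $\somemat_\ell$ and $\co\matset_\ell$ its mixed $\litn$-average rows. Then
\[
\goodur(\slo_\ell,\rlo)=\litn\max_{x\in\co\matset_\ell}x\cdot\reig,\qquad \badur(\slo_\ell,\rlo)=\litn\min_{x\in\co\matset_\ell}x\cdot\reig ,
\]
because a linear function attains its max and min over a polytope at vertices. So the statement is about support functions of the two polytopes $\co\matset_1,\co\matset_2$, evaluated on the open positive orthant.

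For (iii)$\Rightarrow$(i),(ii): condition (iii) says $\co\matset_2\subseteq\co\matset_1$, and taking the max and min over a larger set gives both inequalities. For the converse directions, I will first enlarge the test class from $\reig\in\real^\bign_{++}$ to all $\reig\in\real^\bign$. Every point of either polytope has coordinates summing to $1$, since each row of a bistochastic matrix sums to one. Replacing $\reig$ by $\reig+c\mathbf 1$ therefore shifts every support value by the same constant $c$, so the inequalities for positive $\reig$ extend to all $\reig$.

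For (i)$\Rightarrow$(iii): the inequality $\max_{\co\matset_1}x\cdot\reig\ge\max_{\co\matset_2}x\cdot\reig$ for all $\reig$ is the support-function characterization of inclusion of compact convex sets. If some $y\in\co\matset_2$ lay outside $\co\matset_1$, separating hyperplanes would produce a $\reig$ with $y\cdot\reig>\max_{\co\matset_1}x\cdot\reig$, a contradiction.

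For (ii)$\Rightarrow$(iii): apply the same argument to $-\reig$. Then $\min_{\co\matset_1}x\cdot\reig\le\min_{\co\matset_2}x\cdot\reig$ for all $\reig$ becomes $\max_{\co\matset_1}x\cdot(-\reig)\ge\max_{\co\matset_2}x\cdot(-\reig)$, which again gives inclusion.

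The main obstacle is to justify the reduction to support functions carefully. One point is the claim that every $\reig\in\real^\bign_{++}$ arises from some $\rlo$ with perspective $\Viewb$. This holds by the geometric description of a perspective as an eigendecomposition of $\sqrt{\mvar}\rlo\sqrt{\mvar}$: given $\Viewb$ and $\reig$, I will define $\rlo$ so that $\sqrt{\mvar}\rlo\sqrt{\mvar}$ has eigenvectors $\sqrt{\mvar}\viewb_j$ and eigenvalues $\reig_j$. The other point is that the translation by $c\mathbf 1$ keeps the weights inside the admissible class. This works because the inequalities we derive are invariant under that translation, so we may always choose $c$ large enough to make the weights positive.
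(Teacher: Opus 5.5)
Your proof is correct, and it takes a genuinely different route from the paper's.

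\textbf{The paper's route.} The paper keeps the test weights in $\real^\bign_{++}$ throughout. It writes $\goodur$ and $\badur$ as $\max$ and $\min$ over $\wt\in\wtset$ of $\wt^\top\somemat_\ell\reig$. For each fixed $\wtb\in\wtset$ it then applies Sion's minimax theorem to exchange $\inf_{\reig\in\real^\bign_{++}}$ with $\max_{\wt\in\co\wtset}$. This yields some $\wt\in\co\wtset$ with the vector dominance $\pm(\wt^\top\somemat_1-\wtb^\top\somemat_2)\geq 0$. Only at the end does it use that the rows of $\somemat_\ell$ sum to one, which forces that dominance to be the equality $\wt^\top\somemat_1=\wtb^\top\somemat_2$.

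\textbf{Your route.} You use the row-sum property at the outset. Both polytopes lie in the hyperplane $\{x:\ x\cdot\mathbf 1=1\}$, so shifting $\reig$ along $\mathbf 1$ moves both support values by the same amount. Hence the inequality on the positive orthant already holds on all of $\real^\bign$. Strict separation of a point from a compact convex set then gives inclusion of the polytopes.

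\textbf{Comparison.} The two arguments are dual versions of the same convex-duality fact. Yours needs only the separating hyperplane theorem rather than a minimax theorem. It also makes transparent why (i) and (ii) coincide: they test one support-function inequality on the positive and negative orthants respectively, and translation along $\mathbf 1$ identifies the two. The paper's chain of equivalences handles both orderings at once through the $\pm$ sign. It also lands directly on the form of (iii), namely a witnessing $\wt\in\co\wtset$ for each $\litn$-average row $\wtb^\top\somemat_2$, without passing through set inclusion.

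\textbf{A slip in your formula for $\rlo$.} The explicit formula $\rlo=\inv{(\Viewb^\top)}\mathrm{diag}(\reig)\inv{\Viewb}$ is wrong unless $\mvar=I$. A perspective $\Viewb$ is an $\rlo$-perspective with weights $\reig$ exactly when $\rlo=\Viewb\,\mathrm{diag}(\reig)\,\Viewb^\top$. Since $\inv{\Viewb}=\Viewb^\top\mvar$, your matrix equals $\mvar\Viewb\,\mathrm{diag}(\reig)\,\Viewb^\top\mvar$ instead. The construction you describe at the end avoids this: make $\sqrt{\mvar}\rlo\sqrt{\mvar}$ have eigenvectors $\sqrt{\mvar}\viewb_j$ and eigenvalues $\reig_j$. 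That yields the correct $\rlo=\Viewb\,\mathrm{diag}(\reig)\,\Viewb^\top$, so every $\reig\in\real^\bign_{++}$ is indeed realized and the argument goes through.
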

Proposition~\ref{prop: ranking sender perspectives} suggests a way of ordering the degree to which two S-perspectives $\View_1$ and $\View_2$ are synced with a given R-perspective $\Viewb$: say \textbf{$\View_1$ more $\Viewb$-synced than $\View_2$} (denoted by $\View_1 \succsim_{\Viewb} \View_2$) if every $\litn$-average row of $\simmat{\View_2}{\Viewb}$ is a mixed $\litn$-average row of $\simmat{\View_1}{\Viewb}$ for all $\litn \in \{1,\ldots,\bign-1\}$. By the above proposition, whenever $\View_1$ is more $\Viewb$-synced than $\View_2$, a receiver with perspective $\Viewb$ who gets to choose his equilibrium prefers an S with a unique perspective of $\View_1$ over an S whose unique perspective is $\View_2$ for \emph{any} $\litn$. Moreover, this ranking is reversed if R always expects to play his worst equilibrium: in that case, R would prefer the $\View_2$-sender over the $\View_1$- sender, regardless of $\litn$. 

The synchronization order is related to a known order in majorization theory. A matrix $\somemat \in \real^{\bign\times\bign}$ \textbf{directionally majorizes} a matrix $\somematt \in \real^{\bign\times\bign}$ if $\somemat \vect$ majorizes $\somematt \vect$ for any vector $\vect \in \real^{\bign}$ \citep[p. 618,]{marshall2011inequalities}. Applying Theorem 3.12 from \cite{peria2005weak} delivers that $\View_1 \succsim_\Viewb \View_2$ holds if and only if $\codet{\View_1}{\Viewb}$ directionally majorizes $\codet{\View_2}{\Viewb}$. 

As one might expect, the perspective $\Viewb$ is more $\Viewb$-synced than any other perspective, since by Proposition~\ref{prop: receiver payoff set} and Observation~\ref{obs: Shared perspective is best and worst}, an S with such a perspective always maximizes R's best equilibrium payoff and minimizes R's worst equilibrium payoff. In other words, the synchonization order always admits a maximum. This order often also admits a minimum: a perspective $\wViewb$ satisfying $\codet{\state^{\wViewb}_i}{\state^{\Viewb}_j}=1/\bign$ for all $i,j$ is less $\Viewb$-synced than any other perspective.\footnote{Such a perspective exists if some order-$\bign$ Hadamard matrix does, which occurs whenever $\bign$ equals any power of 2 \citep[see Theorem 18.3,][]{van2001course}.} Indeed, by \autoref{prop: eqlbm payoffs}, an equilibrium that communicates such a perspective gives R a payoff of $\sum_{i=1}^{\litn}\sum_{j=1}^{\bign}\reig_j \codet{\state^{\wViewb}_i}{\state^{\Viewb}_j} = \tfrac{\litn}{\bign}\sum_{j=1}^{\bign}\reig_j=\medurb$, which by \autoref{prop: receiver payoff set} both minimizes R's best equilibrium utility and maximizes R's worst equilibrium utility.

It is worth noting that the tradeoff between R's payoff under favorable selection and adversarial selection is less stark when R's priorities $\reig$ are known and $\litn \neq \bign/2$. We illustrate this fact below. 

\begin{example}\label{ex: CEO-manager 3d}
Consider the setting of \autoref{ex: CEO-manager}, but assume now the division manager (S) has three products under her control. Thus, we assume now that $\bign=3$. Moreover, both the division manager and the CEO (R) assign the highest priority to providing the division with the appropriate budget for this third product. Specifically, the loss matrices are given by
    \[
    \slo =\threemat{3}{1}{0}{1}{3}{0}{0}{0}{1000},\ \ \ \ \ \  \rlo = \threemat{98}{0}{0}{0}{2}{0}{0}{0}{1000}.
    \]
    We still assume that the two can only talk about one issue (i.e., $\litn=1$), and that states are uncorrelated and have unit variance $\mvar=I$. Under these assumptions, S's perspective and associated weights are
    \[
    \View = \left[\threevec{\tfrac{1}{\sqrt{2}}}{\tfrac{1}{\sqrt{2}}}{0} \threevec{\tfrac{1}{\sqrt{2}}}{-\tfrac{1}{\sqrt{2}}}{0}\threevec{0}{0}{1} \right], 
    \quad 
    \threevec{\seig_1}{\seig_2}{\seig_3}=\threevec{4}{2}{1000},
    \]
    whereas R's perspective and associated weights are
    \[
    \Viewb=\left[\threevec{1}{0}{0} \threevec{0}{1}{0} \threevec{0}{0}{1} \right], 
    \quad
    \threevec{\reig_1}{\reig_2}{\reig_3}=\threevec{98}{2}{1000}.
    \]
    Calculating the similarity matrix $\codet{\View}{\Viewb}$ and issue-value vector $\issur$ gives
    \[
    \codet{\View}{\Viewb} = \threemat{0.5}{0.5}{0}{0.5}{0.5}{0}{0}{0}{1}, \quad \text{and} \quad \issur = \threevec{50}{50}{1000}.
    \]
    Therefore, R receives an expected payoff of $50$ in both the equilibrium that communicates the issue $\view_1$ and the equilibrium that communicates the issue $\view_2$. The equilibrium that communicates the issue $\view_3$ gives R a payoff of $1000$. By Proposition~\ref{prop: eqlbm payoffs}, this payoff maximizes R's utility across all equilibria and across all $\slo$. 
    
    As noted in Observation \ref{obs: Shared perspective is best and worst}, R could also get an equilibrium payoff of $1000$ if he communicated with a sender with a perspective $\Viewb$. Indeed, in this case the similarity matrix would be the identity $\simmat{\Viewb}{\Viewb} = I$, and so the issue-value vector would be $\reig$, meaning R's highest equilibrium payoff is $1000$ as well. 

    Note however, that R's worst equilibrium payoff is lower with an S that shares his perspective $\Viewb$ than it is with an S whose perspective is $\View$: whereas the lowest coordinate of $\reig$ equals $2$, the lowest coordinate of $\issur$ equals $50$. Therefore, while R would be indifferent between a S with perspective $\Viewb$ and an S with perspective $\View$ under favorable equilibrium selection, R would strictly prefer the latter if equilibrium selection were adverserial. 
    \qed 
\end{example}

As demonstrated by the above example, the tradeoff between R's value in his favorite equilibrium and in his worst equilibrium is less stark when $\litn \neq \bign/2$ and $\reig$ is fixed. In particular, some senders are worse for R than others under both equilibrium selection criteria. In fact, when R's payoffs are generic, one can find an S that maximizes R's favorite equilibrium payoff without minimizing R's payoff under adversarial selection. We summarize this observation in the following corollary.

\begin{corollary}\label{cor: payoff set dominance}
For generic $\rlo$, the following are equivalent. \begin{enumerate}[(i)]
    \item\label{cor: payoff set dominance, item same best} Some $\slo$ has $\goodur(\slo,\rlo)=\goodur(\rlo,\rlo)$ and $\badur(\slo,\rlo)>\badur(\rlo,\rlo)$.
    \item\label{cor: payoff set dominance, item strict Pareto} Some $\slo_1,\slo_2$ have $\goodur(\slo_1,\rlo)>\goodur(\slo_2,\rlo)$ and $\badur(\slo_1,\rlo)>\badur(\slo_2,\rlo)$.
    \item\label{cor: payoff set dominance, item litn not half bign} We have $\litn\neq\tfrac\bign2$.\end{enumerate}
\end{corollary}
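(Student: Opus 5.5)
The plan is to derive all three implications from \autoref{prop: receiver payoff set}, together with \autoref{obs: Shared perspective is best and worst}. The observation gives $\goodur(\rlo,\rlo)=\highurb$ and $\badur(\rlo,\rlo)=\lowurb$. I take ``generic $\rlo$'' to mean that the weights $\reig_1,\ldots,\reig_\bign$ are pairwise distinct, per \autoref{lem: generically distinct weights}. Since $0<\litn<\bign$, this gives the strict chain $\lowurb<\medurb<\highurb$. Throughout, the set of achievable pairs $(\highur,\lowur)$ is exactly the set described in part~\eqref{prop: receiver payoff set, defining inequalities} of \autoref{prop: receiver payoff set}.

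\emph{(iii)$\Rightarrow$(i).} Assume $\litn\neq\bign/2$. I look for an achievable pair of the form $(\highurb,\lowur)$ with $\lowur>\lowurb$. Fixing $\highur=\highurb$, the splitting condition gives
\[
\lowur=\frac{\medurb-\scalar\highurb}{1-\scalar},
\]
which is strictly decreasing in $\scalar$ because $\highurb>\medurb$. I therefore take the smallest admissible weight, $\scalar=\min\{\tfrac\litn\bign,1-\tfrac\litn\bign\}$.

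Suppose first that $\litn<\bign/2$, so $\scalar=\tfrac\litn\bign$. Writing $T=\sum_j\reig_j$, one has $\bign\medurb=\litn T$. A short computation then gives $\lowur=\litn\cdot(\text{average of the }\bign-\litn\text{ lowest weights})$. Because the weights are distinct and $\bign-\litn>\litn$, this average strictly exceeds the average of the $\litn$ lowest weights, so $\lowur>\lowurb$. The case $\litn>\bign/2$, with $\scalar=1-\tfrac\litn\bign$, is handled by the symmetric computation: $\lowur$ becomes $T$ minus $\tfrac{\bign-\litn}{\litn}\highurb$, and comparing averages of top and bottom blocks again gives a strict inequality.

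It remains to check the other inequalities in part~\eqref{prop: receiver payoff set, defining inequalities}. The inequality $\lowur\le\medurb$ is automatic, since $\medurb$ is a convex combination of $\lowur$ and $\highurb\ge\medurb$. \autoref{prop: receiver payoff set} then yields an $\slo$ satisfying (i). This step, namely verifying the strict inequality $\lowur>\lowurb$ via the block-average comparison, is the main obstacle and the only place where genericity is used.

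\emph{(i)$\Rightarrow$(ii).} Let $\slo_1$ be the sender from (i), with achievable pair $(\highurb,\lowur_1)$ and $\lowur_1>\lowurb$. The pair $(\highurb,\lowurb)$ is achievable by $\rlo$ itself with some admissible weight $\scalar_0$. The pair $(\medurb,\medurb)$ satisfies the splitting condition for every $\scalar$, in particular for $\scalar_0$. Hence for small $t>0$ the point
\[
\bigl(\highurb-t(\highurb-\medurb),\ \lowurb+t(\medurb-\lowurb)\bigr)
\]
satisfies all conditions of part~\eqref{prop: receiver payoff set, defining inequalities} with the same $\scalar_0$. It is therefore realized by some $\slo_2$. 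For $t$ small enough, its best payoff is strictly below $\highurb$, and its worst payoff is strictly below $\lowur_1$. So $\slo_1,\slo_2$ witness (ii).

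\emph{(ii)$\Rightarrow$(iii).} I argue by contraposition. If $\litn=\bign/2$, then $\co\{\tfrac\litn\bign,1-\tfrac\litn\bign\}=\{\tfrac12\}$. Every achievable pair therefore satisfies $\highur+\lowur=2\medurb$, so a strictly higher best payoff forces a strictly lower worst payoff, and (ii) fails.
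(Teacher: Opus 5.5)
Your proof is correct. It shares two ingredients with the paper's proof but is organized differently.

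\textbf{How the paper proceeds.} The paper does not run a cycle. After disposing of $\litn=\tfrac\bign2$ exactly as in your (ii)$\Rightarrow$(iii), it characterizes each item separately for arbitrary $\rlo$:
\begin{itemize}
\item (ii) holds iff $\litn\neq\tfrac\bign2$ and $\reig$ is non-constant. The reason is that every pair near $(\medurb+\varepsilon,\medurb-\varepsilon)$ is then payoff feasible, since $\co\{\tfrac\litn\bign,1-\tfrac\litn\bign\}$ contains a neighborhood of $\tfrac12$.
\item (i) holds iff $\litn\neq\tfrac\bign2$ and $\medurb>(1-\tfrac{\litnt}{\bign})\lowurb+\tfrac{\litnt}{\bign}\highurb$. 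Equivalently, the bottom $\bign-\litn$ (resp.\ top $\litn$) sorted weights are not all equal.
\end{itemize}
Distinct weights are invoked only at the end.

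\textbf{Where your argument differs.} Your (iii)$\Rightarrow$(i) is the paper's second characterization in another form: your inequality $\lowur>\lowurb$ at the smallest admissible $\scalar$ rearranges to exactly the paper's inequality. Your (i)$\Rightarrow$(ii) step is genuinely different. Sliding $(\highurb,\lowurb)$ toward $(\medurb,\medurb)$ with $\scalar_0$ held fixed replaces the paper's neighborhood argument. It holds for every $\rlo$, because (i) by itself forces $\highurb>\medurb$. So genericity enters your proof only once.

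\textbf{What each route buys.} The paper's route gives the exact non-generic picture and shows that (i) is strictly more demanding than (ii). For example, $\bign=3$, $\litn=1$, $\reig=(3,1,1)$ satisfies (ii) but not (i). Your route is the shorter proof that the paper's footnote alludes to.

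Your cycle also avoids two slips in the paper's written proof. Its neighborhood argument is labeled as establishing (i), though it actually delivers (ii). Its analysis of the pair $(\highurb,\lowur)$ is announced as concerning (ii), though it analyzes (i), and it says ``all coinciding'' where ``not all coinciding'' is meant.
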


Whereas the proof relies on \autoref{prop: receiver payoff set}, the corollary's intuition is straightforward. When $\litn < \bign/2$, one can find $\slo$-perspectives that enable S to clearly communicate about the $\litn$ issues R cares the most about, but mix together issues R assigns intermediate importance to with the issues R cares least about.\footnote{When $\litn > \bign/2$, one can analogously mix together R's $\litn$ most important issues, leaving the value of communicating all of them unchanged, but improving R's value from his worst equilibrium.} Such a mixing serves as a hedge, preventing scenarios in which S communicates only information that R views as minutiae. Consequently, some senders can be replaced by senders who improve R's payoff under both equilibrium selection rules---as long as R is generic. For a non-generic set of R, different issues may have the same value (i.e., $\reig_i=\reig_j$ for some $i\neq j$), limiting the ability to hedge against adversarial equilibrium selection via issue mixing.

\section{How much do you want to talk?}\label{sec: investment}

In this section, we assess the value of communication resources. Specifically, we consider a richer game in which our main model is preceded by R making an investment decision that determines $\litn$. If R describes the decision-making protocols in an organization, then the capacity $\litn$ could be determined by organizational design choices, representing investments in adaptive decision-making. Taking our model more literally, we can think of a choice of $\litn$ as a costly attention choice. 

The section's results show that endogenizing $\litn$ means R no longer needs to balance his value in his best and worst equilibrium when choosing between senders. Under adversarial equilibrium selection, R either pays full attention to S (setting $\litn=\bign$), or none (i.e., $\litn=0$), and R's attention decision is the same regardless of S's perspective. By contrast, under favorable selection R's attention decision can take interior values, and S's identity matters: if R expects to hear the S-issues he cares about most, then R always prefers a more synchronized S. Thus, when $\litn$ is endogenous, R prefers more synchronized senders under favorable selection, but does not care about S's identity if he expects equilibrium selection to be adversarial.

Greater precision is in order. Suppose the game starts with R making a public choice of $\litn\in\{0,\cdots,\bign\}$, and then play proceeds according to our main model. S's payoff is as in our main model, whereas R's payoff is equal to that in our main model net of an investment cost, 
\[
\cost\litn.
\]
We assume sequential rationality, in that an equilibrium is played following the investment decision. This equilibrium is determined according to one of two possible selection criteria: either R's best equilibrium is played for every choice of $\litn$, or R's worst equilibrium is played.

We begin by analyzing R's decision under favorable selection. Suppose for exposition that S and R are generic, and so have unique perspectives, $\View$ and $\Viewb$, and let $\reig$ be R's associated weights. Suppose further that the issues in $\View$ are ordered in terms of their (generically unique) value to the receiver; that is, $\issur:=\codet{\View}{\Viewb}\reig$ is such that $\issur_1>\cdots>\issur_\bign$. Given this ordering, choosing a capacity of $\litn$ results in S communicating issues $1,\ldots,\litn$ in R's favorite equilibrium. Therefore, R's payoff under favorable selection from choosing $\litn$ is
\[
\goodur(\slo,\rlo) - \cost \litn =  \sum_{i=1}^{\litn} \issur_i - \cost\litn = \sum_{i=1}^{\litn}(\issur_i - \cost).
\]
Clearly, R strictly benefits (on the margin) from listening to issue $\litn$ if and only if  $\issur_\litn > \cost$. Moreover, capacity exhibits decreasing marginal returns, meaning R benefits from listening to the $\litn^{\text{th}}$ issue only if he also benefits from listening to the $i^{\text{th}}$ issue for any $i < \litn$. Therefore, the optimal capacity under favorable selection is determined by a first-order condition: $\issur_{\litn}\geq \cost \geq \issur_{\litn+1}$. 

What if R expects adversarial selection? In this case, listening to $\litn$ issues gives R a payoff of 
\[
\badur(\slo,\rlo) - \cost\litn = \sum_{i=\bign-\litn+1}^{\bign}(\issur_i - \cost).
\]
The above objective exhibits \emph{increasing} marginal returns to capacity. Hence, if listening to $\litn$ issues is better than not listening at all, then listening to $\litn+1$ issues is even better. So R's optimal capacity under adversarial selection is bang-bang: R either gives S his full attention, setting $\litn=\bign$, or does not listen at all. Which of these two options is optimal depends on the costs. Specifically, choosing a capacity of $\bign$ is optimal if and only if 
\[
\frac{1}{\bign}\sum_{i=1}^{\bign}\issur_i \geq \cost.
\]
Otherwise, R does not listen to S and so chooses zero capacity. 

Thus, whereas R's chosen capacity under favorable selection is determined by the value of the best \emph{marginal} issue, his capacity under adversarial selection depends on the value of the \emph{average} issue. Consequently, R's capacity choice differs between these two equilibrium selection regimes. \autoref{fig: investment} illustrates these differences, denoting R's optimal investment under best-case selection by $\litnb$, and under worst-case selection by $\litnb$. 

\begin{figure}
    \centering
    \includegraphics[width=0.5\linewidth]{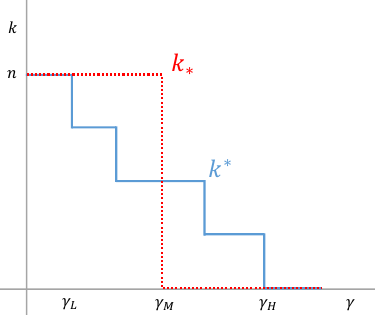}
    \caption{Optimal investment $\litn$ by R under best-case and worst-case equilibrium, respectively, as a function of the cost parameter $\cost$.}
    \label{fig: investment}
\end{figure}

When $\cost$ is sufficiently low, R sets $\litnb=\litnw=\bign$ and pays full attention to S in both regimes. For $\cost$ slightly above the value of R's least favorite issue $\issur_\bign$, R reduces his attention under favorable selection $\litnb<\bign$, but keeps paying full attention under adversarial selection $(\litnw=\bign)$. As $\cost$ increases further, R's attention under favorable selection decreases gradually, going to zero once $\cost$ passes the value $\issur_1$ of R's favorite S-issue. By contrast, an R expecting the worst equilibrium keeps giving S his full attention until $\cost$ passes the value of the average issue $\frac{1}{\bign}\sum_{i=1}^{\bign}\issur_i$, at which stage R's attention drops from $\litnw=\bign$ to $\litnw=0$. \autoref{prop: investment} summarizes these differences.

\begin{proposition}\label{prop: investment}
Some cutoffs $\hcost\geq\mcost\geq\lcost>0$ exist (generically, $\hcost>\mcost>\lcost$) such that any R-best-case and R-worst-case optimal investments $\optlitnb$ and $\optlitnw$ satisfy: \begin{enumerate}[(i)]
    \item If $\lcost<\cost<\mcost$, then $\optlitnb<\optlitnw=\bign$;
    \item If $\mcost<\cost<\hcost$, then $\optlitnb>\optlitnw=0$;
    \item If $\cost<\lcost$ or $\cost>\hcost$, then $\optlitnb=\optlitnw\in\{0,\bign\}$.
\end{enumerate}
\end{proposition}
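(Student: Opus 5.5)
Let $\issur\in\real^\bign$ be the vector of issue values for R, sorted as $\issur_1\ge\cdots\ge\issur_\bign$, and set
\[
\lcost:=\issur_\bign,\qquad \mcost:=\tfrac1\bign\sum_{i=1}^\bign\issur_i,\qquad \hcost:=\issur_1 .
\]
The plan is to reduce the statement to comparing a concave and a convex sequence in $\litn$. Two facts are needed first. Every $\issur_i$ is a convex combination of the $\reig_j>0$, by bistochasticity of $\simmat{\View}{\Viewb}$, so $\lcost>0$. The ordering $\hcost\ge\mcost\ge\lcost$ is immediate, and these inequalities are strict unless $\issur$ is constant, which fails generically.

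Sorting is legitimate even without unique perspectives. By \autoref{prop: eqm char} and \autoref{prop: eqlbm payoffs}, R's best and worst equilibrium payoffs at capacity $\litn$ are the sums of the $\litn$ largest and $\litn$ smallest values of $\issur$ over any $\slo$-perspective. Ties among $\seig$ permit rotating within an eigenspace, which would change $\issur$. To handle this, I would either restrict to generic $(\slo,\rlo)$ as in the text, or define $\issur$ via the extremal perspective and note that only the functions
\[
B(\litn):=\goodur-\cost\litn,\qquad W(\litn):=\badur-\cost\litn
\]
matter. The key properties I need are that $B$ has decreasing increments and $W$ has increasing increments. For $\goodur$ as a function of $\litn$ this is Ky Fan: the maximum over perspectives of the top-$\litn$ sum is concave in $\litn$, being a sum of sorted eigenvalue-like quantities. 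I would state this with the paper's earlier machinery rather than reprove it. I expect this step, making the concavity and convexity hold when perspectives are non-unique, to be the main obstacle. I would first try to show that the $\issur$ realizing $\goodur$ can be chosen simultaneously for all $\litn$, namely by diagonalizing $\sqrt\mvar\rlo\sqrt\mvar$ compressed to each $\seig$-eigenspace.

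Given these properties, the conclusions follow.
\begin{itemize}
\item \emph{Best case.} $B(\litn)=\sum_{i\le\litn}(\issur_i-\cost)$, so $\optlitnb$ satisfies $\issur_{\optlitnb}\ge\cost\ge\issur_{\optlitnb+1}$. Hence $\optlitnb=\bign$ iff $\cost<\lcost$, $\optlitnb=0$ iff $\cost>\hcost$, and $0<\optlitnb<\bign$ strictly in between.
\item \emph{Worst case.} $W(\litn)=\sum_{i>\bign-\litn}(\issur_i-\cost)$ is convex with $W(0)=0$, so its maximum is at an endpoint. Therefore $\optlitnw=\bign$ iff $\bign(\mcost-\cost)>0$, and $\optlitnw=0$ iff $\cost>\mcost$.
\end{itemize}

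Combining the two cases gives the claims.
\begin{itemize}
\item[(i)] For $\lcost<\cost<\mcost$ we get $\optlitnb<\bign=\optlitnw$.
\item[(ii)] For $\mcost<\cost<\hcost$ we get $\optlitnb>0=\optlitnw$.
\item[(iii)] For $\cost<\lcost$ both capacities equal $\bign$, since $\lcost\le\mcost$. For $\cost>\hcost$ both equal $0$, since $\mcost\le\hcost$.
\end{itemize}
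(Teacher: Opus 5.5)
Your proof of (i)--(iii) is essentially the paper's, and it is correct. The shared ingredients are:
\begin{itemize}
\item the cutoffs $(\hcost,\mcost,\lcost)=(\issur_1,\tfrac1\bign\sum_i\issur_i,\issur_\bign)$;
\item the concave-versus-convex comparison of $\sum_{i\le\litn}(\issur_i-\cost)$ and $\sum_{i>\bign-\litn}(\issur_i-\cost)$, which is Lemma~\ref{lem: investment full solution};
\item your remedy for tied sender weights. Diagonalizing the compression of $\rmvar\rlo\rmvar$ to each $\seig$-eigenspace is exactly the $\rlo$-extremal perspective constructed in Lemma~\ref{lem: extremal perspective exists}.
\end{itemize}
By Schur--Horn within each block, that single perspective attains both $\goodur$ and $\badur$ at every capacity. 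This is what justifies using the same $\issur$ in both $B$ and $W$.

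Two minor points. First, your ``iff'' characterizations are off at boundary costs; for instance, capacity $\bign$ is also best-case optimal when $\cost=\lcost$. This is harmless, since (i)--(iii) concern only open intervals. Second, you need that \emph{every} worst-case optimum is an endpoint, not just some optimum. Increasing increments give $W(\litn)\le\tfrac{\bign-\litn}{\bign}W(0)+\tfrac{\litn}{\bign}W(\bign)$. For $0<\litn<\bign$ this is strictly below $\max\{W(0),W(\bign)\}$ once $\cost\neq\mcost$.

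The one step you leave unproved is the parenthetical claim that generically $\hcost>\mcost>\lcost$. You correctly reduce it to non-constancy of $\issur$, but ``which fails generically'' is only asserted, and the paper needs three lemmas here:
\begin{itemize}
\item Lemma~\ref{lem: generically distinct weights}: generic losses have distinct weights, hence unique perspectives and non-constant $\reig$.
\item Lemma~\ref{lem: generically multiple R payoffs}: for such $\rlo$, generic $\slo$ makes $\simmat{\View}{\Viewb}\reig$ have distinct entries. Its proof uses the analytic identity theorem and connectedness of the special orthogonal group.
\item Lemma~\ref{lem: semialgebraic and dense implies generic}: this upgrades density to genericity on pairs.
\end{itemize}

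For mere non-constancy there is a shorter route. Let $\tilde\View:=\rmvar\View$, which is orthogonal, and $A:=\tilde\View^\top\rmvar\rlo\rmvar\tilde\View$; then $\issur_i=A_{ii}$. Suppose this diagonal were constant for all orthonormal bases near some basis. Rotating by angle $\theta$ in the $(i,j)$-plane changes $A_{ii}-A_{jj}$ into $\cos2\theta\,(A_{ii}-A_{jj})+2\sin2\theta\,A_{ij}$. Constancy for all small $\theta$ forces $A_{ij}=0$ for every $i\neq j$. Then $\rmvar\rlo\rmvar$ is a multiple of the identity, so $\reig$ is constant, a contradiction.

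To conclude, first approximate any pair by pairs with distinct weights. Then note that slightly rotating the whitened eigenbasis of such an $\slo$, keeping its weights, is a small perturbation of $\slo$. Hence the pairs with non-constant $\issur$ form a dense semialgebraic set, which is generic by Lemma~\ref{lem: semialgebraic and dense implies generic}.
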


R's ability to change her capacity also changes the way he ranks different senders. If R expects the worst equilibrium, he invests the same amount regardless of S. The reason is that the value of the average issue does not depend on S's perspective:
\[
\frac{1}{\bign}\sum_{i=1}^{\bign}\issur_i 
= \frac{1}{\bign}\sum_{i=1}^{\bign}\sum_{j=1}^{\bign}\codet{\state^{\View}_{i}}{\state^{\Viewb}_{j}}\reig_j 
= \frac{1}{\bign}\sum_{j=1}^{\bign}\reig_j\sum_{i=1}^{\bign}\codet{\state^{\View}_{i}}{\state^{\Viewb}_{j}}
= \frac{1}{\bign}\sum_{j=1}^{\bign}\reig_j.
\]

By contrast, R cares a lot about which S he faces under favorable selection, a fact which we now demonstrate.
\addtocounter{example}{-2}
\begin{example}[continued] \label{ex: CEO-manager with investment}
    Let us return to the setting of \autoref{ex: CEO-manager} where $\bign=2$, the variance-covariance matrix is the identity $\mvar=I$, and the loss matrices are
    \[
    \slo =\twomat{3}{1}{1}{3} \text{ and }\rlo = \twomat{98}002.
    \]
    S and R's corresponding perspectives and associated weights  are given by
    \[
    \View = \tfrac{1}{\sqrt{2}} \left[\twovec{1}{1} \twovec{1}{-1} \right], (\seig_1,\seig_2)=(4,2) \quad \text{and} \quad \Viewb=\left[\twovec{1}{0} \twovec{0}{1} \right], (\reig_1,\reig_2)=(98,2).
    \]
    Suppose further that R publicly chooses the value of $\litn$ at a cost of $25\litn$ before the game. Since both $\view_1$ and $\view_2$ gives R a payoff of $50$, R chooses to pay full attention to S under both favorable and adversarial equilibrium selection. Thus, we get $\litnb = \litnw=2$, and R's total payoff is $50$. 

    What if S's loss is given by $\slob=R$ instead? Under adversarial selection, R still chooses full attention and gets a payoff of $50$. The same is not true under favorable selection. For an explanation, note that if $\litn=1$, S can either communicate $\state_1$ or $\state_2$ to R, giving R a payoff of $98$ under the former, and a payoff of $2$ under the latter. Therefore, an R who expects his best equilibrium chooses to hear only the value of $\state_1$, getting a net payoff of $98-25=73$. Note this payoff is strictly higher than under $\slo$. \qed
\end{example}

The example compares an S who is minimally synced with R with an S who is as synced with R as possible, and shows R prefers the latter over the former. This ranking, however, is not confined to the extremes of the synchronization order: the more synced S is with R, the better off R is under favorable selection. The reason is that a more synced S is better for R under favorable selection for every $\litn$, and so is clearly also better under the optimal $\litn$. 

Notice though that having a more synced S does not mean that R pays S greater attention. The example is a case in point: R chooses a smaller $\litnb$ when faced with the maximally synced S than he chooses when faced with a minimally synced one. Our next result shows this comparative static  holds more generally, applying to other pairs of senders who are are strictly ranked via the synchronization order.

\begin{proposition}\label{lem: optimal investment switching}
Suppose $\rlo$ has perspective $\Viewb$, 
and let $\hslo$ and $\lslo$ be two sender losses whose unique perspectives are $\hView$ and $\lView$, respectively. If $\hView\succsim_\Viewb\lView$ and $\goodur(\hslo,\rlo) \neq \goodur(\lslo,\rlo)$ for some $\litn$, then some $\hhcost>\hcost\geq\lcost>\llcost>0$ exist such that the set of R-best-case optimal investments $\hoptlitnb$ and $\loptlitnb$ for $\hslo$ and $\lslo$, respectively, satisfy:\footnote{We order $\hoptlitnb$ and $\loptlitnb$ via the strong set order.}
\begin{enumerate}[(i)]
    \item If $\llcost<\cost<\lcost$, then $\hoptlitnb<\loptlitnb$;
    \item If $\hcost<\cost<\hhcost$, then $\hoptlitnb>\loptlitnb$;
    \item If $\cost<\llcost$ or $\cost>\hhcost$, then $\hoptlitnb=\loptlitnb$.
\end{enumerate}
Consequently, the above holds for generic $\rlo$-weights $\reig$ whenever $\hView\succ_\Viewb\lView$.
\end{proposition}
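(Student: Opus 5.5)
The plan is to reduce everything to a comparison of two sorted issue-value vectors and then read off the cutoffs directly from the first-order characterization of $\optlitnb$ given before \autoref{prop: investment}.

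\textbf{Reduction to sorted values.} For $m=1,2$, let $\issur^m:=\codet{\View_m}{\Viewb}\reig$, and write $\issur^m_{(1)}\geq\cdots\geq\issur^m_{(\bign)}$ for its decreasing rearrangement. Each $\slo_m$ has a unique perspective, so by \autoref{prop: eqm char} and \autoref{prop: eqlbm payoffs} R's best equilibrium payoff at capacity $\litn$ is $\goodur(\slo_m,\rlo)=\sum_{i\le\litn}\issur^m_{(i)}$. R's best-case objective is then $\sum_{i\le\litn}(\issur^m_{(i)}-\cost)$. This is concave in $\litn$, so the set of optimal capacities consists of the $\litn$ with $\issur^m_{(\litn)}\geq\cost\geq\issur^m_{(\litn+1)}$, using the conventions $\issur_{(0)}=+\infty$ and $\issur_{(\bign+1)}=-\infty$. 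All entries are strictly positive, since $\reig\gg0$ and the similarity matrix is bistochastic by \eqref{eq: Rsqr is row stochastic}--\eqref{eq: Rsqr is column stochastic}. Totals agree: $\sum_i\issur^m_i=\sum_j\reig_j$.

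\textbf{Majorization.} Since $\hView\succsim_\Viewb\lView$, \autoref{prop: ranking sender perspectives} (iii)$\Rightarrow$(i), applied for every $\litn$ to this particular $\rlo$, gives $\sum_{i\le\litn}\issur^1_{(i)}\geq\sum_{i\le\litn}\issur^2_{(i)}$ for all $\litn$. Hence $\issur^1$ majorizes $\issur^2$. The hypothesis that $\goodur$ differs for some $\litn$ means the sorted vectors are not equal.

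\textbf{Cutoffs.} Let $j$ be the first index with $\issur^1_{(j)}\neq\issur^2_{(j)}$. Equal prefix sums up to $j-1$ together with the majorization inequality force $\issur^1_{(j)}>\issur^2_{(j)}$. Symmetrically, let $l$ be the last index where the entries differ. Equal total sums then force $\issur^1_{(l)}<\issur^2_{(l)}$. Clearly $j\le l$. I would set
\[
\hhcost:=\issur^1_{(j)},\quad \hcost:=\issur^2_{(j)},\quad \lcost:=\issur^2_{(l)},\quad \llcost:=\issur^1_{(l)}.
\]
These satisfy $\hhcost>\hcost\geq\lcost>\llcost>0$, because the sorted $\issur^2$ is decreasing in the index and all entries are positive. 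The three regimes are then checked as follows.
\begin{enumerate}[(i)]
    \item For $\cost\in(\hcost,\hhcost)$: $\issur^2_{(j-1)}=\issur^1_{(j-1)}\geq\issur^1_{(j)}>\cost>\issur^2_{(j)}$, so every $\slo_2$-optimal capacity is at most $j-1$, while every $\slo_1$-optimal capacity is at least $j$. This gives $\hoptlitnb>\loptlitnb$ in the strong set order.
    \item For $\cost\in(\llcost,\lcost)$: the mirror argument at index $l$ uses the equal tails beyond $l$ and gives $\hoptlitnb<\loptlitnb$.
    \item For $\cost>\hhcost$: both vectors coincide on indices below $j$, and both have entry at most $\hhcost<\cost$ at index $j$ and beyond. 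So the two optimal sets are determined by the same common entries and are equal. The case $\cost<\llcost$ is symmetric, using the common tail.
\end{enumerate}
The step needing most care is the bookkeeping of non-unique optimal capacities at ties. I would state the optimal sets as integer intervals and compare them in the strong set order, which the open intervals above are designed to avoid.

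\textbf{Generic part.} If $\hView\succ_\Viewb\lView$ strictly, then by \autoref{prop: ranking sender perspectives} there is some $\litn$ and some weight vector for which the best payoffs differ. For fixed $\View_1,\View_2,\Viewb$, the set of $\reig$ on which all $\bign$ prefix-sum comparisons hold with equality is contained in a finite union of proper linear subspaces, one for each choice of orderings. This set is therefore non-generic, and off it the first part applies. The main obstacle I expect is verifying this properness: I need that strictness of $\succ_\Viewb$, which is defined through $\litn$-average rows, yields for some ordering a nonzero linear functional of $\reig$. I would obtain it from the (ii)/(iii) direction of \autoref{prop: ranking sender perspectives} failing in reverse.
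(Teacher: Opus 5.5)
Your proposal is correct and follows the paper's proof essentially step for step. Your first and last indices $j,l$ at which the sorted vectors differ are exactly the paper's $\llitn$ and $\hlitn=1+\hlitn_0$ (the first and last indices with a strictly positive partial-sum gap), so your cutoffs coincide with the paper's and the three regimes come from the same first-order characterization (\autoref{lem: investment full solution}). For the generic part, the paper gets properness of each $\ker(\hsomemat-\perm\lsomemat)$ from $\hView\nsim_\Viewb\lView$: if $\hsomemat=\perm\lsomemat$, the rows are only permuted and the ranking would reverse. This is equivalent to your witness $\reig_0$ from \autoref{prop: ranking sender perspectives} failing in reverse. Note that you need properness for every pair of orderings, not just some, and a single $\reig_0$ at which $\hsomemat\reig_0$ is not a rearrangement of $\lsomemat\reig_0$ lies outside all of these subspaces at once, so your route delivers exactly that.
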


Thus, under favorable selection, a generic R listens more to a more synchronized S when attention is moderately expensive, and listens less when attention is somewhat cheap. Intuitively, the more synchronized an S is with R, the clearer S communicates about the issues R cares least and most about. Consequently, the more synchronized S is with R, the value of the best issue S can communicate increases, while value of the worst issue S can communicate decreases. Therefore, when costs are high, making S more synchronized might result in R switching from not listening to S all, to agreeing to listen to the S-issues R values most. Similarly, when costs are low, making S more synchronized could push R to reduce his attention by dropping the S-issues that are less important to him.

\section{What do you want us to discuss?}\label{sec: disc}

We conclude by discussing the implications of some of our modeling assumptions. 

\paragraph{Partial Rank.} The number $\litn$ parametrizes the capacity of R's attention. In our main analysis, we assume  R utilizes all of this capacity, requiring R's strategy to have full rank. 
This assumption represents a scenario in which R's attention is free at the margin at the time he communicates with S. 
An alternative model would allow R to choose an arbitrary linear decision rule $\rstrat\in\matnk$ that need not have full rank; hence capacity up to $\litn$ is free at the margin, but R may abstain from using it. 

Given our analysis, one can characterize the equilibrium outcomes of this alternative game. For each $\litnt\in\{0,\ldots,\litn\}$ and each $\slo$-perspective $\View$, we can consider a strategy profile in which S sends message $\mess=(\state_i^\View\mathbf1_{i\leq\litnt})_{i=1}^\bign$
and R chooses action $\act$ such that $\act_i^\View=\mess_i\mathbf1_{i\leq\litnt}$. In words, S fully communicates the first $\litnt$ issues according to her perspective, and R takes these at face value and ignores the remaining dimensions of the message. Our analysis extends to show all of these strategy profiles are equilibria, and every equilibrium has an outcome-equivalent one from this family.\footnote{In particular, all equilibria of the original game remain equilibria even though R has new deviation opportunities.} 

Note that exhausting the capacity on communicating this perspective is a strong Pareto improvement, so every equilibrium with $\litnt<\litn$ constitutes a coordination failure---the most extreme one having $\litnt=0$ so that action zero is always chosen.

These coordination failures, however, rely on a fragile kind of self-fulfilling expectations: R ignores some dimensions because he is certain that they are not helpful for listening to S, while S does not use these dimensions because she expects R to ignore them. These expectations are quite fragile, however, as they rely on each player being absolutely certain that the other is wasting some capacity. For an illustration, suppose R has a positive probability of being restricted to full-rank strategies, and S sends her message in ignorance of this realization. Then one can show that every equilibrium has R using a full-rank decision rule even when he is not forced to (and using the same decision rule in either case). Therefore, one can think of our full-rank restriction to studying equilibria that are robust to the possibility of non-wastefulness.

\paragraph{Uncenteredness} We have assumed to this point that the state is centered, that is, it has zero expectation. Suppose now that the state may have a nonzero expectation $
\defstate$. The comparison between this model and our original model depends on what we assume about R's space of available strategies.

First, suppose R's space of strategies is the same as in our original model, meaning R must choose a full-rank linear strategy. In this strategy space the zero action occupies a special role: No matter what strategy R chooses, S always has some message to send (the zero message) that ensures the zero action is played. In our main model this fact---combined with linearity of S's best reply---results in R's action being an unbiased estimate of the state. The same is not true in the uncentered model, since the range of R's strategy might not contain $\defstate$. In this case, 
R's decision becomes a biased estimate of the state. This bias obviously has welfare consequences, but it also shapes R's incentives. Whereas in our main model R acts to minimize his exposure to (weighted) variance of the state, in this model he balances this goal against minimizing his exposure to the bias. In fact, a straightforward modification of \autoref{sec: proofs}'s \autoref{lem: trace payoff calculation} proof tells us this model is identical---with the same set of equilibria and resulting expected payoffs---to that with a modified covariance matrix of $\mvar+\defstate\defstate^\top$. Hence, imposing a default action different from the default state is equivalent to R being more uncertain about the state in the direction of the imposed bias in decision making.

An alternative model is one in which R chooses a pair $(\defact,\rstrat)$, where $\defact\in\real^\bign$ is a default action and $\rstrat\in\matnk$ has full rank. This model is essentially equivalent to our general model. Indeed, R incentives require that the expected action be equal to the expected state. As a consequence, equilibrium communication and play is identical to our original model with a modified state $\stateb=\state-\defstate$. Up to outcome equivalence, equilibria take the following form: We decompose the centered state $\stateb$ according to some S perspective, R chooses a default action $\defstate$, and the players communicate this perspective concerning $\stateb$. So if R can freely choose his default action, then our centeredness assumption is purely a normalization.

\paragraph{Bias}
Our model assumes the players have the same ideal action given the state. This assumption allows us to focus on the incentive frictions that stem from our bounded rationality assumption. Nevertheless, one might wonder how such bounded rationality interacts with a misalignment between S's and R's ideal action. Specifically, suppose S's ideal action in state $\state$ is $\state+\bias$ for some nonzero vector $\bias\in\real^\bign$. 

Suppose that, as discussed in the case of an uncentered state, R may choose a default action in addition to a linear decision rule. Hence, R's strategy is a pair $(\defact,\rstrat)$. S's incentives now require the action to be a rank-$\litn$ linear function of her ideal action $\state+\bias$. Meanwhile, R's incentives require that the average action be zero (which is the average state). Hence, any equilibrium admits an outcome-equivalent equilibrium in which the default action is zero. Consequently, the conditions for equilibrium in this modified model are the same as in our main model, but with the added restriction that R's resulting action is simultaneously both a linear function of $\state$ and a linear function of $\state+\bias$. In the language of \autoref{prop: eqm char}, all equilibria are (up to outcome equivalence) given by strategy profiles that communicate an $\slo$-perspective $\View$ such that every $i\in\{1,\ldots,\litn\}$ has $\bias^\View_i=0$. We can interpret this orthogonality condition as saying S cannot be biased about any communicated issue. Unfortunately, these equilibrium conditions are impossible to satisfy for most $\slo$ matrices.

\paragraph{Commitment}

We have assumed the players' play to be strategically simultaneous: R chooses a decision rule without observing how S communicates, and S (knowing the state) decides how to communicate without observing R's decision rule. Other natural communication protocols would allow one of the players to commit to their behavior.

First, suppose S could commit to a strategy, and R would respond with his choice of a full-rank $\bign\times\litn$  matrix. Such a setting would amount to a Bayesian persuasion model \citep{Kamenica2011} in which R is subject to our imposed bounded rationality constraint. The next result establishes that the solution to this persuasion problem is in fact an equilibrium. Hence, the sole benefit of commitment power to S is the ability to select an equilibrium.

\begin{proposition}\label{prop: persuasion}
A strategy profile is S-preferred if and only if it is outcome equivalent to an S-preferred equilibrium.
\end{proposition}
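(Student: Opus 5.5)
Since the statement leaves ``S-preferred strategy profile'' implicit, I read it as a profile $(\sstrat,\rstrat)$ in which $\rstrat$ is a best reply to $\sstrat$ in the sense of \eqref{eq: R-BR}, and which maximizes $\bbE[\us]$ among all such profiles. That is, S commits to $\sstrat$ and R best responds, with ties broken in S's favor. The plan is to show that the S-optimal persuasion value equals $\max\{\sum_{i\in\indset}\seig_i:\ |\indset|=\litn\}$, which by \autoref{prop: eqlbm payoffs} is S's best equilibrium payoff. We then show that any profile attaining this value is outcome equivalent to an equilibrium communicating the top-$\litn$ issues of an $\slo$-perspective.

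\emph{Step 1 (lower bound).} By \autoref{prop: eqm char}(i), every profile communicating an $\slo$-perspective is an equilibrium, so in particular $\rstrat$ is a best reply to $\sstrat$. Ordering the perspective so that the $\litn$ largest weights come first, \autoref{prop: eqlbm payoffs} gives S the payoff $\sum_{i\le\litn}\seig_{(i)}$. This payoff is therefore achievable under commitment.

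\emph{Step 2 (upper bound).} Take any profile and let $\act=\rstrat\sstrat(\state)$. The action lies in the $\litn$-dimensional subspace $\ran\rstrat$. Write $L=\sqrt{\mvar}^{-1}\ran\rstrat$ in prior-normalized coordinates and set $\tilde\state=\sqrt{\mvar}^{-1}\state$, so that $\bbE\tilde\state\tilde\state^\top=I$. S's expected loss is $\bbE\|\slo^{1/2}(\act-\state)\|^2$, and it is at least the loss she would incur by choosing the best point of $\ran\rstrat$ state by state. That point is the $\slo$-orthogonal projection of $\state$ onto $\ran\rstrat$, a linear map $\Pi$ of rank $\litn$, so the loss is at least $\Tr\big((I-\Pi)^\top\slo(I-\Pi)\mvar\big)$. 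I will minimize this quantity over rank-$\litn$ linear maps $M$ rather than projections. In normalized coordinates the expression becomes $\bbE\|\slo^{1/2}\sqrt{\mvar}(\tilde\state-\tilde M\tilde\state)\|^2$. By Eckart--Young, the minimum over rank-$\litn$ maps $B=\slo^{1/2}\sqrt{\mvar}\tilde M$ of $\|\slo^{1/2}\sqrt{\mvar}-B\|_F^2$ equals the sum of the $\bign-\litn$ smallest eigenvalues of $\sqrt{\mvar}\slo\sqrt{\mvar}$, and these are exactly the smallest $\bign-\litn$ weights $\seig$. Since the payoff without communication is $\sum_i\seig_i=\Tr(\slo\mvar)$, S's payoff is at most the sum of the top $\litn$ weights. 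This is the same bound as in Step 1, so the two values coincide. Note that this step does not use R's best reply at all, so commitment cannot beat this bound even if R were forced to cooperate.

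\emph{Step 3 (characterizing optimizers).} Now suppose a profile attains the bound. Then both inequalities in Step 2 must hold with equality. First, $\act=\Pi\state$ almost surely, so S's messages are state-by-state optimal given $\rstrat$, which is exactly \eqref{eq: S-BR} up to measure-zero modifications that preserve outcome equivalence. Second, $\Pi$ must attain the Eckart--Young minimum. Together with R's best reply \eqref{eq: R-BR}, which holds by the definition of the persuasion profile, this makes the profile an equilibrium (after the modification), hence outcome equivalent to an S-preferred equilibrium. The converse direction is immediate from Step 1 and the equality of the values.

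The main obstacle is the equality case in Step 2. The step from ``expected-loss optimal'' to ``pointwise optimal'' needs care, because the profile only has to be optimal on the support of $\state$, and outcome equivalence should absorb the difference. When the weights are not distinct, the Eckart--Young minimizer is not unique, and one must check that the optimal subspaces still yield $\slo$-perspective-compatible outcomes. To handle this, I would invoke \autoref{prop: eqm char}(ii) directly once \eqref{eq: S-BR} holds almost surely, rather than analyzing the minimizers explicitly.
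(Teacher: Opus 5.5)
You read ``S-preferred'' as ``S-optimal among profiles in which \eqref{eq: R-BR} holds.'' The paper means S-optimal among \emph{all} strategy profiles, with no incentive constraint on either side. Its proof identifies S-preferred profiles with the best Nash equilibria of the common-interest game in which R also has loss $\slo$, and the remark after the proposition says S ``could choose any strategy profile at all.'' Under that reading your Steps 1 and 2 go through; Step 2 indeed never uses R's incentives. Step 3, however, has a gap. It obtains \eqref{eq: R-BR} ``by definition,'' but that is exactly what must be proved: the content of the proposition is that R's incentive constraint is automatically satisfied at S's unconstrained optimum. Invoking \autoref{prop: eqm char}(ii) does not help, since it presupposes an equilibrium.

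The missing idea is the paper's: R's best-reply condition does not depend on R's loss matrix. After your null-set modification, $\sstrat$ is linear. Unconstrained optimality then implies that $\rstrat$ maximizes S's expected payoff given $\sstrat$, so $\rstrat$ is a best reply for a receiver whose loss is $\slo$. By \autoref{lem: R best response}, whose argument works for any positive definite loss, this holds iff $\rstrat\sstrat$ is an $\inv{\mvar}$-projection. That condition is in turn equivalent to \eqref{eq: R-BR} for the actual $\rlo$. You could instead stay within your framework and use the equality case of Eckart--Young. Every minimizer has the form $AP$, where $A=\slo^{1/2}\rmvar$ and $P$ is the orthogonal projection onto a span of top-$\litn$ right singular vectors. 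Hence $\inv{\rmvar}\Pi\rmvar=P$ is symmetric, so $\Pi$ is an $\inv{\mvar}$-projection.

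Under your own narrower reading, the forward direction is fine, but the converse is not ``immediate.'' A profile that is merely outcome-equivalent to an equilibrium $(\sstrat^*,\rstrat^*)$ must still be shown to satisfy \eqref{eq: R-BR}. It does: $\rstrat\sstrat(\state)=\rstrat^*\sstrat^*\state$ almost surely, and full-rank $\mvar$ then forces $\ran\rstrat=\ran\rstrat^*$. So $\rstrat=\rstrat^*H$ and $\sstrat=\inv{H}\sstrat^*$ almost surely for some invertible $H$, and R's problem is only reparametrized. This needs to be argued, though.

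On the route itself, the paper gets existence abstractly, via compactness of the rank-$\litn$ $\slo$-projections. It then transfers optimality through the common-interest game and \autoref{cor: S cares less about R}, without ever computing the value. Your Eckart--Young bound identifies the optimal value as the sum of S's $\litn$ largest weights and gives existence constructively. Once the best-reply step above is added, that is a more informative argument. One minor slip: $\sum_i\seig_i=\Tr(\slo\mvar)$ is S's expected loss at the zero action, not her payoff, which is normalized to zero.
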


Note the above result says that neither player's incentive constraints in our game impose costs on S. If S could choose any strategy profile at all in our game, she would effectively be choosing among all state-contingent action choices, subject to the constraint that all chosen actions live in some $\litn$-dimensional subspace of $\real^\bign$. The proposition shows S's preferred strategy profile simply communicates the $\litn$ most important issues from her perspective, which is an equilibrium of our model. We remark that Corollary~1 from \cite{hancart2025simple} essentially yields a specialization of this result for case in which $\slo = \rlo$, there are only $\bign=2$ dimensions, and R's capacity is $\litn=1$.

The situation is quite different if R can commit. If R first publicly commits to a strategy, the interaction is one of \emph{subspace delegation}: R announces a $\litn$-dimensional set of actions from which S can freely choose. This delegation model can be seen as an organizational design problem. Maintaining too much flexibility in an organization is infeasible, but an organization can potentially decide which kinds of flexibility to invest in.\footnote{This problem is similar to \cite{frankel2016delegating} in that it studies a variant of \cite{holmstrom1980theory} with a multidimensional state. Relative to \cite{frankel2016delegating}, our subspace delegation problem replaces an additive bias with a discrepancy in priorities and puts further structure on admissible delegation sets.}

The next result shows that R benefits from this commitment power in most specifications of our model. In particular, S commitment and R commitment have very different implications for issue communication.

\begin{proposition}\label{prop: delegation}
R strictly benefits from commitment for generic $(\slo,\rlo)$.
\end{proposition}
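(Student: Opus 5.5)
The plan is to compare R's best equilibrium with the optimal subspace delegation. Weak improvement is immediate. Strict improvement will come from a first-order perturbation argument showing that the best equilibrium subspace is generically not a critical point of R's delegation problem.

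First I will normalize coordinates. Replacing $\state$ by $\mvar^{-1/2}\state$, and $\slo,\rlo$ by $\rootslo$-type conjugates $\sqrt{\mvar}\slo\sqrt{\mvar}$ and $\sqrt{\mvar}\rlo\sqrt{\mvar}$, lets me assume $\mvar=I$. Then an $\slo$-perspective $\View$ is an orthonormal eigenbasis $p_1,\ldots,p_\bign$ of $\slo$ with eigenvalues $\seig_i$.

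Under commitment, R chooses a $\litn$-dimensional subspace $W$ and S picks her favorite point of $W$. That point is $\Pi_W\stater$, where $\Pi_W$ is the $\slo$-orthogonal projection onto $W$. R's expected loss is therefore
\[
L(W)=\Tr\big(\rlo(I-\Pi_W)(I-\Pi_W)^\top\big).
\]
By \autoref{prop: eqm char}, every equilibrium outcome is delegation to $W=\mathrm{span}\{p_1,\ldots,p_\litn\}$ for some $\slo$-perspective. Against such a $\rstrat$, S's best reply is exactly her equilibrium behavior, so R's commitment value is at least his best equilibrium payoff. It remains to show that the minimizing equilibrium subspace can be strictly improved upon.

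Fix an equilibrium subspace $W_0=\mathrm{span}\{p_1,\ldots,p_\litn\}$, and pick $i\le\litn<j$. I will perturb it to $W_t$, obtained by replacing $p_i$ with $p_i+tp_j$. The $\slo$-orthogonal complement of $W_t$ contains the other $p_m$ with $m>\litn$, $m\ne j$, and also $p_j-t(\seig_j/\seig_i)p_i$. From this one reads off the derivative of the projection:
\[
\dot\Pi=p_jp_i^\top+\tfrac{\seig_j}{\seig_i}p_ip_j^\top .
\]
Differentiating $L$ gives $\dot L=-2\Tr(\rlo(I-\Pi)\dot\Pi^\top)$. At $t=0$ we have $(I-\Pi)p_i=0$, so only the second term of $\dot\Pi$ contributes, and
\[
\dot L(0)=-2\tfrac{\seig_j}{\seig_i}\,p_i^\top\rlo p_j .
\]
Hence if $p_i^\top\rlo p_j\neq0$ for some $i\le\litn<j$, moving $t$ in the appropriate direction strictly lowers R's loss. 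That gives a strict gain over that equilibrium.

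The main obstacle is the genericity step, which must hold at the R-best equilibrium even though it is selected endogenously. I will handle this by a union bound over finitely many configurations. For generic $\slo$ the perspective is unique up to issue equivalence (\autoref{lem: generically distinct weights}), so there are finitely many equilibrium subspaces: one per $\litn$-subset $J$ of the perspective. For each $J$, the bad event is that $p_i^\top\rlo p_j=0$ for all $i\in J$, $j\notin J$, i.e.\ $\mathrm{span}\{p_i\}_{i\in J}$ is $\rlo$-invariant. For fixed $\slo$ this is a nontrivial linear condition on $\rlo$, because $0<\litn<\bign$ and some positive definite $\rlo$ has nonzero entry $p_i^\top\rlo p_j$. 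So it defines a proper algebraic (measure-zero, closed nowhere dense) subset of pairs $(\slo,\rlo)$, and the finite union of these sets is nongeneric. Off this set, every equilibrium, and in particular R's best one, is strictly beaten by a nearby subspace delegation, so R strictly benefits from commitment.
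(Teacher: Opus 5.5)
Your proof is correct and follows essentially the paper's route. In your $\mvar=I$ normalization, the condition $p_i^\top\rlo p_j=0$ for all $i\in J$, $j\notin J$ says exactly that an equilibrium projection is also an $\rlo$-projection, which is the requirement the paper extracts from its general first-order condition in \autoref{lem: FOC for delegation}; your explicit tilt of the communicated subspace reaches the same condition more directly, and your genericity step is the same linear-in-$\rlo$ argument. The one loose spot is calling the bad set of pairs ``algebraic'': it is only semialgebraic, since eigenvectors are not polynomial in $\slo$, so the passage from ``null and nowhere dense for each fixed generic $\slo$'' to genericity in $\Loss^2$ should go through a density-plus-semialgebraicity argument like \autoref{lem: semialgebraic and dense implies generic}.
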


For intuition, let us return to CEO and manager of \autoref{ex: CEO-manager}. Recall that the CEO (R) chiefly wants to choose the correct budget for division $1$, whereas the manager (S) mainly wants to set the correct total budget. One equilibrium communicates the ideal total budget, one communicates the ideal difference between the two divisions' budgets, and both give a payoff of $50$ to the CEO. But what if the CEO could commit to only take the manager's advice on division 1, and to stick with the ex-ante optimal budget for the other division---i.e., $\rstrat=[1\ 0]$? A direct computation shows that the manager would respond by sending message $\mess=\state_1+\tfrac13\state_2$. She inflates the requested budget for division 1 when the ideal budget of division 2 is high, because better targeting the total budget outweighs the added distortion on relative budgets. Meanwhile, this strategy profile gives the CEO an expected payoff of $\tfrac{784}{9}\approx87.1$.

The CEO is able to attain the above payoff substantially higher than his equilibrium payoff of $50$ only because he is using his commitment power. For an explanation, note that the CEO's best linear estimate of $\state_2$ when hearing a message of $\mess$ is a strictly positive multiple of $\mess$, and that his best estimate of $\state_1$ deflates $\mess$ toward zero, because $\mess$ equals $\state_1$ plus independent ``noise''. Hence, the CEO would like to deviate in both coordinates of $\rstrat$. Committing not do so enables him to benefit from superior advice.

\bibliography{ref-linearmachines}

\appendix
\newpage

\section{Proofs for Main Results}\label{sec: proofs}

\begin{definition}
For any positive definite matrix $\somemat\in\matnn$:\begin{enumerate}
    \item Let $\ip\cdot\cdot_\somemat$ denote the inner product $\real^\bign$ given by $\ip{\somevec}{\somevecb}:=\somevec^\top\somemat\somevecb$.
    \item Say $\strats\in\matnn$ is an \textbf{$\somemat$-projection} if any of these equivalent conditions hold: \begin{itemize}
        \item $\strats$ is an orthogonal projection on the inner product space $\paren{\real^\bign,
        \ip\cdot\cdot_\somemat}$.
        \item $\strats$ is idempotent (i.e.,  $\strats^2=\strats$) with $\strats^\top\somemat=\somemat\strats$.
        \item $\sqrt{\somemat}\strats\inv{\sqrt{\somemat}}$ is idempotent and symmetric.
        \item $\sqrt{\somemat}\strats\inv{\sqrt{\somemat}}$  is an orthogonal projection on the $\real^\bign$ with the dot product.
        \item $\strats^\top\somemat\strats=\somemat\strats$.
    \end{itemize}
\end{enumerate}
\end{definition}

\begin{notation}
Left multiplication by any matrix $\sstrat\in\matkn$ is an S strategy, which we also denote by $\sstrat$. Conversely, any linear S strategy $\sstrat$ can be represented as left multiplication by a matrix, and we also let $\sstrat\in\matkn$ denote this matrix.
\end{notation}

\begin{lemma}\label{lem: S best response}
Let $(\sstrat,\rstrat)$ be a strategy profile. Then $\sstrat$ is an S best response to $\rstrat$ if and only if $\sstrat$ is linear and $\strats:=\rstrat\sstrat\in\matnn$ is an $\slo$-projection of rank~$\litn$. Moreover, a unique S best response exists for any R strategy.
\end{lemma}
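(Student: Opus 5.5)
The plan is to solve S's pointwise problem \eqref{eq: S-BR} explicitly. Fix a full-rank $\rstrat\in\Rstrat$ and a state $\stater$. The objective $\messr\mapsto(\rstrat\messr-\stater)^\top\slo(\rstrat\messr-\stater)$ is a quadratic in $\messr$ with Hessian $2\rstrat^\top\slo\rstrat$. Because $\slo$ is positive definite and $\rstrat$ has full column rank, $\rstrat^\top\slo\rstrat\in\mat\litn\litn$ is positive definite, so the objective is strictly convex. Hence the minimizer exists, is unique, and is given by the first-order condition $\rstrat^\top\slo(\rstrat\messr-\stater)=0$. This yields
\[
\sstrat(\stater)=\inv{(\rstrat^\top\slo\rstrat)}\rstrat^\top\slo\,\stater .
\]
Two conclusions follow at once: every best response is linear, and there is exactly one best response. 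Uniqueness holds pointwise at each $\stater$, so the best-response function is unique.

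Next I will verify that $\strats:=\rstrat\sstrat=\rstrat\inv{(\rstrat^\top\slo\rstrat)}\rstrat^\top\slo$ is an $\slo$-projection of rank $\litn$.
\begin{itemize}
\item Idempotence is a direct computation: $\strats^2=\rstrat\inv{(\rstrat^\top\slo\rstrat)}(\rstrat^\top\slo\rstrat)\inv{(\rstrat^\top\slo\rstrat)}\rstrat^\top\slo=\strats$.
\item The symmetry condition also checks directly. We have $\slo\strats=\slo\rstrat\inv{(\rstrat^\top\slo\rstrat)}\rstrat^\top\slo$, which is symmetric, so $\strats^\top\slo=(\slo\strats)^\top=\slo\strats$.
\item For the rank, $\ran\strats\subseteq\ran\rstrat$, and $\strats\rstrat=\rstrat$. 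So $\ran\strats=\ran\rstrat$, which has dimension $\litn$.
\end{itemize}
This gives the ``only if'' direction.

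For the ``if'' direction, suppose $\sstrat$ is linear and $\strats=\rstrat\sstrat$ is an $\slo$-projection of rank $\litn$. I will show that $\sstrat$ coincides with the formula above. First, $\ran\strats\subseteq\ran\rstrat$ and both have dimension $\litn$, so $\ran\strats=\ran\rstrat$. Since $\strats$ is idempotent, it fixes its range, so $\strats\rstrat=\rstrat$. Now apply the characterization $\strats^\top\slo=\slo\strats$. Multiplying on the left by $\rstrat^\top$ gives
\[
\rstrat^\top\slo\strats=(\strats\rstrat)^\top\slo=\rstrat^\top\slo .
\]
Substituting $\strats=\rstrat\sstrat$ then yields $(\rstrat^\top\slo\rstrat)\sstrat=\rstrat^\top\slo$. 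Inverting $\rstrat^\top\slo\rstrat$ recovers exactly the unique best response, so $\sstrat$ is a best response.

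The main point needing care is this converse: the step where the $\slo$-projection property together with equality of ranges pins down $\sstrat$. The key fact is that idempotence, combined with $\ran\strats=\ran\rstrat$, forces $\strats\rstrat=\rstrat$. Everything else is routine linear algebra.

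Geometrically, $\strats$ is the $\ip\cdot\cdot_\slo$-orthogonal projection onto $\ran\rstrat$. This matches the tangency description in the main text.
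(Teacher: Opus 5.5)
Your proof is correct and takes essentially the same route as the paper. Both arguments identify S's pointwise best response with the $\ip\cdot\cdot_\slo$-orthogonal projection of $\stater$ onto $\ran\rstrat$, and both reduce the rank condition to $\ran\strats=\ran\rstrat$. The difference is only in execution: you work with the explicit normal-equation formula $\sstrat=\inv{(\rstrat^\top\slo\rstrat)}\rstrat^\top\slo$ and check the converse algebraically via $\strats\rstrat=\rstrat$, whereas the paper argues abstractly, using that an orthogonal projection is determined by its range and that $\rstrat$ has a linear inverse on its range.
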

\begin{proof}

Given any $\stater\in\real^\bign$, we have: \begin{eqnarray*}
&& \sstrat(\stater)\in\argmax_{\messr\in\real^\litn}
    \curlyb{
    \stater^\top \slo\stater- \paren{\stater-\rstrat\messr}^\top \slo\paren{\stater-\rstrat\messr}
    } \\
&\iff& 
\rstrat\sstrat(\stater)\in\argmin_{\actr\in\ran\rstrat}
    \curlyb{
    \paren{\stater-\actr}^\top \slo\paren{\stater-\actr}
    } \\
&\iff& 
\rstrat\sstrat(\stater) \text{ is the } \ip\cdot\cdot_\slo\text{-nearest point to } \stater \text{ in } \ran\rstrat \\    
&\iff& 
\rstrat\sstrat(\stater) \text{ is the } \ip\cdot\cdot_\slo\text{-orthogonal projection of } \stater \text{ onto } \ran\rstrat.  
\end{eqnarray*}
Therefore, $\sstrat$ is an S best response to $\rstrat$ if and only if $\sstrat$ is linear and $\rstrat\sstrat(\cdot)$ is the $\ip\cdot\cdot_\slo$-orthogonal projection map onto $\ran\rstrat$. 

Because left multiplication by $\rstrat$ is an injective linear map on $\real^\litn$, it follows that this map has an inverse, which is linear. Therefore, given the above equivalence, S has a unique best response to $\rstrat$, namely, the composition of this inverse with the $\ip\cdot\cdot_\slo$-orthogonal projection of $\stater$ onto $\ran\rstrat$ (a linear map).  

Given $\sstrat\in\matkn$, let $\strats:=\rstrat\sstrat\in\matnn$. We have argued $\sstrat$ is an S best response to $\rstrat$ only if $\strats$ is the $\slo$-projection onto $\ran\rstrat$---or, equivalently, an $\slo$-projection with $\ran\strats=\ran\rstrat$. But because $\ran\strats$ is a linear subspace of $\ran\rstrat$, they coincide if and only if they have the same dimension (and the latter has dimension $\litn$).
\end{proof}

\begin{lemma}\label{lem: trace payoff calculation}
Let $(\sstrat,\rstrat)$ be a strategy profile with $\sstrat$ linear, and let $\strats:=\rstrat\sstrat\in\matnn$. Then S and R, respectively, have expected payoff 
$$
\expec{\us}=\Tr\brac{\paren{2\slo\strats-\strats^\top\slo\strats}\mvar}
\text{ and }
\expec{\ur}=\Tr\brac{\paren{2\rlo\strats-\strats^\top\rlo\strats}\mvar}.
$$
Hence, if $\strats$ is a $\inv{\mvar}$-projection, then $$
\expec{\us}=\Tr\paren{\slo\strats\mvar}
\text{ and }
\expec{\ur}=\Tr\paren{\rlo\strats\mvar}.
$$
\end{lemma}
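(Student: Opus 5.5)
Since $\sstrat$ is linear, we have $\act=\strats\state$ with $\strats:=\rstrat\sstrat$. The plan is to expand S's payoff directly and then take expectations via the trace identity; R's payoff follows by the identical computation with $\rlo$ in place of $\slo$.

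For the first formula, write
\[
\us=\state^\top\slo\state-(\state-\strats\state)^\top\slo(\state-\strats\state)=2\state^\top\slo\strats\state-\state^\top\strats^\top\slo\strats\state ,
\]
using symmetry of $\slo$ to merge the two cross terms. Both remaining terms are quadratic forms $\state^\top M\state$ with $M=2\slo\strats-\strats^\top\slo\strats$. Because $\state$ has mean zero and finite covariance $\mvar$,
\[
\expec{\state^\top M\state}=\expec{\Tr(M\state\state^\top)}=\Tr(M\mvar),
\]
which delivers the stated expression. This step also justifies that the expectation is well defined: finite second moments make every term integrable.

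For the second part, suppose $\strats$ is an $\inv{\mvar}$-projection. I will use the characterization $\strats^\top\inv{\mvar}\strats=\inv{\mvar}\strats$, equivalently $\strats$ idempotent with $\strats^\top\inv{\mvar}=\inv{\mvar}\strats$. Multiplying the latter on both sides by $\mvar$ gives $\mvar\strats^\top=\strats\mvar$, so $\strats\mvar$ is symmetric. Then
\[
\Tr(\strats^\top\slo\strats\mvar)=\Tr(\slo\strats\mvar\strats^\top),
\]
by cyclicity. Substituting $\mvar\strats^\top=\strats\mvar$ and idempotence,
\[
\strats\mvar\strats^\top=\strats\strats\mvar=\strats\mvar .
\]
Hence $\Tr(\strats^\top\slo\strats\mvar)=\Tr(\slo\strats\mvar)$, and so $\expec{\us}=2\Tr(\slo\strats\mvar)-\Tr(\slo\strats\mvar)=\Tr(\slo\strats\mvar)$. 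The same holds for R.

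The only step needing care is picking the right equivalent form of $\inv{\mvar}$-projection so that the symmetry of $\strats\mvar$ falls out. Everything else is routine.
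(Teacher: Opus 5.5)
Your proposal is correct and follows essentially the same route as the paper: expand the payoff into the quadratic form with matrix $2\slo\strats-\strats^\top\slo\strats$, take expectations via $\expec{\state^\top\somemat\state}=\Tr(\somemat\mvar)$, and then use $\strats^\top\inv{\mvar}=\inv{\mvar}\strats$ together with idempotence and cyclicity of the trace to reduce $\Tr(\strats^\top\slo\strats\mvar)$ to $\Tr(\slo\strats\mvar)$. You state explicitly that the mean-zero assumption is what gives $\expec{\state\state^\top}=\mvar$; the paper uses this without comment.
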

\begin{proof}
First, any $\somemat\in\matnn$ has 
$$\expec{\state^\top\somemat\state} 
= \expec{\Tr\paren{\somemat\state\state^\top}} 
= \Tr\paren{\somemat\expec{\state\state^\top}} 
= \Tr\paren{\somemat\mvar}.$$
Meanwhile, S's payoff takes the form 
\begin{eqnarray*}
\us&=& \state^\top \slo\state- (\state-\strats\state)^\top \slo(\state-\strats\state) \\
&=& \state^\top\brac{\slo- (I-\strats^\top)\slo(I-\strats)}\state \\
&=& \state^\top\paren{\strats^\top\slo+\slo\strats-\strats^\top\slo\strats}\state \\
&=& \state^\top\paren{2\slo\strats-\strats^\top\slo\strats}\state^\top.
\end{eqnarray*}
The formula for $\expec{\us}$ comes from combining these two calculations. Specializing now to the case that $\strats$ is a $\inv{\mvar}$-projection, we get
$$\Tr\paren{\strats^\top\slo\strats\mvar}
=\Tr\brac{\slo\strats\mvar\paren{\strats^\top\inv{\mvar}}\mvar}
=\Tr\brac{\slo\strats\mvar\paren{\inv{\mvar}\strats}\mvar}
=\Tr\brac{\slo\strats^2\mvar}
=\Tr\brac{\slo\strats\mvar},
$$
yielding the simplified payoff formula. Identical calculations (replacing $\slo$ with $\rlo$) give the payoff formulae for $\expec{\ur}.$
\end{proof}

\begin{lemma}\label{lem: R best response}
Let $(\sstrat,\rstrat)$ be a strategy profile with $\sstrat$ linear, and let $\strats:=\rstrat\sstrat\in\matnn$.
Then $\rstrat$ is an R best response to $\sstrat$ if and only if $\strats$ is a $\inv{\mvar}$-projection. Moreover, a unique R best response exists for any rank-$\litn$ linear S strategy.
\end{lemma}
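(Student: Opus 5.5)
We plan to reduce R's problem to a least-squares problem in prior-normalized coordinates. Fix a linear S strategy $\sstrat\in\matkn$ of rank $\litn$. Because \autoref{lem: trace payoff calculation} applies to any linear $\sstrat$, R's expected loss from choosing $\rstrat$ is
\[
\expec{(\rstrat\sstrat\state-\state)^\top\rlo(\rstrat\sstrat\state-\state)}=\Tr\brac{(\rstrat\sstrat-I)^\top\rlo(\rstrat\sstrat-I)\mvar},
\]
which is a convex quadratic in $\rstrat$. We do not use a best-linear-predictor argument, since the conditional expectation need not be linear. Instead, we use the first-order condition in $\rstrat$ over the open set $\Rstrat$. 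This condition reads $\rlo(\rstrat\sstrat-I)\mvar\sstrat^\top=0$. Since $\rlo$ is invertible, it is equivalent to $\rstrat\sstrat\mvar\sstrat^\top=\mvar\sstrat^\top$.

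Because $\sstrat$ has rank $\litn$ and $\mvar$ is positive definite, $\sstrat\mvar\sstrat^\top$ is invertible. The unconstrained minimizer over $\matnk$ is therefore unique:
\[
\rstrat^*=\mvar\sstrat^\top(\sstrat\mvar\sstrat^\top)^{-1}.
\]
This matrix has full rank $\litn$, because $\sstrat\rstrat^*=I_\litn$. Hence it lies in $\Rstrat$. Since the objective is convex on all of $\matnk$, $\rstrat^*$ is also the unique minimizer over $\Rstrat$. This gives existence and uniqueness of R's best response, and it does not depend on $\rlo$.

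It remains to show that $\rstrat$ satisfies $\rstrat\sstrat\mvar\sstrat^\top=\mvar\sstrat^\top$ if and only if $\strats=\rstrat\sstrat$ is an $\inv\mvar$-projection, meaning $\strats^2=\strats$ and $\strats^\top\inv\mvar=\inv\mvar\strats$ (equivalently $\strats\mvar=\mvar\strats^\top$).

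For the forward direction, take $\rstrat=\rstrat^*$. Then $\sstrat\rstrat^*=I$ gives $\strats^2=\rstrat^*(\sstrat\rstrat^*)\sstrat=\strats$. Also, $\strats\mvar=\mvar\sstrat^\top(\sstrat\mvar\sstrat^\top)^{-1}\sstrat\mvar$ is symmetric, so $\strats\mvar=\mvar\strats^\top$.

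For the converse, suppose $\strats$ is an $\inv\mvar$-projection with $\rstrat$ of full rank.
\begin{itemize}
\item From $\strats^2=\strats$ we get $\rstrat(\sstrat\rstrat-I)\sstrat=0$. Since $\rstrat$ is injective and $\sstrat$ is surjective (both have rank $\litn$), this forces $\sstrat\rstrat=I_\litn$.
\item Symmetry of $\strats\mvar$ then gives $\strats\mvar\sstrat^\top=\mvar\strats^\top\sstrat^\top=\mvar\sstrat^\top\rstrat^\top\sstrat^\top=\mvar\sstrat^\top(\sstrat\rstrat)^\top=\mvar\sstrat^\top$.
\end{itemize}
This is exactly the first-order condition, so $\rstrat$ is the best response.

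The main obstacle is the converse direction, specifically extracting $\sstrat\rstrat=I$ from idempotency. This step needs the rank conditions on both players' strategies. The other points to check are that restricting to full-rank $\rstrat$ causes no loss, and that expectations are finite; the latter holds because $\mvar$ is finite.
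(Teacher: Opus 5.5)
Your argument is correct when $\sstrat$ has rank $\litn$. However, the lemma's first assertion is stated for an arbitrary linear $\sstrat$; only the ``moreover'' clause assumes rank $\litn$. Both directions of your equivalence use that rank condition:
\begin{itemize}
\item The forward direction identifies the best response with $\rstrat^*=\mvar\sstrat^\top(\sstrat\mvar\sstrat^\top)^{-1}$, which requires $\sstrat\mvar\sstrat^\top$ to be invertible.
\item The converse deduces $\sstrat\rstrat=I_\litn$ from idempotency, which requires $\sstrat$ to be surjective.
\end{itemize}
When $\sstrat$ has rank below $\litn$, both steps fail. For $\sstrat=0$, every $\rstrat\in\Rstrat$ is a best response and $\strats=0$ is trivially a $\inv\mvar$-projection. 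Yet $\sstrat\rstrat=0\neq I_\litn$, and $\rstrat^*$ is undefined. So as written you prove only a special case of the first claim.

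The repair is short, needs no rank condition on $\sstrat$, and is essentially the paper's route. Your first-order characterization (convexity of the loss on all of $\matnk$ plus openness of $\Rstrat$) never uses the rank of $\sstrat$: $\rstrat\in\Rstrat$ is a best response iff $(\strats-I)\mvar\sstrat^\top=0$. Because $\rstrat$ has full column rank, $\rstrat^\top$ maps onto $\real^\litn$. So this is equivalent to $(\strats-I)\mvar\sstrat^\top\rstrat^\top=(\strats-I)\mvar\strats^\top=0$, i.e.\ $\strats\mvar\strats^\top=\mvar\strats^\top$. That is one of the listed characterizations of $\strats^\top$ being a $\mvar$-projection, equivalently of $\strats$ being a $\inv\mvar$-projection. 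The paper writes the condition as $\sstrat\mvar(I-\strats^\top)\rlo=0$ and left-multiplies by the injective $\rstrat$, which is the transpose of the same move.

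For the ``moreover'' clause, your closed-form normal-equations solution is a genuinely different and more explicit argument. The paper builds $\rstrat$ as the inverse of $\sstrat$ restricted to the $\inv\mvar$-orthogonal complement of $\ker\sstrat$, and gets uniqueness because an orthogonal projection is determined by its kernel. Your formula makes existence, full rank, and uniqueness immediate, and $\rstrat^*\sstrat$ is exactly that projection.

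Note also that the paper only applies this lemma to S best responses, which have rank $\litn$ by \autoref{lem: S best response}. So your restricted version is all that is used downstream, but it does not prove the lemma as stated.
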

\begin{proof}
Define the differentiable function $\gain:\matnn\to\real$ by letting $$\gain(\stratsb):=\Tr\brac{\paren{2\rlo\stratsb-\stratsb^\top\rlo\stratsb}\mvar}
=2\Tr(\rlo\stratsb\mvar)-\Tr\paren{\stratsb^\top\rlo\stratsb\mvar}.$$
We begin by noting the following conditions are equivalent for an R strategy $\rstrat$: \begin{enumerate}[(i)]
    \item $\rstrat$ is a best response for R.
    \item $\rstrat\in\argmax_{\rstratb\in\matnk \text{ full rank}} \gain(\rstratb\sstrat).$
    \item $\partialop{\epsilon}\big|_{\varepsilon=0} \gain\paren{\brac{\rstrat+\varepsilon\somemat}\sstrat}=0$ for every $\somemat\in\matnk$.
    \item $\rstrat\in\argmax_{\rstratb\in\matnk} \gain(\rstratb\sstrat).$
\end{enumerate}
First, \autoref{lem: trace payoff calculation} tells us (i) and (ii) are equivalent, and (iv) obviously implies (ii). Next, (ii) implies (iii) because the set of full-rank matrices is open in $\matnk$, and so $\rstrat+\varepsilon\somemat$ has full rank for any $\somemat\in\matnk$ and all $\varepsilon\in\real$ close enough to zero. Then, to see (iii) implies (iv), it suffices to see that $\rstratb\mapsto\gain(\rstratb\sstrat)$ is concave. To that end, note that $\rstratb\mapsto\rstratb\sstrat$ and $\stratsb\mapsto2\Tr(\rlo\stratsb\mvar)$ are linear, and the function  $\stratsb\mapsto\Tr\paren{\stratsb^\top\rlo\stratsb\mvar}$ is (as a quadratic and nonnegative function) convex---nonnegativity follows because any $\stratsb$ has $\Tr\paren{\stratsb^\top\rlo\stratsb\mvar}=\Tr\brac{\paren{\sqrt\rlo\stratsb\rmvar}^\top\paren{\sqrt\rlo\stratsb\rmvar}}$ and a positive semidefinite matrix has nonnegative trace. 
Given that the four conditions are equivalent, we work with condition (iii) below. 

Letting $\strats:=\rstrat\sstrat$, any $\somemat\in\matnk$ has 
\begin{eqnarray*}
\gain\paren{\brac{\rstrat+\varepsilon\somemat}\sstrat}
&=&2\Tr(\rlo\strats\mvar) -\Tr\paren{\strats^\top\rlo\strats} 
- \varepsilon^2\Tr\paren{\sstrat^\top\somemat^\top\rlo \somemat\sstrat\mvar}
\\
&&
+2\varepsilon\Tr\paren{\rlo\somemat\sstrat\mvar}
- 2\varepsilon\Tr\paren{\strats^\top\rlo \somemat\sstrat\mvar}\\
\implies 
\tfrac12\partialop{\epsilon}\big|_{\varepsilon=0} \gain\paren{\brac{\rstrat+\varepsilon\somemat}\sstrat}
&=& \Tr(\rlo\somemat\sstrat\mvar)
- \Tr(\strats^\top\rlo \somemat\sstrat\mvar)\\
&=& \Tr(\sstrat\mvar\rlo\somemat)- \Tr(\sstrat\mvar\strats^\top\rlo \somemat) \\
&=& \Tr\brac{\sstrat\mvar(I-\strats^\top)\rlo\somemat} 
\end{eqnarray*}
So $\rstrat$ is a best response for R if and only if 
$\sstrat\mvar(I-\strats^\top)\rlo\somemat$ has zero trace for every $\somemat\in\matnk$, or equivalently if and only if $\sstrat\mvar(I-\strats^\top)\rlo=0$. Right multiplying this equation by the invertible matrix $\rlo^{-1}$ and left multiplying by the full-column-rank matrix $\rstrat$, we can equivalently express this condition as $\strats\mvar(I-\strats^\top)=0$. Transposing, $\rstrat$ is a best response for R if and only if $\mvar\strats^\top=\strats\mvar\strats^\top$. This condition is equivalent to $\strats^\top$ being a $\mvar$-projection, which in turn is equivalent to $\strats$ being a $\inv{\mvar}$-projection.

Let us next see that any full-rank linear S strategy $\sstrat$ admits an R best response. Given the above argument, we need to find an R strategy $\rstrat$ such that $\rstrat\sstrat$ is a $\inv{\mvar}$-projection. And indeed, by the rank-nullity theorem, left multiplication by $\sstrat$ bijectively maps the $\inv{\mvar}$-orthogonal complement $\vectspace$ of $\ker\sstrat$ to $\ran\sstrat=\real^\litn$, so we can choose $\rstrat$ so that left multiplication by it is the inverse of this map. Then, left multiplication by $\rstrat\sstrat$ fixes $\vectspace$ and annihilates $\ker\sstrat$, making it the $\inv{\mvar}$-projection onto $\vectspace$.

Finally, let us observe that the above R best response is unique. The reason is that any R strategy has full column rank, and so the composition of any R strategy with a given linear S strategy has the same kernel. Uniqueness then follows from an orthogonal projection being determined by its kernel.
\end{proof}

\begin{lemma}\label{lem: perspective representation of A matrices}
For $\strats\in\matnn$, letting  $\normproj:=\brac{\mathbf1_{i=j\leq\litn}}_{i,j=1}^\bign$, the following are equivalent: \begin{enumerate}
    \item The matrix $\strats$ has rank $\litn$ and is both a $\slo$-projection and a $\inv\mvar$-projection.
    \item Some $\slo$-perspective $\View$ exists such that $\strats=\inv{(\View^\top)}\normproj \View^\top$.
\end{enumerate}
\end{lemma}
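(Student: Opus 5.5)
The plan is to translate everything into the prior-normalized coordinates $\somevec\mapsto\inv{\rmvar}\somevec$, where the perspective condition becomes orthonormality and the two projection conditions become symmetry conditions. Given a perspective $\View$, set $\someorth:=\rmvar\View$. By definition of a perspective, $\someorth$ is an orthogonal matrix, so $\View^\top=\someorth^\top\inv{\rmvar}$ and $\inv{(\View^\top)}=\rmvar\someorth$. Hence
\[
\strats=\inv{(\View^\top)}\normproj\View^\top=\rmvar\,\someorth\normproj\someorth^\top\inv{\rmvar}.
\]
The $\slo$-perspective condition says that $\View\somedmat\View^\top=\slo$ for the diagonal weight matrix $\somedmat:=\mathrm{diag}(\seig_1,\ldots,\seig_\bign)$. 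This follows by expanding the quadratic form and using polarization. Equivalently, $\someorth\somedmat\someorth^\top=\rmvar\slo\rmvar$, i.e.\ $\someorth$ orthogonally diagonalizes $\somemath:=\rmvar\slo\rmvar$.

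For (2)$\Rightarrow$(1), I would check the conditions directly. The rank is $\litn$ since $\normproj$ has rank $\litn$ and the outer factors are invertible. For the $\inv\mvar$-projection property, use the criterion that $\sqrt{\inv\mvar}\,\strats\,\rmvar=\inv{\rmvar}\strats\rmvar=\someorth\normproj\someorth^\top$ is symmetric and idempotent. For the $\slo$-projection property, idempotence is clear, and $\strats^\top\slo=\slo\strats$ reduces, after substituting $\slo=\inv{(\View^\top)}\somedmat\inv{\View}$, to $\normproj\somedmat=\somedmat\normproj$. That identity holds because both matrices are diagonal.

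For (1)$\Rightarrow$(2), set $\somematt:=\inv{\rmvar}\strats\rmvar$. Since $\strats$ is a $\inv\mvar$-projection of rank $\litn$, $\somematt$ is an orthogonal projection of rank $\litn$. Next I would rewrite the $\slo$-projection condition $\strats^\top\slo=\slo\strats$ in these coordinates: multiplying on the left and right by $\rmvar$ and using the symmetry of $\somematt$ yields $\somematt\somemath=\somemath\somematt$. So $\somematt$ and $\somemath$ are commuting symmetric matrices. The key step is then to simultaneously diagonalize them: $\somemath$ preserves $\ran\somematt$ and its orthogonal complement, so I choose an orthonormal eigenbasis of $\somemath$ within each of these subspaces. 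I order the $\litn$ vectors from $\ran\somematt$ first and collect all of them as the columns of an orthogonal $\someorth$. Then $\someorth^\top\somemath\someorth$ is diagonal with positive entries, because $\somemath$ is positive definite, and $\someorth^\top\somematt\someorth=\normproj$.

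Finally, I set $\View:=\inv{\rmvar}\someorth$. This $\View$ is a perspective, and by the equivalence noted above it is an $\slo$-perspective with the diagonal entries as weights. Reversing the computation in the first paragraph gives $\strats=\inv{(\View^\top)}\normproj\View^\top$. The main obstacle is the bookkeeping in the commutation step, namely getting the correct conjugation so that the $\slo$-projection condition becomes exactly $[\somematt,\somemath]=0$. Simultaneous diagonalization of commuting symmetric matrices is standard, and I would prove it inline via invariant subspaces as described.
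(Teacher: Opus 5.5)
Your proposal is correct and follows the paper's proof essentially step for step: $(2)\Rightarrow(1)$ is a direct verification, and $(1)\Rightarrow(2)$ conjugates by $\rmvar$ to obtain the commuting symmetric matrices $\inv{\rmvar}\strats\rmvar$ and $\rmvar\slo\rmvar$, simultaneously diagonalizes them using orthonormal eigenbases within the range and kernel of the projection, and sets $\View:=\inv{\rmvar}\someorth$. One slip to fix: in $(2)\Rightarrow(1)$ the substitution must be $\slo=\View\somedmat\View^\top$, as you correctly wrote earlier, and not $\inv{(\View^\top)}\somedmat\inv{\View}$, which equals $\mvar\slo\mvar$; only the former gives $\strats^\top\slo-\slo\strats=\View(\normproj\somedmat-\somedmat\normproj)\View^\top=0$.
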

\begin{proof}
First, suppose $\strats=\inv{(\View^\top)}\normproj \View^\top$ for some $\slo$-perspective $\View=[\view_1\cdots\view_\bign]$. 
Let us now see $\strats$ is a rank-$\litn$ map that is both a $\slo$-projection and a $\inv\mvar$-projection. First, $\View$ (and hence its transpose) having full rank means $\strats$ has rank equal to $\rank\normproj=\litn$. Second, $\strats$ is idempotent because the similar matrix $\normproj$ is. Third, that $\View$ is a perspective means $\View^\top\mvar\View=I$ and so $\strats=\mvar\View\normproj \View^\top$. Therefore, $\inv{\mvar}\strats=\View\normproj \View^\top=\strats^\top\inv{\mvar}$, making $\strats$ a $\inv{\mvar}$-projection. Finally, $\View$ being an $\slo$-perspective means $\slo=\View\somedmat\View^\top$ for some diagonal matrix $\somedmat\in\matnn$. Therefore, $$\strats^\top\slo-\slo\strats=\View\normproj\inv{\View}\View\somedmat\View^\top-\View\somedmat\View^\top\inv{(\View^\top)}\normproj\View^\top=\View\paren{\normproj\somedmat-\somedmat\normproj}\View^\top,$$
which is zero because $\somedmat$ and $\normproj$ are both diagonal. Hence, $\strats$ is an $\slo$-projection.

Conversely, suppose $\strats$ has rank $\litn$ and is both a $\slo$-projection and a $\inv\mvar$-projection. In what follows, define the rank-$\litn$ matrix $\stratsb:=\inv{\rmvar}\strats\rmvar$ and the positive definite matrix $\slob:=\rmvar\slo\rmvar$. 
That $\strats$ is a $\inv\mvar$-projection tells us $\stratsb$ is an orthogonal projection (with respect to the standard inner product); and that $\strats$ is an $\slo$-projection tells us 
$$0=\rmvar(\slo\strats-\strats^\top\slo)\rmvar=\slob\stratsb-\stratsb^\top\slob=\slob\stratsb-\stratsb\slob.$$
The symmetric matrices $\slob$ and $\stratsb$ commute, meaning they are simultaneously diagonalizable by an orthonormal (according to the standard inner product) basis. Since $\stratsb$ is an orthogonal projection matrix, this means there exists an orthonormal (again, according to the standard inner product) basis $(\viewt_i)_{i=1}^\litn$ for $\ran\stratsb$ and an orthonormal basis $(\viewt_i)_{i=\litn+1}^\bign$ for $\ker\stratsb$ such that all of $\viewt_1,\ldots\viewt_\bign$ are eigenvectors for $\slob$. Now, define 
$\Viewt:=[\viewt_1\cdots\viewt_\bign]\in\matnn$ and $\View:=\inv{\rmvar}\Viewt\in\matnn$. Let us see $\View$ witnesses the desired property for $\strats$. First, because $\Viewt$ is orthogonal by construction, $\View$ is a perspective. Next, the orthogonal matrix $\Viewt$ has $\slob$-eigenvectors as its columns, implying $\slob=\Viewt\somedmat\Viewt^\top$ for some diagonal $\somedmat\in\matnn$. Therefore, 
$$\View\somedmat\View^\top
=\paren{\inv{\rmvar}\Viewt}\paren{\Viewt^\top\slob\Viewt}\paren{\Viewt^\top\inv{\rmvar}}
=\inv{\rmvar}\slob\inv{\rmvar}=\slo.
$$
Hence, $\View$ is an $\slo$-perspective. Finally, observe that $\View^\top=\Viewt^\top\inv{\rmvar}$ which has inverse $\rmvar\Viewt$, and that 
$$\Viewt\normproj \Viewt^\top=
\Viewt\paren{\sum_{i=1}^\litn\elem_i\elem_i^\top}\Viewt^\top = \sum_{i=1}^\litn\viewt_i\viewt_i^\top=\stratsb
.$$
Thus, 
$\inv{(\View^\top)}\normproj \View^\top
=\rmvar\Viewt\normproj \Viewt^\top\inv{\rmvar}
=\rmvar\stratsb\inv{\rmvar}=\strats,
$
as required.
\end{proof}

\begin{proof}[Proof of \autoref{prop: eqm char}]
By \autoref{lem: S best response} and \autoref{lem: R best response}, a given strategy profile $(\sstrat,\rstrat)$ is an equilibrium if and only if $\sstrat$ is linear and $\strats:=\rstrat\sstrat\in\matnn$ is a rank-$\litn$ matrix that is both a $\slo$-projection and a $\inv\mvar$-projection.
Meanwhile, for the strategy profile $(\sstrat,\rstrat)$ that communicates perspective $\View=[\view_1\cdots\view_\bign]$, we know $\sstrat\in\matkn$ is a linear strategy, and 
$$
\sstrat=\begin{pmatrix} 
	I_\litn & 0_{\litn,\bign-\litn}\\
	\end{pmatrix}\View^\top
\text{ and }
\rstrat=\inv{(\View^\top)}\twovec{I_\litn}{0_{\bign-\litn,\litn}}.
$$
Letting $\normproj:=\brac{\mathbf1_{i=j\leq\litn}}_{i,j}\in\matnn$ then gives $\rstrat\sstrat=\inv{(\View^\top)}\normproj \View^\top.$
The desired equivalence then follows from \autoref{lem: perspective representation of A matrices}.
\end{proof}

\begin{notation}
For any perspectives $\View$ and $\Viewb$, define their \textbf{correlation} matrix $\cormat{\View}{\Viewb}\in\matnn$, whose $ij$-entry is the correlation $\expec{\state^\View_i\state^\Viewb_j}$. Algebraically, the correlation matrix takes the form $\cormat{\View}{\Viewb}=\View^\top\mvar\Viewb$.
\end{notation}

\begin{lemma}\label{lem: similarity payoff calculation}
Let $\Viewb$ be an $\rlo$-perspective with associated weights $\reig=(\reig_i)_{i=1}^\bign\in\real^\bign$. 
If a strategy profile $(\sstrat,\rstrat)$ communicates perspective $\View$, then it yields expected R payoff $$
\expec{\ur}=
\sum_{i=1}^\litn \elem_i^\top\simmat{\View}{\Viewb}\reig
= \sum_{j=1}^\bign \reig_j \sum_{i=1}^\litn \left(\bbE[\state^{\View}_{i}\state^{\Viewb}_{j}
]\right)^2.
$$
If $\View$ is an $\slo$-perspective with associated weights $(\seig_i)_{i=1}^\bign$, then the strategy profile yields expected S payoff
$\expec{\us}=
\sum_{i=1}^\litn \seig_i$.
\end{lemma}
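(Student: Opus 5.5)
The plan is to reduce everything to the trace formula in \autoref{lem: trace payoff calculation}. A profile that communicates a perspective $\View$ has, as computed in the proof of \autoref{prop: eqm char}, $\strats:=\rstrat\sstrat=\inv{(\View^\top)}\normproj\View^\top$, where $\normproj=\brac{\mathbf1_{i=j\leq\litn}}_{i,j}$. The first half of the proof of \autoref{lem: perspective representation of A matrices} shows that this $\strats$ is a $\inv\mvar$-projection; that argument uses only that $\View$ is a perspective, not an $\slo$-perspective. The simplified formula therefore applies to both players: $\expec{\ur}=\Tr(\rlo\strats\mvar)$ and $\expec{\us}=\Tr(\slo\strats\mvar)$.

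Next, I rewrite $\strats$ in a cleaner form. Because $\View$ is a perspective, $\View^\top\mvar\View=I$, so $\inv{(\View^\top)}=\mvar\View$ and $\strats=\mvar\View\normproj\View^\top$. Hence
\[
\Tr(\rlo\strats\mvar)=\Tr\paren{\normproj\View^\top\mvar\rlo\mvar\View}.
\]
I then substitute the $\rlo$-perspective representation $\rlo=\Viewb\somedmat\Viewb^\top$ with $\somedmat=\mathrm{diag}(\reig)$; this representation is exactly the defining identity of an $\rlo$-perspective written in matrix form, as used in \autoref{lem: perspective representation of A matrices}. This gives
\[
\Tr\paren{\normproj\,\cormat{\View}{\Viewb}\,\somedmat\,\cormat{\View}{\Viewb}^\top},
\]
using $\cormat{\View}{\Viewb}=\View^\top\mvar\Viewb$. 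The $(i,i)$ diagonal entry of $\cormat{}{}\somedmat\cormat{}{}^\top$ is $\sum_j\reig_j\paren{\bbE[\state^\View_i\state^\Viewb_j]}^2=\elem_i^\top\simmat{\View}{\Viewb}\reig$. Multiplying by $\normproj$ keeps only $i\le\litn$, which yields both stated expressions for $\expec{\ur}$.

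For S, I run the same computation with $\slo$ and $\View$ in place of $\rlo$ and $\Viewb$. Then $\cormat{\View}{\View}=\View^\top\mvar\View=I$, so the trace collapses to $\Tr(\normproj\,\mathrm{diag}(\seig))=\sum_{i\le\litn}\seig_i$.

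The only step needing care is the first one: I must confirm that the matrix product $\rstrat\sstrat$ for a communicating profile equals $\inv{(\View^\top)}\normproj\View^\top$ for any perspective. This follows from the definition, since $\act^\View=\View^\top\act$ has its first $\litn$ coordinates equal to $\state^\View_i$ and the remaining coordinates zero. The step also requires the $\inv\mvar$-projection property, which I take from the earlier lemma's argument. The remaining steps are routine trace algebra.
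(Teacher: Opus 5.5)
Your proposal is correct and follows essentially the paper's route: use the simplified trace formula $\expec{\ur}=\Tr(\rlo\strats\mvar)$ with $\strats=\mvar\View\normproj\View^\top$, substitute the spectral form $\rlo=\Viewb\,\mathrm{diag}(\reig)\Viewb^\top$, and read off squared entries of $\cormat{\View}{\Viewb}$. The paper only cycles the trace the other way, computing $\Viewb^\top\strats\mvar\Viewb=\somemat^\top\somemat$ with $\somemat=\normproj\cormat{\View}{\Viewb}$ and summing over columns $j$ weighted by $\reig_j$; your explicit check that $\strats$ is a $\inv{\mvar}$-projection for any perspective fills in a step the paper uses tacitly.
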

\begin{proof}
Recall that $\View$ being a perspective means $\View^\top\mvar\View=I$; and note that $\normproj:=\brac{\mathbf1_{i=j\leq\litn}}_{i,j}\in\matnn$ has $\normproj^\top\normproj=\normproj$. So letting $\somemat:=\normproj\cormat{\View}{\Viewb}$, matrix $\strats:=\rstrat\sstrat$ has
$$
\Viewb^\top\strats\mvar\Viewb
=\Viewb^\top\inv{(\View^\top)}(\normproj) \View^\top\mvar\Viewb
=\Viewb^\top(\mvar\View)(\normproj^\top\normproj) \View^\top\mvar\Viewb
=\somemat^\top\somemat.
$$
Any $j\in[\bign]$ therefore has\footnote{Recall, any $\somen\in\mathbb{Z}_+$ has $[\somen]=\{1,\ldots,\somen\}=\{i\in\mathbb{N}:\ i\leq\somen\}$.}
$$
\Tr(\Viewb\elem_j\elem_j^\top\Viewb^\top\strats\mvar)=\elem_j^\top\Viewb^\top\strats\mvar\Viewb\elem_j=(\somemat\elem_j)^\top(\somemat\elem_j)=\sum_{i=1}^\bign \somemat_{ij}^2
=\sum_{i=1}^\litn \simmat{\View}{\Viewb}_{ij}.
$$
Now, because $\Viewb$ is an $\rlo$-perspective with weights $(\reig_i)_{i=1}^\bign$, we can write $$\rlo=\sum_{j=1}^\bign \reig_j\Viewb\elem_j\elem_j^\top\Viewb^\top.$$
Recalling from \autoref{lem: trace payoff calculation} that $\expec{\ur}=\Tr\paren{\rlo\strats\mvar}$, we therefore have
$$\expec{\ur}=\sum_{j=1}^\bign\reig_j\Tr(\Viewb\elem_j\elem_j^\top\Viewb^\top\strats\mvar)
=\sum_{j=1}^\bign\reig_j\sum_{i=1}^\litn \simmat{\View}{\Viewb}_{ij}
=\sum_{i=1}^\litn\elem_i^\top\simmat{\View}{\Viewb}\reig.$$

Finally, in the case $\View$ is an $\slo$-perspective with associated weights $(\seig_i)_{i=1}^\bign$, identical calculations show that $\expec{\us}=\sum_{i=1}^\litn\elem_i^\top\simmat{\View}{\View}\seig=\sum_{i=1}^\litn\elem_i^\top I\seig=\sum_{i=1}^\litn\seig_i.$
\end{proof}

\begin{proof}[Proof of \autoref{prop: eqlbm payoffs}]
Apply \autoref{lem: similarity payoff calculation} to equilibria.
\end{proof}

\begin{definition}
An \textbf{$\rlo$-extremal $\slo$-perspective} is an $\slo$-perspective $\View$ such that, for any $\litn\in\{1,\ldots,\bign-1\}$ and $\slo$-perspective $\Viewt$, letting $(\spay,\rpay)$ be the  payoff pair from communicating perspective $\Viewt$ in the capacity-$\litn$ game, there exist $\slo$-perspectives $\bar\View$ and $\underline\View$ that are issue-equivalent to $\View$ such that the payoff pairs $(\bar\spay,\bar\rpay)$ and $(\underline\spay,\underline\rpay)$ induced by communicating $\bar\View$ and $\underline\View$, respectively, have $\underline\spay=\spay=\bar\spay$ and $\underline\rpay\leq\rpay\leq\bar\rpay$.
\end{definition}

\begin{lemma}\label{lem: set of S perspectives}
Let $\View$ be an $\slo$-perspective with associated weights $(\seig_i)_{i=1}^\bign$, and define the partition $\indsetset$ of $[\bign]$ with $i,j\in[\bign]$ in the same cell of $\indsetset$ if and only if $\seig_i=\seig_j$. Then, for
$$
\matset:=\curlyb{\somemat\in\matnn:\ \somemat \text{ orthogonal, } \somemat_{ij}=0 \text{ for $i$ and $j$ in distinct $\indsetset$-cells}},
$$
the set of $\slo$-perspectives is equal to $\curlyb{\View\somemat^\top\perm:\ \somemat\in\matset, \ \perm\in\matnn \text{ permutation matrix}}.$
Moreover, S attains the same the expected payoff from communicating perspective $\View\somemat^\top\perm$ or perspective $\View\perm$, for any $\somemat\in\matset$ and permutation matrix $\perm$. Finally,
some $\slo$-perspective and $\rlo$-perspective exist. 
\end{lemma}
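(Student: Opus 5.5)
I would work in prior-normalized coordinates, where the whole statement reduces to the spectral theorem. Set $\slob:=\rmvar\slo\rmvar$, which is symmetric positive definite. For any $\View$ put $\Viewt:=\rmvar\View$. Then $\View$ is a perspective if and only if $\Viewt$ is orthogonal, since $\View^\top\mvar\View=I$. A perspective is an $\slo$-perspective with weights $\seig$ if and only if $\slo=\View\,\mathrm{diag}(\seig)\View^\top$; to see this, expand the defining identity $(\actr-\stater)^\top\slo(\actr-\stater)=\sum_i\seig_i(\view_i\cdot(\actr-\stater))^2$, which holds for all vectors exactly when the two quadratic forms agree. This in turn is equivalent to $\slob=\Viewt\,\mathrm{diag}(\seig)\Viewt^\top$. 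Hence $\slo$-perspectives are exactly the matrices $\inv{\rmvar}\Viewt$ with $\Viewt$ an orthonormal eigenbasis of $\slob$, and the weights are its eigenvalues, which are positive. The existence claim follows at once from the spectral theorem applied to $\slob$, and identically to $\rmvar\rlo\rmvar$ for $\rlo$.

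For the set equality, I first show that every $\View\somemat^\top\perm$ is an $\slo$-perspective. Take $\somemat\in\matset$. Because $\somemat$ is block diagonal along the cells of $\indsetset$ and $\mathrm{diag}(\seig)$ is constant on each cell, $\somemat$ commutes with $\mathrm{diag}(\seig)$. Orthogonality of $\somemat$ then gives $\somemat^\top\mathrm{diag}(\seig)\somemat=\mathrm{diag}(\seig)$. So $\View\somemat^\top$ is a perspective, because $\somemat\View^\top\mvar\View\somemat^\top=I$, and it satisfies $\View\somemat^\top\mathrm{diag}(\seig)\somemat\View^\top=\slo$. A permutation $\perm$ keeps orthogonality and only permutes the weights to $\perm^\top\seig$.

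For the converse, take any $\slo$-perspective $\View'$ with weights $\seig'$. Both $\seig$ and $\seig'$ are the eigenvalue lists of $\slob$, so some permutation $\perm$ satisfies $\seig'=\perm^\top\seig$. Define $\somemat^\top:=\inv{\View}\View'\perm^\top=\Viewt^\top\Viewt'\perm^\top$, which is orthogonal. Then $\View\somemat^\top\mathrm{diag}(\seig)\somemat\View^\top=\View'\perm^\top\mathrm{diag}(\seig)\perm\View'^\top=\slo=\View\mathrm{diag}(\seig)\View^\top$, so $\somemat$ commutes with $\mathrm{diag}(\seig)$. Comparing entries gives $(\seig_i-\seig_j)\somemat_{ij}=0$, which puts $\somemat$ in $\matset$, and $\View'=\View\somemat^\top\perm$. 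I expect the main care to be needed in getting the order of $\somemat$ versus $\perm$ right in this step; it is bookkeeping rather than a real obstacle.

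For the payoff claim, I apply \autoref{lem: similarity payoff calculation} with $\View$ itself as the $\slo$-perspective. Communicating $\View'$ gives S the payoff $\sum_{i\le\litn}\elem_i^\top\simmat{\View'}{\View}\seig$. The correlation matrix is $\cormat{\View\somemat^\top\perm}{\View}=\perm^\top\somemat\View^\top\mvar\View=\perm^\top\somemat$, whose entrywise square is $\perm^\top(\somemat\circ\somemat)$. Since $\somemat$ is orthogonal and block diagonal, $\somemat\circ\somemat$ is bistochastic and supported on the blocks, and $\seig$ is constant on each block. Therefore $(\somemat\circ\somemat)\seig=\seig$, and the payoff equals $\sum_{i\le\litn}(\perm^\top\seig)_i$, which is exactly what communicating $\View\perm$ yields.
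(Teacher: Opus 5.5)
Your proof is correct and follows essentially the paper's route. The paper applies the spectral theorem to $\slo\mvar$, viewed as self-adjoint for $\ip\cdot\cdot_\mvar$, and reads off the eigenspace-by-eigenspace orthogonal freedom; that is your argument for $\rmvar\slo\rmvar$ after the isometry $x\mapsto\rmvar x$, with your identity $(\seig_i-\seig_j)\somemat_{ij}=0$ making the block structure explicit. For the payoff claim you can skip the similarity-matrix computation: you already showed that $\View\somemat^\top\perm$ and $\View\perm$ are both $\slo$-perspectives with weights $\perm^\top\seig$, so the last assertion of \autoref{lem: similarity payoff calculation} directly gives each the S payoff $\sum_{i\le\litn}(\perm^\top\seig)_i$.
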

\begin{proof}
Recall, $\View=[\view_1\cdots\view_\bign]$ is a perspective if and only if $\view_1,\ldots,\view_\bign$ is a $\ip\cdot\cdot_\mvar$-orthonormal basis. Moreover, left multiplication by $\slob:=\slo\mvar$ is a self-adjoint operator with respect to $\ip\cdot\cdot_\mvar$, and $\View$ is an $\slo$-perspective if and only if the $\ip\cdot\cdot_\mvar$-orthonormal basis $\view_1,\ldots,\view_\bign$ consists of eigenvectors of $\slob$. In this case, the weights $(\seig_i)_{i=1}^\bign$ associated with perspective $\View$ are exactly the eigenvalues associated with the eigenbasis: $$\ip{\somevec}{ \slob\somevec}_\mvar=\sum_{i=1}^\bign \seig_i \ip{\somevec}{\view_i}_\mvar^2 \text{ for all } \somevec\in\real^\bign.$$

The result now follows from the spectral theorem. The $\slob$-eigenspaces are pairwise $\ip\cdot\cdot_\mvar$-orthogonal and span $\real^\bign$, and $\View$ is an $\slo$-perspective and only if its columns consist of orthonormal bases of these eigenspaces. Each $\indset\in\indsetset$ corresponds to an eigenspace $\text{span}\{\view_j\}_{j\in\indset}$ whose associated eigenvalue we call (in a mild abuse) $\seig_\indset$. Then some $\ip\cdot\cdot_\mvar$-orthonormal basis $\{\view_i:\ i\in\indset\}$ exists for the $\seig_\indset$ eigenspace, and the set of all $\ip\cdot\cdot_\mvar$-orthonormal basese for it is the set of orthogonal linear combinations of $\{\view_i:\ i\in\indset\}$. Hence, an $\slo$-perspective $\View$ exists, and the set of all $\slo$-perspectives is the set of all column permutations of eigenspace-by-eigenspace orthogonal transformations of $\View$. Moreover, communicating \emph{any} perspective $\litn_\indset$ of whose first $\litn$ columns belong to the $\seig_\indset$ eigenspace for each $\indset\in\indsetset$ yields (given \autoref{lem: similarity payoff calculation}) S expected payoff $\sum_{\indset\in\indsetset}\litn_\indset\seig_\indset$. Therefore, the expected payoff from communicating $\View\somemat^\top\perm$ for $\somemat\in\matset$ and permutation matrix $\perm$ is invariant to $\somemat$. 

Finally, the existence of an $\rlo$-perspective follows from applying the above analysis to $\rlo$ instead of $\slo$.
\end{proof}

\begin{lemma}\label{lem: distinct weights means unique perspective}
Given the $\bign$ weights associated with some $\slo$-perspective, a unique $\slo$-perspective exists if and only if these $\bign$ weights are distinct.
\end{lemma}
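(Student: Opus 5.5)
The plan is to read the statement off the characterization in \autoref{lem: set of S perspectives}. That lemma says every $\slo$-perspective has the form $\View\somemat^\top\perm$, where $\somemat$ ranges over $\matset$ and $\perm$ is a permutation matrix. Here $\matset$ is the set of orthogonal matrices that are block diagonal with respect to the partition $\indsetset$ of indices by equal weights. The question of uniqueness up to issue equivalence therefore becomes a question about how large $\matset$ is. Issue equivalence means the perspective can change only by column permutations and sign flips, which in this parametrization corresponds to $\somemat$ being a signed diagonal matrix.

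For the ``if'' direction, I will suppose the weights $\seig_1,\ldots,\seig_\bign$ are distinct. Then every cell of $\indsetset$ is a singleton, so any $\somemat\in\matset$ is diagonal. Since $\somemat$ is also orthogonal, each diagonal entry is $\pm1$. Every $\slo$-perspective is then $\View\somemat^\top\perm$ with $\somemat$ a diagonal sign matrix, which is exactly a sign flip of some columns of $\View$ followed by a permutation of columns. So every $\slo$-perspective is issue equivalent to $\View$, and any two are issue equivalent to each other because issue equivalence is transitive (signed permutation matrices form a group).

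For the ``only if'' direction, I will suppose $\seig_i=\seig_j$ for some $i\neq j$. Let $\somemat$ be the identity except on the $\{i,j\}$ block, where it is a rotation by $\pi/4$. This $\somemat$ lies in $\matset$, so $\View\somemat^\top$ is an $\slo$-perspective. Its $i$-th column is $(\view_i\pm\view_j)/\sqrt2$.

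The step needing the most care is showing that this column is not $\pm$ any column of $\View$; this can be checked with the $\ip\cdot\cdot_\mvar$-orthonormality of the columns $\view_1,\ldots,\view_\bign$. The inner product of $(\view_i\pm\view_j)/\sqrt2$ with each $\view_m$ is $0$ or $\pm1/\sqrt2$, never $\pm1$. A column equal to $\pm\view_m$ would need inner product $\pm1$ with $\view_m$, so no such match exists. Hence $\View\somemat^\top$ is not issue equivalent to $\View$, and the $\slo$-perspective is not unique.

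Finally, I will note that the statement does not depend on which $\slo$-perspective's weights are used. By \autoref{lem: set of S perspectives}, all $\slo$-perspectives have the same multiset of weights, since the weights are the eigenvalues of $\slo\mvar$.
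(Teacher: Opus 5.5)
Your proof is correct and takes the paper's approach: both arguments read the result off \autoref{lem: set of S perspectives}, where distinct weights force $\matset$ to consist of diagonal sign matrices. For the converse, the paper only notes that $\matset$ is infinite when two weights coincide, while your explicit $\pi/4$ rotation and $\ip\cdot\cdot_\mvar$ check show non-equivalence directly (a small correction: issue equivalence corresponds to signed \emph{permutations} within blocks, not only signed diagonal matrices, though your direct column comparison does not depend on this).
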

\begin{proof}
This lemma follows directly from \autoref{lem: set of S perspectives}. If the weights are all distinct, then the set $\matset$ (as described in the statement of that lemma) consists of all diagonal matrices with diagonal entries in $\{1,-1\}$; and if the weights are not all distinct, then $\matset$ is infinite.
\end{proof}

\begin{lemma}\label{lem: extremal perspective exists}
Some $\rlo$-extremal $\slo$-perspective exists. 
\end{lemma}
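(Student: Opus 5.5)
The plan is to build $\View$ cell by cell: inside each eigenspace of S's loss, choose the basis that diagonalizes R's payoff form restricted to that eigenspace. The comparison then reduces to the Ky Fan extremal principle applied one eigenspace at a time. I would work in prior-normalized coordinates, replacing each perspective $\View$ by $\Viewt=\rmvar\View$, an orthonormal matrix, and setting $\slob:=\rmvar\slo\rmvar$ and $\rlob:=\rmvar\rlo\rmvar$. By the proof of \autoref{lem: set of S perspectives}, $\View$ is an $\slo$-perspective exactly when the columns of $\Viewt$ are a (standard) orthonormal eigenbasis of $\slob$. The $\slob$-eigenspaces $\mathcal{X}_\indset$, one for each cell $\indset\in\indsetset$, are pairwise orthogonal.

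\textbf{Step 1 (payoff formulas).} From \autoref{lem: trace payoff calculation} and the formula $\rstrat\sstrat\mvar=\mvar\View\normproj\View^\top\mvar$ in the proof of \autoref{lem: perspective representation of A matrices}, communicating a perspective $\View$ with capacity $\litn$ gives R the payoff
\[
\Tr(\rlo\mvar\View\normproj\View^\top\mvar)=\sum_{i=1}^{\litn}\viewt_i^\top\rlob\viewt_i ,
\]
where $\viewt_i$ are the columns of $\Viewt$. By \autoref{lem: set of S perspectives}, S's payoff equals $\sum_{\indset}\litn_\indset\seig_\indset$, where $\litn_\indset$ is the number of the first $\litn$ columns that lie in $\mathcal{X}_\indset$.

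\textbf{Step 2 (construction).} For each cell $\indset$, let $\Pi_\indset$ be the orthogonal projection onto $\mathcal{X}_\indset$. Consider the compression $\Pi_\indset\rlob\Pi_\indset$ acting on $\mathcal{X}_\indset$; it is symmetric. Choose an orthonormal eigenbasis of $\mathcal{X}_\indset$ for this compression, with eigenvalues $\mu^\indset_1\ge\cdots\ge\mu^\indset_{|\indset|}$. Concatenating these bases over all cells gives an orthonormal eigenbasis of $\slob$. Mapping back by $\inv{\rmvar}$ then yields an $\slo$-perspective $\View$. For each of its issues, the R-value $\viewt^\top\rlob\viewt$ equals the corresponding $\mu^\indset_m$.

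\textbf{Step 3 (comparison).} Fix $\litn$ and any $\slo$-perspective $\Viewt'$. By \autoref{lem: set of S perspectives}, its first $\litn$ columns split into $\litn_\indset$ orthonormal vectors lying in each $\mathcal{X}_\indset$. R's payoff is therefore
\[
\sum_\indset \Tr\paren{U_\indset^\top\Pi_\indset\rlob\Pi_\indset U_\indset}
\]
for matrices $U_\indset$ with $\litn_\indset$ orthonormal columns in $\mathcal{X}_\indset$. By the Ky Fan maximum/minimum principle, each summand lies between $\sum_{m>|\indset|-\litn_\indset}\mu^\indset_m$ and $\sum_{m\le\litn_\indset}\mu^\indset_m$. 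Define $\bar\View$ as the column permutation of $\View$ whose first $\litn$ columns are, for each $\indset$, the $\litn_\indset$ basis vectors of $\mathcal{X}_\indset$ with the largest $\mu^\indset$. Define $\underline\View$ the same way using the smallest. Both are issue-equivalent to $\View$. Both keep the counts $\litn_\indset$ unchanged, so S's payoff equals $\spay$, and by Step 1 R's payoffs bracket $\rpay$.

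The main obstacle is Step 1. It requires carefully converting the $\mvar$-weighted geometry into standard orthonormal coordinates, so that R's payoff becomes a sum of Rayleigh quotients of one fixed symmetric matrix and Ky Fan applies eigenspace by eigenspace. After that conversion, the remaining steps are bookkeeping.
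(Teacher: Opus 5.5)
Your proof is correct and follows essentially the paper's route. You diagonalize R's payoff form restricted to each eigenspace of S's prior-normalized loss, which is exactly the paper's block-orthogonal $\somematt$; the paper writes the same form in correlation coordinates as $\someorth_\indset\,\mathrm{diag}(\reig)\,\someorth_\indset^\top$. Your cell-by-cell Ky Fan bound plays the role of the paper's Schur--Horn majorization step, and the within-cell column permutation keeps the counts $\litn_\indset$, and hence S's payoff, fixed.
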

\begin{proof}
Given \autoref{lem: set of S perspectives}, some $\slo$-perspective $\View$ and 
$\rlo$-perspective $\Viewb$ exist. Let $\indsetset$ and $\matset$ be as defined in the statement of that lemma for the set of $\slo$-perspectives. Observe, a matrix $\somemat\in\matnn$ belongs to $\matset$ if and only if there exist $(\somemat_\indset)_{\indset\in\indsetset}$ such that $\somemat$ is block diagonal with $\indset\times\indset$ block $\somemat_\indset$ for each $\indset\in\indset$. 

Let us set up some useful notation. Let $\reig=(\reig_i)_{i=1}^\bign\in\real^\bign$ be the weights associated with $\rlo$-perspective $\Viewb$, and let $\rlob\in\matnn$ denote the diagonal matrix with diagonal $\reig$. Let $\someorth:=\cormat{\View}{\Viewb}=\View^\top\mvar\Viewb$, which is orthogonal. For any $\indset\in\indsetset$, let $\someorth_\indset\in\mat{\indset}{\bign}$ be the submatrix of $\someorth$ consisting of its $\indset$ rows, and define the symmetric matrix $\rlob_\indset:=\someorth_\indset\rlob\someorth_\indset^\top\in\mat\indset\indset$. The spectral theorem delivers 
orthogonal $\somematt_\indset\in\mat\indset\indset$ such that $\somematt_\indset\rlob_\indset\somematt_\indset^\top$ is diagonal with diagonal vector $\somevect_\indset\in\real^\indset$. Let $\somevect=(\somevect_\indset)_{\indset\in\indsetset}\in\real^\bign$, let $\somematt\in\matset$ have its $\indset\times\indset$ block equal to $\somematt_\indset$ for each $\indset\in\indsetset$, and define the $\slo$-perspective $\Viewt:=\View\somematt^\top$. We will show that the $\Viewt$ is $\rlo$-extremal.

Consider any $\indset\in\indsetset$ and $\somemat\in\matset$ with $\indset\times\indset$ block $\somemat_\indset$. By construction, the $\indset$ rows of $\somemat\someorth$ are given by $\somemat_\indset\someorth_\indset$. Therefore, any $i\in\indset$ has \begin{eqnarray*}
\brac{\simmat{\View\somemat^\top}{\Viewb}\reig}_i
&=&\brac{\cormat{\View\somemat^\top}{\Viewb}\ \rlob\ \cormat{\View\somemat^\top}{\Viewb}^\top}_{ii} \\
&=&\brac{\somemat\someorth\rlob(\somemat\someorth)^\top
}_{ii} \\
&=&\brac{\somemat_\indset\someorth_\indset\rlob(\somemat_\indset\someorth_\indset)^\top
}_{ii} \\
&=&\brac{(\somemat_\indset\somematt_\indset^\top)(\somematt_\indset\rlob_\indset\somematt_\indset^\top)(\somemat_\indset\somematt_\indset^\top)^\top
}_{ii} \\
&=& (\somebi_\indset\somevect)_i,
\end{eqnarray*}
where $\somebi_\indset\in\mat\indset\indset$ is the entrywise square of the orthogonal matrix $\somemat_\indset\somematt_\indset^\top$. 
Hence, $[\simmat{\View\somemat^\top}{\Viewb}\reig]_\indset=\somebi_\indset\somevect_\indset$ is majorized by $\somevect_\indset$ by the Schur-Horn theorem \citep[Theorem 2.B.6,][]{marshall2011inequalities}, and specializing this calculation to $\somemat_\indset=\somematt_\indset$ tells us $[\simmat{\Viewt}{\Viewb}\reig]_\indset=\somevect_\indset$. So $[\simmat{\Viewt}{\Viewb}\reig]_\indset$ majorizes $[\simmat{\View\somemat^\top}{\Viewb}\reig]_\indset$ for every $\somemat\in\matset$. 

Now, take any $\somemath\in\matset$ and permutation matrix $\permh$, and consider the strategy profile (for capacity $\litn$) that communicates perspective $\View\somemath^\top\permh$---\autoref{lem: set of S perspectives} tells us any $\slo$-perspective is of this form. We want to find some $\slo$-perspective, issue equivalent to $\View\somematt^\top\permt$, such that communicating it yields the same S payoff as $\View\somemath^\top\permh$ and a weakly higher [resp. lower] R payoff than $\View\somemath^\top\permh$. 
To that end, let $\Viewh:=\View\somemath^\top$ and $\somevech:=\simmat{\Viewh}{\Viewb}\reig$, and let $\indsetb$ be the set of all $i\in[\bign]$ such that some $j\in[\litn]$ has $\permh_{ij}=1$. 
Let $\perm\in\matset$ be a permutation matrix such that, for each $\indset\in\indsetset$, every entry of $(\perm\somevect)_{\indset\cap\indsetb}$ is weakly higher [resp. lower] than every entry of $(\perm\somevect)_{\indset\setminus\indsetb}$; and let $\permh:=\perm\permt$. Then,
$$
\sum_{i=1}^\litn \elem_i^\top\paren{\permt^\top\somevect-\permh^\top\somevech}
= \sum_{i=1}^\litn (\permh\elem_i)^\top \paren{\perm\somevect-\somevech}\\
=\sum_{\indset\in\indsetset}\ \sum_{i\in\indset\cap\indsetb}\paren{\perm\somevect-\somevech}_i,$$
which is nonnegative [resp. nonpositive] because $\somevect_\indset$ majorizes $\somevech_\indset$ for every $\indset\in\indsetset$.

Finally,  let us see that $\Viewt\permt$ 
is an $\slo$-perspective that satisfies the desired payoff comparisons.
First, observe $\Viewt\permt=(\View\somematt^\top)(\perm^\top\permh)=\View(\perm\somematt)^\top\permh$ and $\perm\somematt\in\matset$. Hence,  \autoref{lem: set of S perspectives} tells us $\Viewt\permt$ is an $\slo$-perspective, and that communicating it yields the same expected S payoff as communicating $\Viewh\permh=\View\somemath^\top\permh$. 
Toward the R payoff comparison,  \autoref{lem: similarity payoff calculation} says that sharing perspective $\Viewh\permh$ yields an expected R payoff of
$$
\sum_{i=1}^\litn \elem_i^\top\simmat{\Viewh\permh}{\Viewb}\reig
=\sum_{i=1}^\litn \elem_i^\top\permh^\top\simmat{\Viewh}{\Viewb}\reig
=\sum_{i=1}^\litn \elem_i^\top\permh^\top \somevech,
$$
and an identical calculation says sharing perspective $\Viewt\permt$ yields an expected R payoff of $\sum_{i=1}^\litn \elem_i^\top\permt^\top \somevect$. The previous paragraph establishes the desired ranking of these two quantities, and so $\Viewt\permt$ is as required. Therefore, $\Viewt$ is $\rlo$-extremal.
\end{proof}

\begin{lemma}\label{lem: schur-horn translated to perspectives}
Fix a vector $\someeigen\in\real^\bign$ and a perspective $\Viewb$. Given $\somevec\in\real^\bign$, vector $\someeigen$ majorizes $\somevec$ if and only if a perspective $\View$ exists such that $\simmat{\View}{\Viewb}\someeigen=\somevec$.\end{lemma}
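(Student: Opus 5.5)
The plan is to reduce the statement to the Schur--Horn theorem by translating perspectives into orthogonal matrices. Recall that $\View$ is a perspective if and only if $\View^\top\mvar\View=I$, so $\someorth:=\cormat{\View}{\Viewb}=\View^\top\mvar\Viewb$ is orthogonal whenever both are perspectives. Indeed, $\someorth\someorth^\top=\View^\top\mvar\Viewb\Viewb^\top\mvar\View$ and $\Viewb\Viewb^\top=\inv{\mvar}$, since $\Viewb^\top\mvar\Viewb=I$ and $\Viewb$ is invertible. Conversely, every orthogonal $\someorth$ arises this way: set $\View:=\Viewb\someorth^\top$. Then $\View^\top\mvar\View=\someorth\Viewb^\top\mvar\Viewb\someorth^\top=I$, and $\View^\top\mvar\Viewb=\someorth$. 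So as $\View$ ranges over perspectives, $\cormat{\View}{\Viewb}$ ranges over all orthogonal matrices, and $\simmat{\View}{\Viewb}$ is its entrywise square.

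Next I would rewrite the target vector as a diagonal. Let $\somedmat:=\mathrm{diag}(\someeigen)$. Then
\[
\brac{\simmat{\View}{\Viewb}\someeigen}_i=\sum_j \someorth_{ij}^2\someeigen_j=\brac{\someorth\somedmat\someorth^\top}_{ii},
\]
which is the same identity used in the proof of \autoref{lem: extremal perspective exists}. Hence the claim is equivalent to the following: $\someeigen$ majorizes $\somevec$ if and only if some orthogonal $\someorth$ makes $\somevec$ the diagonal of $\someorth\somedmat\someorth^\top$. The matrices $\someorth\somedmat\someorth^\top$ are exactly the real symmetric matrices with eigenvalue vector $\someeigen$.

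The two directions then split as follows.

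\textbf{``If.''} The entrywise square of an orthogonal matrix is bistochastic, because its rows and columns are unit vectors. So $\somevec=\simmat{\View}{\Viewb}\someeigen$ is a bistochastic image of $\someeigen$, which is the paper's definition of majorization. This direction needs no Schur--Horn.

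\textbf{``Only if.''} This is the real content. I would invoke the Schur--Horn theorem in its existence form \citep[Theorem 4.B.2 / 9.B.2 of][]{marshall2011inequalities}: if $\someeigen$ majorizes $\somevec$, there is a real symmetric matrix with eigenvalues $\someeigen$ and diagonal $\somevec$. Diagonalizing that matrix as $\someorth\somedmat\someorth^\top$ with $\someorth$ real orthogonal, and setting $\View:=\Viewb\someorth^\top$, gives the required perspective.

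The main obstacle is making sure the cited converse is genuinely the real-orthogonal version. Some statements of Schur--Horn produce only a unitary conjugation, or produce only an orthostochastic-versus-bistochastic distinction. If a clean reference is unavailable, I would fall back on the standard inductive construction. Order both vectors, then build a real symmetric matrix with prescribed diagonal and spectrum by a sequence of $2\times 2$ Givens rotations. Each rotation moves one diagonal entry to a target value between two current eigen-type entries, following the $T$-transform decomposition of majorization. This ensures every step stays real orthogonal.
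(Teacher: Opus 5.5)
Your proposal is correct and follows essentially the same route as the paper. You identify $\{\cormat{\View}{\Viewb}:\ \View \text{ a perspective}\}$ with the orthogonal group (you via $\View:=\Viewb\someorth^\top$, the paper via normalizing by $\rmvar$ and the group property), so that $\simmat{\View}{\Viewb}$ ranges over the orthostochastic matrices, and then apply Schur--Horn; your worry about a unitary-only version goes away if you cite Horn's theorem in its orthostochastic form, as the paper does with Theorem 2.B.6 of \citet{marshall2011inequalities}, which makes the Givens-rotation fallback unnecessary.
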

\begin{proof}
Let us observe that the set of possible $\simmat{\View}{\Viewb}$ matrices, as we range over all perspectives $\View$, coincides with the set of all orthostochastic matrices---that is, the entrywise squares of orthogonal matrices. Indeed, defining the orthogonal matrix $\Viewbt:=\rmvar\Viewb$, the former set is
\begin{eqnarray*}
\curlyb{\cormat{\View}{\Viewb}:\ \View\text{ a perspective}}
&=& \curlyb{\View^\top\mvar\Viewb:\ \View\in\matnn \text{ has } \View^\top\mvar\View=I} \\
&=& \curlyb{\View^\top\rmvar\Viewbt:\ \View\in\matnn \text{ has } \rmvar\View \text{ orthogonal}} \\
&=& \curlyb{\Viewt^\top\Viewbt:\ \Viewt\in\matnn \text{ orthogonal}},
\end{eqnarray*}
which is equal to the set of orthogonal matrices because the latter forms a group under multiplication. So $\curlyb{\simmat{\View}{\Viewb}:\ \View\text{ a perspective}}$ is the set of orthostochastic matrices.

Therefore, $\somevec$ admits a perspective $\View$ with $\simmat{\View}{\Viewb}\someeigen=\somevec$ if and only if $\somevec=\somemat\someeigen$ for some orthostochastic matrix $\somemat$. But then the Schur-Horn theorem \citep[Theorem 2.B.6,][]{marshall2011inequalities} shows the latter condition is equivalent to $\someeigen$ majorizing $\somevec$.
\end{proof}

\begin{lemma}\label{lem: possible R payoff sets}
Fix an R loss matrix $\rlo$, and let $\reig=(\reig_i)_{i=1}^\bign$ be the weights associated with some $\rlo$-perspective. Given $\epset\subseteq\real$, the following are equivalent:
\begin{enumerate}[(i)]
    \item\label{lem: possible R payoff sets, item perspective} Some perspective $\View$ exists such that $
    \epset$ is the set of R payoffs associated with strategy profiles communicating perspectives that are issue-equivalent to $\View$. (Hence, $\epset$ is finite and nonempty.)
    \item\label{lem: possible R payoff sets, item extremal} Condition \eqref{lem: possible R payoff sets, item perspective} holds, and is witnessed by a perspective that is the unique $\slo$-perspective (which is therefore $\rlo$-extremal) for some S loss $\slo$.

    \item\label{lem: possible R payoff sets, item majorization} 
    Some $\issur\in\real^\bign$ majorized by $\reig$ has $\epset=\curlyb{\sum_{i\in\indset}\issur_i:\ \indset\subseteq[\bign] \text{ has } |\indset|=\litn }.$
\end{enumerate}
Moreover, given any S loss matrix $\slo$: the highest, lowest Pareto-optimal, and lowest equilibrium R payoffs all come from some set $\epset$ satisfying the above equivalent conditions.
\end{lemma}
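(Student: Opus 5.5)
The plan is to prove the cycle \eqref{lem: possible R payoff sets, item perspective}$\Rightarrow$\eqref{lem: possible R payoff sets, item majorization}$\Rightarrow$\eqref{lem: possible R payoff sets, item extremal}$\Rightarrow$\eqref{lem: possible R payoff sets, item perspective}, and then to handle the ``moreover'' claim using \autoref{lem: extremal perspective exists}. The implication \eqref{lem: possible R payoff sets, item extremal}$\Rightarrow$\eqref{lem: possible R payoff sets, item perspective} is immediate.

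For \eqref{lem: possible R payoff sets, item perspective}$\Rightarrow$\eqref{lem: possible R payoff sets, item majorization}, fix $\View$ and set $\issur:=\simmat{\View}{\Viewb}\reig$. By \autoref{lem: schur-horn translated to perspectives}, $\reig$ majorizes $\issur$. The perspectives issue-equivalent to $\View$ are $\View\somedmat\perm$ with $\somedmat$ a diagonal sign matrix and $\perm$ a permutation matrix. Sign flips leave entrywise squares of $\cormat{\View}{\Viewb}$ unchanged, so $\simmat{\View\somedmat\perm}{\Viewb}\reig=\perm^\top\issur$. By \autoref{lem: similarity payoff calculation}, communicating $\View\somedmat\perm$ then gives R the sum of the first $\litn$ entries of $\perm^\top\issur$. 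As $\perm$ ranges over all permutations, these are exactly the sums $\sum_{i\in\indset}\issur_i$ over $|\indset|=\litn$, so $\epset$ has the stated form.

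For \eqref{lem: possible R payoff sets, item majorization}$\Rightarrow$\eqref{lem: possible R payoff sets, item extremal}, take $\issur$ majorized by $\reig$. \autoref{lem: schur-horn translated to perspectives} gives a perspective $\View$ with $\simmat{\View}{\Viewb}\reig=\issur$, and the previous paragraph shows $\View$ witnesses \eqref{lem: possible R payoff sets, item perspective}. Next choose distinct positive weights $\seig_1,\ldots,\seig_\bign$ and define $\slo:=\View\,\mathrm{diag}(\seig)\,\View^\top$. This matrix is positive definite since $\View$ is invertible. By the defining identity of an $\slo$-perspective, $\View$ is an $\slo$-perspective, and it is the unique one by \autoref{lem: distinct weights means unique perspective}. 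It is then trivially $\rlo$-extremal: every $\slo$-perspective $\Viewt$ is issue-equivalent to $\View$, so one may take $\bar\View=\underline\View=\Viewt$.

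For the ``moreover'' part, fix $\slo$ and let $\View$ be an $\rlo$-extremal $\slo$-perspective, which exists by \autoref{lem: extremal perspective exists}. Let $\epset$ be the set of R payoffs from profiles communicating perspectives issue-equivalent to $\View$; it satisfies \eqref{lem: possible R payoff sets, item perspective}. By \autoref{prop: eqm char}, every equilibrium payoff pair $(\spay,\rpay)$ comes from communicating some $\slo$-perspective. Extremality then gives issue-equivalents $\bar\View,\underline\View$ of $\View$ with the same S payoff and R payoffs $\underline\rpay\le\rpay\le\bar\rpay$, and these are themselves equilibria by \autoref{prop: eqm char}. This immediately yields the highest and lowest equilibrium R payoffs from $\epset$, since both extremes are attained within that finite set.

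The main obstacle is the lowest Pareto-optimal payoff, because the equilibrium set may be infinite. Here I argue as follows. Any Pareto-optimal pair $(\spay,\rpay)$ must have $\rpay=\bar\rpay$, since otherwise the equilibrium communicating $\bar\View$ weakly dominates it. Hence every Pareto-optimal payoff pair is realized by an issue-equivalent of $\View$, and so the Pareto set is finite. Its lowest R payoff therefore lies in $\epset$.
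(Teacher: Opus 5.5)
Your proposal is correct and follows essentially the same route as the paper: the equivalences rest on the same three ingredients, namely \autoref{lem: similarity payoff calculation} (with the permutation/sign bookkeeping), \autoref{lem: schur-horn translated to perspectives}, and an $\slo$ built from $\View$ with distinct weights. The ``moreover'' part likewise rests on an $\rlo$-extremal $\slo$-perspective from \autoref{lem: extremal perspective exists} together with \autoref{prop: eqm char}. The differences are cosmetic: you arrange the equivalences as a cycle instead of proving (i)$\Leftrightarrow$(ii) and (i)$\Leftrightarrow$(iii) separately, and you spell out the Pareto-optimal case (every Pareto-optimal pair must coincide with the $\bar\View$ pair), which the paper compresses into ``the claim follows.''
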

\begin{proof}
First, let us see \eqref{lem: possible R payoff sets, item perspective} is equivalent to its strengthening \eqref{lem: possible R payoff sets, item extremal}. We need only verify that any perspective $\View$ can be the unique $\slo$-perspective for some S loss $\slo$. And indeed, the S loss matrix given by $\slo=\sum_{j=1}^\bign j\View\elem_j\elem_j^\top\View^\top$ admits $\View$ as an $\slo$-perspective with associated weights $(j)_{j=1}^\bign$. Thus, by \autoref{lem: distinct weights means unique perspective}, this $\slo$-perspective is unique, hence $\rlo$-extremal.

Next, let us observe \eqref{lem: possible R payoff sets, item perspective} is equivalent to \eqref{lem: possible R payoff sets, item majorization}. To see it, let $\Viewb$ be the $\rlo$-perspective with associated weights $(\reig_i)_{i=1}^\bign$. \autoref{lem: similarity payoff calculation} tells us \eqref{lem: possible R payoff sets, item perspective} is equivalent to the existence of some perspective $\View$ such that $\issur=\simmat{\View}{\Viewb}\reig$ has
$\epset=\curlyb{\sum_{i\in\indset}\issur_i:\ \indset\subseteq[\bign] \text{ has } |\indset|=\litn }.$ This condition is then equivalent to \eqref{lem: possible R payoff sets, item majorization} by \autoref{lem: schur-horn translated to perspectives}.

Having established the three-way equivalence, we need only prove the last assertion. To that end, fix some S loss $\slo$. Let $\View$ be some $\rlo$-extremal $\slo$-perspective (which exists by \autoref{lem: extremal perspective exists}). \autoref{prop: eqm char} tells us all profiles that communicate a perspective issue-equivalent to $\View$ are equilibria. \autoref{prop: eqm char} also says every equilibrium communicates some $\slo$-perspective, and so weakly Pareto dominates one of these equilibria and so is weakly Pareto dominated by one of these equilibria. The claim follows.
\end{proof}

\begin{proof}[Proof of \autoref{prop: receiver payoff set}]

Recall, for loss matrix $\rlo$, the vector $(\reig_i)_{i=1}^\bign\in\real^\bign$ records the weights on the $\bign$ issues according an $\rlo$-perspective $\Viewb$. Permuting columns of $\Viewb$, we may assume $\reig_1\geq\cdots\geq\reig_\bign$. Given \autoref{lem: possible R payoff sets}, condition \eqref{prop: receiver payoff set, payoff part} holds if and only if some $\issur\in\real^\bign$ majorized by $\reig$ has \begin{equation}\label{eqn: relate high and low payoffs to picking entries of fixed vector}
    \max_{\indset\subseteq[\bign]:\ |\indset|=\litn}\ \sum_{i\in\indset}\issur_i=\highur \ \text{ and }\ \min_{\indset\subseteq[\bign]:\ |\indset|=\litn}\ \sum_{i\in\indset}\issur_i=\lowur.
\end{equation}
This condition is invariant to permuting the entries of $\issur$, so we can without loss of generality focus on $\issur$ with $\issur_1\geq\cdots\geq\issur_\bign$. In this case, the majorization constraint on $\issur$ says $\sum_{i=1}^\somen(\reig_i-\issur_i)$ is nonnegative for every $\somen\in[\bign]$ and zero for $\somen=\bign$; and condition~\eqref{eqn: relate high and low payoffs to picking entries of fixed vector} reduces to $\highur=\sum_{i=1}^\litn\issur_i$ and $\lowur=\sum_{i=\bign-\litn+1}^\bign\issur_i$. Now, let $\litnt:=\min\{\litn,\bign-\litn\}$. Replacing the first $\litnt$ entries of $\issur$ with their average, replacing the last $\litnt$ entries of $\issur$ with their average, and (if $\litn\neq\tfrac\bign2$) replacing all other entries with their average preserves the majorization constraint (and the property that $\issur_1\geq\cdots\geq\issur_\bign$) and does not affect condition~\eqref{eqn: relate high and low payoffs to picking entries of fixed vector}. We can therefore restrict attention to $\issur$ that are weakly decreasing, and constant on each of these three sets of indices. Moreover, note that a $\issur$ of this form is majorized by $\reig$ if and only if it satisfies 
\begin{equation}\label{eqn: simplified majorization}
\sum_{i=1}^\bign\issur_i=\sum_{i=1}^\bign\reig_i \text{ and } \sum_{i=1}^\somen\issur_i\leq\sum_{i=1}^\somen\reig_i \text{ for each } \somen\in\{\litn,\bign-\litn\}.
\end{equation}
We can further simplify \eqref{eqn: simplified majorization} by substituting in \eqref{eqn: relate high and low payoffs to picking entries of fixed vector}. To that end, let $\medur$ denote $\litn$ times the average value of all of the $\issur$ entries other than the first $
\litnt$ and last $\litnt$ if $\litn\neq\tfrac\bign2$, and let $\medur$ be arbitrary otherwise. Then the equality constraint \eqref{eqn: relate high and low payoffs to picking entries of fixed vector} is equivalent to $\lowur+\highur+\tfrac{\bign-2\litn}{\litn}\medur=\tfrac{\bign}{\litn}\medurb$, and (given this equality constraint) the two inequality constraints are equivalent to $\lowur\geq\lowurb$ and $\highur\leq\highurb$.\footnote{To see this form of the equality constraint, observe that $\tfrac{\bign}{\litn}\medurb$ is the sum of all entries of $\reig$ and $\lowur+\highur$ is the sum of the first $\litn$ and last $\litn$ entries of $\issur$. The term $\tfrac{\bign-2\litn}{\litn}\medur$ can be any sign, and adjusts the sum of $\issur$ entries for the $|\bign-2\litn|$ entries in the middle that are uncounted or double-counted.}

Summarizing the conclusion of the previous paragraph, we have shown a pair $(\highur,\lowur)\in\real^2$ satisfies \eqref{prop: receiver payoff set, payoff part} 
if and only if some $\medur$ has $\lowurb\leq\lowur\leq\medur\leq\highur\leq\highurb$ and $\lowur+\highur+\tfrac{\bign-2\litn}{\litn}\medur=\tfrac{\bign}{\litn}\medurb$. But now, a given $\lowur\leq\highur$ admit some $\medur$ between them such that $\lowur+\highur+\tfrac{\bign-2\litn}{\litn}\medur=\tfrac{\bign}{\litn}\medurb$ if and only if $\tfrac{\bign}{\litn}\medurb$ is between $\lowur+\highur+\tfrac{\bign-2\litn}{\litn}\lowur$ and $\lowur+\highur+\tfrac{\bign-2\litn}{\litn}\highur$. 
Rearranging, this condition is the same as $\medurb$ lying between $\tfrac\litn\bign\lowur+\paren{1-\tfrac\litn\bign}\highur$ and $\tfrac\litn\bign\highur+\paren{1-\tfrac\litn\bign}\lowur$---which also requires that $\medurb$ lie between $\lowur$ and $\highur$. The proposition follows.
\end{proof}

\begin{proof}[Proof of \autoref{cor: maximizing worst minimizes best}]
The equivalence of the first two conditions holds because the range of $\scalar$ specified in \autoref{prop: receiver payoff set} belongs to $(0,1)$. 

Now, for any perspective $\View$, the vector $\simmat{\View}{\Viewb}\reig$ has its entries averaging to $\tfrac1\litn\medurb$ because $\simmat{\View}{\Viewb}$ is stochastic. The equivalence of the third item and (say) the first therefore follows directly from \autoref{prop: eqlbm payoffs}.    
\end{proof}

\begin{definition}\label{def: genericity}
Given a subset $\vectset$ of a Euclidean space, say a subset of $\vectset$ is \textbf{generic} in $\vectset$ if it contains an open and dense subset $\vectset$, whose complement in the affine span of $\vectset$ is Lebesgue null in that affine span. 
\end{definition}

\begin{notation}
Let $\Loss$ denote the set of positive definite matrices, a convex open set of the Euclidean space of pairs of symmetric $\bign\times\bign$ matrices. Say a property of matrix $\rlo$ or $\slo$ [resp. pairs $(\slo,\rlo)$] holds \textbf{generically} if the set of matrices [resp. pairs] for which it holds is generic in $\Loss$ [resp. $\Loss^2$].
\end{notation}

\begin{remark}
Note that the definition of genericity in \autoref{def: genericity} is stronger than measure theoretic genericity (i.e., its complement being Lebesgue null in the ambient affine space) and stronger than topological genericity (containing a countable intersection of open and dense sets). It follows that all our results that invoke generic parameter choices will continue to hold if one replaces our definition with either of these notions of genericity.
\end{remark}

\begin{lemma}\label{lem: semialgebraic and dense implies generic}
If $\vectset$ is a semialgebraic subset of some Euclidean space, then any semialgebraic dense subset of $\vectset$ is generic in $\vectset$. In particular, any semialgebraic dense subset of $\Loss$ or $\Loss^2$ is generic.
\end{lemma}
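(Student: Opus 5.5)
The plan is to combine two standard facts from real algebraic geometry: a semialgebraic set has a well-defined dimension, and a semialgebraic set of lower dimension is nowhere dense and Lebesgue null in its ambient affine space. Let $\vectset$ be semialgebraic, let $\somenhd\subseteq\vectset$ be semialgebraic and dense, and let $\mathcal{A}$ be the affine span of $\vectset$, of dimension $d$. Work inside $\mathcal{A}\cong\real^d$, where $\vectset$ has nonempty interior when $\dim\vectset=d$.

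\emph{Step 1: find the open dense subset.} Set $\mathcal{O}:=\mathrm{int}_\vectset(\somenhd)$, the relative interior of $\somenhd$ in $\vectset$. Interiors and closures of semialgebraic sets are semialgebraic, so $\mathcal{O}$ and $\vectset\setminus\mathcal{O}$ are semialgebraic. Since $\somenhd$ is dense in $\vectset$, its complement $\vectset\setminus\somenhd$ has empty interior in $\vectset$. The frontier of a semialgebraic set has strictly lower dimension than the set (Bochnak--Coste--Roy, Prop.~2.8.13). Applying this to $\vectset\setminus\somenhd$ relative to $\vectset$ shows that $\vectset\setminus\mathcal{O}=\overline{\vectset\setminus\somenhd}\cap\vectset$ has empty interior in $\vectset$. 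Hence $\mathcal{O}$ is open and dense in $\vectset$.

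\emph{Step 2: nullity of the complement in $\mathcal{A}$.} The complement of $\mathcal{O}$ in $\mathcal{A}$ is $(\mathcal{A}\setminus\vectset)\cup(\vectset\setminus\mathcal{O})$. The set $\vectset\setminus\mathcal{O}$ is semialgebraic with empty interior in $\vectset$.

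\textbf{Main obstacle.} For Step 2 to go through as written, $\mathcal{A}\setminus\vectset$ must be null. This holds in the cases we need. For $\vectset=\Loss$ or $\Loss^2$, which are open subsets of their spans, the affine span is the whole space of (pairs of) symmetric matrices, and the clause ``complement in the affine span'' must be read as the complement of $\mathcal{O}$ within $\vectset$ being null. I will interpret the definition this way: the set $\vectset\setminus\mathcal{O}$ is null in $\mathcal{A}$.

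This follows if $\dim(\vectset\setminus\mathcal{O})<d$. When $\vectset$ is open in $\mathcal{A}$, a semialgebraic set with empty interior in $\mathcal{A}$ has dimension below $d$ (a semialgebraic set of full dimension $d$ contains an open cell, by cell decomposition). It is then a finite union of $C^1$ submanifolds of dimension less than $d$, and so is Lebesgue null.

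For general $\vectset$, I would decompose $\vectset$ into cells and argue cell by cell. On each cell of top dimension $d$, the cell is open in $\mathcal{A}$, and the argument above applies. Cells of dimension below $d$ are null automatically. This case, where $\vectset$ is not open in its span, is where care is needed, and I expect the cell-decomposition bookkeeping to be the only delicate point.

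\emph{Step 3: the particular case.} $\Loss$ is semialgebraic, being defined by the positivity of leading principal minors (Sylvester's criterion) inside the space of symmetric matrices. It is open in that space, and so is $\Loss^2$ in the space of pairs. The ``in particular'' clause therefore follows directly from the open case of Step 2.
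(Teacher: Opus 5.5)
Your argument is correct in substance, but it takes a different route from the paper. The paper takes one cell decomposition compatible with $\somenhd$ and $\vectset_0:=\vectset\setminus\somenhd$. It notes that no full-dimensional cell can lie in $\vectset_0$, since such a cell would be a nonempty open subset of $\vectset$ missing the dense set. It then uses the union of the full-dimensional cells as the open dense set; the complement in $\vectset$ is a finite union of lower-dimensional cells, each nowhere dense and null. You instead take $\mathcal{O}=\mathrm{int}_\vectset(\somenhd)$, get density from the frontier inequality $\dim(\overline{A}\setminus A)<\dim A$, and get nullity from a dimension count. The paper's construction is more explicit, but it tacitly needs $\vectset$ to lie in the closure of its interior: for a closed disk with a protruding segment, the union of the top-dimensional cells is not dense in $\vectset$. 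This is harmless because the lemma is only applied to the open sets $\Loss$ and $\Loss^2$. Your route, by contrast, covers arbitrary semialgebraic $\vectset$, as the lemma's first sentence claims.

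Three things to tighten.

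First, drop ``This holds in the cases we need.'' The set $\mathcal{A}\setminus\Loss$ contains the open cone of negative definite matrices, so the literal clause fails for $\Loss$ whatever open set you choose. Reading it as ``$\vectset\setminus\mathcal{O}$ is null in $\mathcal{A}$'' is therefore forced, and it is exactly what the paper's proof establishes.

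Second, in Step 1 the global inequality $\dim(\overline{\vectset_0}\setminus\vectset_0)<\dim\vectset_0$ does not by itself give empty interior, so you need to localize. Suppose some nonempty $W=\vectset\cap B$, with $B$ an open ball, had $W\subseteq\overline{\vectset_0}$. Then $W\cap\vectset_0$ and $W\cap\somenhd$ are both dense in $W$, and $W\cap\somenhd\subseteq\overline{W\cap\vectset_0}\setminus(W\cap\vectset_0)$. This yields $\dim(W\cap\somenhd)<\dim(W\cap\vectset_0)\leq\dim W=\dim(W\cap\somenhd)$, a contradiction.

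Third, the general case of Step 2 needs no cell-by-cell bookkeeping. If $\dim(\vectset\setminus\mathcal{O})=d$, then $\vectset\setminus\mathcal{O}$ has nonempty interior in $\mathcal{A}$, hence in $\vectset$. That contradicts its having empty interior in $\vectset$, so $\dim(\vectset\setminus\mathcal{O})<d$ and the set is null.
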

\begin{proof}
Let $\vectset_1$ be any dense semialgebraic set in $\vectset$, and let $\vectset_0:=\vectset\setminus\vectset_1$. Throughout this proof, we refer to definitions and results from \cite{van1998tame}---hereafter vdD98. 
By cell decomposition (Proposition 3.2.11 of vdD98, which applies by Corollary 2.2.11 of vdD98), each of $\vectset_1$ and $\vectset_0$ can be represented as a disjoint union of finitely many  cells (as in Definition 2.2.3 of vdD98) defined by polynomial functions. As noted in vdD98 immediately after this definition, a semialgebraic set has the same dimension (in the sense of Definition 4.1.1 from vdD98) as the underlying ambient linear space if and only if it contains a full-dimensional cell. 
Hence, because $\vectset_1$ is dense in $\vectset$ and semialgebraic, it must be that all of the cells comprising $\vectset_0$ have strictly lower dimension than $\vectset$. 
Each low-dimensional cell in the decomposition therefore has low-dimensional closure too by vdD98's Theorem 4.1.8, and so is nowhere dense (because it has empty interior and is open in its closure) and Lebesgue null \citep[Lemma 7.25][]{rudin1987real}. 
The union of all full-dimensional cells in the decomposition is then a subset of $\vectset_1$ as required for the first assertion.

To see the second assertion, it then remains to note the convex set $\Loss$ (hence also $\Loss^2$) is semialgebraic. And indeed, $\slo$ is positive definite matrix if and only if $\somevec^\top\slo\somevec>0$ for every nonzero $\somevec\in\real^\bign$.
\end{proof}

\begin{lemma}\label{lem: generically distinct weights}
For a generic loss matrix $\slo$, the $\bign$ weights associated with some $\slo$-perspective are distinct.
\end{lemma}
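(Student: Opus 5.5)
The plan is to use \autoref{lem: semialgebraic and dense implies generic}: exhibit the set $\vectset_1$ of positive definite $\slo$ whose $\slo$-perspective weights are distinct, show it is semialgebraic, and show it is dense in $\Loss$. By the proof of \autoref{lem: set of S perspectives}, the weights of any $\slo$-perspective are exactly the eigenvalues (with multiplicity) of $\slo\mvar$, or equivalently of the symmetric positive definite matrix $\slob:=\rmvar\slo\rmvar$. So the set is
$$\vectset_1=\curlyb{\slo\in\Loss:\ \rmvar\slo\rmvar \text{ has } \bign \text{ distinct eigenvalues}}.$$

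For semialgebraicity, I will write $\vectset_1$ as $\Loss$ intersected with the set where the discriminant of the characteristic polynomial $t\mapsto\det(tI-\slo\mvar)$ is nonzero. That discriminant is a polynomial in the entries of $\slo$, since $\mvar$ is fixed. A polynomial has distinct roots exactly when its discriminant is nonzero. This makes $\vectset_1$ the intersection of a semialgebraic set with the complement of an algebraic set, hence semialgebraic. Alternatively, I can cite Tarski--Seidenberg on the first-order formula ``there do not exist $t$ and linearly independent $x,y$ with $\slo\mvar x=tx$ and $\slo\mvar y=ty$.''

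For density, I fix $\slo\in\Loss$ and $\varepsilon>0$. By the spectral theorem, $\rmvar\slo\rmvar=\Viewt\somedmat\Viewt^\top$ with $\Viewt$ orthogonal and $\somedmat$ diagonal with positive entries. I will perturb the diagonal entries by less than $\delta$ so that they become distinct, positive, and give a matrix $\somedmat'$. Then I set $\slo':=\inv{\rmvar}\Viewt\somedmat'\Viewt^\top\inv{\rmvar}$. This $\slo'$ is positive definite, it lies in $\vectset_1$, and $\slo'\to\slo$ as $\delta\to0$ because conjugation is continuous.

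Applying \autoref{lem: semialgebraic and dense implies generic} then gives genericity. I do not expect a real obstacle. The only step needing care is the semialgebraicity argument, and in particular the identification of the weights with the eigenvalues of $\slo\mvar$, which I take directly from \autoref{lem: set of S perspectives}.
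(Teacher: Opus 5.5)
Your proposal is correct and follows essentially the paper's proof. The paper also invokes \autoref{lem: semialgebraic and dense implies generic} and proves density by perturbing the weights of an $\slo$-perspective $\View$ to $\seig_j+j\varepsilon$, which is your perturbation since $\View=\inv{\rmvar}\Viewt$; the only addition is your explicit discriminant argument for semialgebraicity, which the paper simply asserts.
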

\begin{proof}
Let $\Loss_1\subseteq\Loss$ denote the set of $\slo\in\Loss$ with distinct weights. Because $\Loss_1$ is semialgebraic, \autoref{lem: semialgebraic and dense implies generic} tells us it suffices to see an arbitrary $\slo\in\Loss$ is a limit of members of $\Loss_1$. To that end, take some $\slo$-perspective $\View$ (which exists by \autoref{lem: set of S perspectives}), witnessed by weights $\seig_1,\ldots,\seig_\bign$. We can then write $\slo=\sum_{j=1}^\bign \seig_j\View\elem_j\elem_j^\top\View^\top.$
For $\varepsilon>0$, define the positive definite matrix $$\slo^\varepsilon=\sum_{j=1}^\bign (\seig_j+j\varepsilon)\View\elem_j\elem_j^\top\View^\top,$$
which converges to $\slo$ as $\varepsilon\to0$. By construction, $\View$ is a $\slo$-perspective with weights $(\seig_j+j\varepsilon)_{j=1}^\bign$. Meanwhile, whenever $\varepsilon>0$ is small enough, the $\bign$ numbers $(\seig_j+j\varepsilon)_{j=1}^\bign$ are all distinct, so that $\slo^\varepsilon\in\Loss_1$.
\end{proof}

\begin{proof}[Proof of \autoref{cor: payoff set dominance}]
Let $\Viewb$ is an $\rlo$-perspective with weights $(\reig_i)_{i=1}^\bign$, and let $\lowurb,\medurb,\highurb$ be as defined in the preamble to \autoref{prop: receiver payoff set}. Say a pair $(\highur,\lowur)$ is payoff feasible if it satisfies the two equivalent conditions of that proposition---or equivalently, some S loss $\slo$ exists such that $\highur$ and $\lowur$ are R's best and worst equilibrium payoffs, respectively. In what follows, we detail the precise conditions for each of \eqref{cor: payoff set dominance, item same best} and \eqref{cor: payoff set dominance, item strict Pareto} to hold, and then we argue that both conditions hold for generic $\rlo$.\footnote{The proof of the genericity result can be simplified by not characterizing the exact conditions for each of \eqref{cor: payoff set dominance, item same best} and \eqref{cor: payoff set dominance, item strict Pareto}. We present the current version for completeness.}

First, if $\litn=\tfrac12\bign$, then \autoref{prop: receiver payoff set}  tells us any payoff-feasible $(\highur,\lowur)$ has $\tfrac12\highur+\tfrac12\lowur=\medurb$, and so no two payoff-feasible pairs are coordinate-wise ranked, meaning neither \eqref{cor: payoff set dominance, item same best} nor \eqref{cor: payoff set dominance, item strict Pareto} holds.

Now, using \autoref{prop: receiver payoff set}, let us see \eqref{cor: payoff set dominance, item same best} is equivalent to having $\litn\neq\tfrac\bign2$ and $\reig$ not being a constant vector. 
Given the previous paragraph, we can focus on the case of $\litn\neq\tfrac12\bign$. If $\reig$ is a constant vector, then $(\medurb,\medurb)=(\highurb,\lowurb)$ is the unique payoff-feasible pair, so \eqref{cor: payoff set dominance, item same best} fails. 
Conversely, if $\reig$ is not a constant vector, then  $\lowurb<\medurb<\highurb$, and $\co\curlyb{\tfrac{\litn}{\bign},1-\tfrac{\litn}{\bign}}$ contains a neighborhood of $\tfrac12$. So take any $\varepsilon>0$ small enough that $\varepsilon<\medurb-\lowurb$ and $\varepsilon<\highurb-
\medurb$. Then, any payoff pair in a small enough neighborhood of $(\medurb+\varepsilon,\medurb-\varepsilon)$ is payoff feasible, delivering \eqref{cor: payoff set dominance, item same best}.

Next, we argue that \eqref{cor: payoff set dominance, item strict Pareto} is equivalent to either having $\litn<\tfrac\bign2$ and the last $\bign-\litn$ entries of $\reig$ all coinciding, or having $\litn>\tfrac\bign2$ and the first $\litn$ entries of $\reig$ all coinciding. Given the paragraph before the previous one, we can focus on the case of $\litn\neq\tfrac12\bign$. Now, let $\litnt:=\min\{\litn,\bign-\litn\}$. 
Then, by \autoref{prop: receiver payoff set}, a given $\lowur\in\real$ is such that $(\highurb,\lowur)$ is payoff feasible if and only if 
$\lowur\in[\lowurb,\medurb]$ and some $\scalar\in\brac{\tfrac{\litnt}{\bign},1-\tfrac{\litnt}{\bign}}$ has $\medurb=(1-\scalar)\lowur+\scalar\highurb$. 
So \eqref{cor: payoff set dominance, item same best} holds if and only if some $\lowur>\lowurb$ satisfies these conditions. We know $(\highurb,\lowurb)$ is payoff feasible because (\autoref{prop: eqlbm payoffs}) the best and worst equilibrium payoffs for sender $\slo=\rlo$ are $\highurb$ and $\lowurb$. We can then raise $\lowurb$ and maintain payoff feasibility if and only if $\scalar$ can be correspondingly lowered to maintain $\medurb=(1-\scalar)\lowur+\scalar\highurb$. So \eqref{cor: payoff set dominance, item same best} holds if and only if \begin{equation*}
    \medurb>\paren{1-\tfrac{\litnt}{\bign}}\lowurb+\tfrac{\litnt}{\bign}\highurb.
\end{equation*}
If $\litn<\tfrac{\bign}2$ [resp. $\litn>\tfrac{\bign}2$] then this inequality rearranges to say that the average of the last $\bign-\litn$ [resp. first $\litn$] entries of $\reig$ is strictly greater than the average of the last $\litn$ [resp. first $\bign-\litn$] entries of $\reig$; this condition is  in turn equivalent to the last $\bign-\litn$ [resp. first $\litn$] entries of $\reig$ all coinciding.

Finally, we turn to the genericity claim. If $\litn\neq\tfrac{\bign}2$, then $\max\curlyb{\litn,\bign-\litn}\geq2$. Hence, given the above characterizations, failure of either of \eqref{cor: payoff set dominance, item same best} or \eqref{cor: payoff set dominance, item strict Pareto} requires $\reig$ to have at least two entries in common. The corollary then follows from applying \autoref{lem: generically distinct weights} to $\rlo$.
\end{proof}

\begin{proof}[Proof of \autoref{prop: ranking sender perspectives}]
In what follows, let $\somemat_1:=\simmat{\View_1}{\Viewb}$ and $\somemat_2:=\simmat{\View_2}{\Viewb}$; and let $\wtset:=\curlyb{\wt\in\{0,1\}^\bign:\ \sum_{i=1}^\bign\wt_i=\litn}$, the set of weight vectors that put weight $1$ on $\litn$ components and $0$ on $\bign-\litn$. 

By assumption, for either $\ell\in\{1,2\}$, the matrix $\slo_\ell$ admits a unique perspective, meaning any perspective is obtained from $\View_\ell$ by permuting columns and multiplying some by $-1$. Multiplying a column of perspective $\View$ by $-1$ leaves $\simmat{\View}{\Viewb}$ unchanged, whereas permuting columns of $\View$ permutes the rows of $\simmat{\View}{\Viewb}$ in the same way. Therefore, in light of \autoref{prop: eqm char} and \autoref{lem: similarity payoff calculation}, if the weights associated with $\rlo$-perspective $\Viewb$ are $(\reig_i)_{i=1}^\bign\in\real^\bign$, then $\goodur(\slo_\ell,\rlo)=\max_{\wt\in\wtset} \curlyb{\wt^\top\somemat_\ell\reig}$ and $
\badur(\slo_\ell,\rlo)=\min_{\wt\in\wtset} \curlyb{\wt^\top\somemat_\ell\reig}=-\max_{\wt\in\wtset} \curlyb{-\wt^\top\somemat_\ell\reig}.
$

Now, say $\slo_1\succsim_+\slo_2$ if every $\rlo$ with perspective $\Viewb$ has $\goodur(\slo_1,\rlo)\geq\goodur(\slo_2,\rlo)$; and say $\slo_1\succsim_-\slo_2$ if every $\rlo$ with perspective $\Viewb$ has $\badur(\slo_1,\rlo)\leq\badur(\slo_2,\rlo)$. Observe,
\begin{eqnarray*}
\slo_1\succsim_{\pm}\slo_2
&\iff& \text{every } \reig\in\real^\bign_{++} \text{ has } \max_{\wt\in\wtset} \curlyb{\pm\wt^\top\somemat_1\reig} \geq \max_{\wtb\in\wtset} \curlyb{\pm \wtb^\top\somemat_2\reig}\\
&\iff& \text{every } \wtb\in\wtset \text{ has } \inf_{\reig\in\real^\bign_{++}} \max_{\wt\in\wtset} \curlyb{\pm\paren{\wt^\top\somemat_1- \wtb^\top\somemat_2}\reig}\geq0\\
&\iff& \text{every } \wtb\in\wtset \text{ has } \max_{\wt\in\co\wtset} \inf_{\reig\in\real^\bign_{++}}\curlyb{\pm\paren{\wt^\top\somemat_1- \wtb^\top\somemat_2}\reig}\geq0\\
&\iff& \text{every } \wtb\in\wtset \text{ admits } \wt\in\co\wtset \text{ with } \pm\paren{\wt^\top\somemat_1- \wtb^\top\somemat_2}\geq0,
\end{eqnarray*}
where the third equivalence follows from Sion's minimax theorem.
But next note that the matrices $\somemat_1,\somemat_2$ are bistochastic, hence row stochastic. Therefore, letting $\elem\in\real^\bign$ be the vector with all entries $1$, any $\wt,\wtb\in\co\wtset$ have $$\pm\paren{\wt^\top\somemat_1- \wtb^\top\somemat_2}\elem=\pm\paren{\wt^\top\elem- \wtb^\top\elem}=\pm(\litn-\litn)=0.$$
Thus, the only way the vector inequality $\pm\paren{\wt^\top\somemat_1- \wtb^\top\somemat_2}\geq0$ can hold is if the left-hand side is the zero vector, that is, $\wt^\top\somemat_1= \wtb^\top\somemat_2$.

So we have shown conditions $\slo_1\succsim_{+}\slo_2$ and $\slo_1\succsim_{-}\slo_2$ are each equivalent to the condition that every $\wtb\in\wtset$ admits $\wt\in\co\wtset$ with $\wt^\top\somemat_1= \wtb^\top\somemat_2$, as required.
\end{proof}

\begin{lemma}\label{lem: investment full solution}
Suppose $\Viewb$ is an $\rlo$-perspective with associated weights $\reig=(\reig_i)_{i=1}^\bign$, perspective $\View$ is an $\rlo$-extremal $\slo$-perspective such that $\issur:=\simmat{\View}{\Viewb}\reig$ have $\issur_1\geq\cdots\geq\issur_\bign$, and $\aissur:=\tfrac1\bign\sum_{i=1}^\bign\issur_i=\tfrac1\bign\sum_{i=1}^\bign\reig_i$. \begin{enumerate}[(i)]
    \item Under R-best equilibrium selection: capacity $\bign$ is optimal if and only if $\cost\leq\issur_\bign$, capacity $0$ is optimal if and only if $\cost\geq\issur_1$, and capacity $\litn\in\{1,\ldots,\bign-1\}$ is optimal if and only if $\issur_\litn\geq\cost\geq\issur_{\litn+1}$.
    \item Under R-worst equilibrium selection: capacity $\bign$ is optimal if and only if $\cost\leq\aissur$, capacity $0$ is optimal if and only if $\cost\geq\aissur$, and capacity $\litn\in\{1,\ldots,\bign-1\}$ is optimal if and only if $\cost=\issur_1=\cdots=\issur_\bign=\aissur$.
\end{enumerate}
\end{lemma}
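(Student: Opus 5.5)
The plan is to reduce R's capacity choice to maximizing an explicit function of $\litn$ under each selection rule, and then read off the optimality conditions. Fix the $\rlo$-extremal $\slo$-perspective $\View$, ordered so that $\issur_1\geq\cdots\geq\issur_\bign$. We first claim that, for every $\litn\in\{0,\ldots,\bign\}$,
\[
\goodur(\slo,\rlo)=\sum_{i=1}^{\litn}\issur_i
\quad\text{and}\quad
\badur(\slo,\rlo)=\sum_{i=\bign-\litn+1}^{\bign}\issur_i .
\]
For $1\leq\litn\leq\bign-1$ this follows from the last assertion of \autoref{lem: possible R payoff sets}, together with \autoref{lem: extremal perspective exists} and the definition of $\rlo$-extremality. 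Those results say the highest and lowest equilibrium R payoffs are attained among profiles communicating perspectives issue-equivalent to $\View$. By \autoref{lem: similarity payoff calculation}, such profiles yield exactly the sums $\sum_{i\in\indset}\issur_i$ with $|\indset|=\litn$, and the largest and smallest of these are the top-$\litn$ and bottom-$\litn$ sums.

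The endpoint capacities need separate treatment. For $\litn=0$ the action is always zero, giving payoff $0$. For $\litn=\bign$ the action matches the state, giving payoff $\Tr(\rlo\mvar)=\sum_j\reig_j=\sum_i\issur_i$, using the bistochasticity of $\simmat{\View}{\Viewb}$. Both values agree with the formulas above, and the same bistochasticity gives $\aissur=\tfrac1\bign\sum_i\reig_i$.

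Part (i) then concerns $f(\litn):=\sum_{i=1}^{\litn}(\issur_i-\cost)$. Its increments $\issur_\litn-\cost$ are weakly decreasing in $\litn$, so $f$ is concave on $\{0,\ldots,\bign\}$. Consequently $\litn$ is a maximizer if and only if the left increment is nonnegative and the right increment is nonpositive. That means $\issur_\litn\geq\cost\geq\issur_{\litn+1}$ for interior $\litn$, with only one side applying at the endpoints: $\cost\leq\issur_\bign$ for $\litn=\bign$ and $\cost\geq\issur_1$ for $\litn=0$. The one point to state carefully is that local optimality implies global optimality for a discrete concave sequence; this holds by telescoping the monotone increments.

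Part (ii) concerns $g(\litn):=\sum_{i=\bign-\litn+1}^{\bign}(\issur_i-\cost)$, whose increments $\issur_{\bign-\litn+1}-\cost$ are weakly increasing, so $g$ is convex. Convexity forces the maximum to occur at $0$ or $\bign$. Since $g(0)=0$ and $g(\bign)=\bign(\aissur-\cost)$, capacity $\bign$ is optimal if and only if $\cost\leq\aissur$, and capacity $0$ is optimal if and only if $\cost\geq\aissur$.

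The main obstacle is the interior case of (ii). Suppose an interior $\litn$ is optimal. Then $g(\litn)\geq\max\{g(0),g(\bign)\}$, while convexity gives $g(\litn)\leq$ the chord between $g(0)$ and $g(\bign)$. Hence $g(0)=g(\bign)$ and $g$ is affine along the chord, which forces $g$ to be constant on $\{0,\ldots,\bign\}$. A constant $g$ has every increment zero, so all $\issur_i=\cost$, and therefore $\cost=\aissur$. Conversely, if all $\issur_i=\cost$, then $g\equiv0$ and every capacity is optimal. The strict-convexity bookkeeping in the first direction is the step to write carefully.
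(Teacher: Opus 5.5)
Your proposal is correct and follows essentially the same route as the paper. Both arguments use $\rlo$-extremality and the similarity payoff formula to identify R's best and worst payoffs at capacity $\litn$ as the top-$\litn$ and bottom-$\litn$ partial sums of $\issur$. Both then use concavity and local first-order conditions under best selection, and convexity with endpoint optimality under worst selection, where an interior optimum forces a constant objective; your explicit check of the endpoint capacities $\litn\in\{0,\bign\}$ just spells out what the paper leaves to a footnote.
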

\begin{proof}
For any given $\litn\in\{0,\ldots,\bign\}$, \autoref{prop: eqm char} and $\View$ being $\rlo$-extremal tell us a best and worst equilibrium for R take the form of communicating whichever perspective R most prefers among those that are issue-equivalent to $\View$. By \autoref{lem: similarity payoff calculation}, then, the best and worst R equilibrium payoff given capacity $\litn$ are $\sum_{i=1}^\litn \issur_i$ and $\sum_{i=\bign-\litn+1}^\bign \issur_i$, respectively.\footnote{Although that lemma and its inputs assume a capacity in $\{1,\ldots,\bign-1\}$, the proofs apply identically to the cases of $\{0,\bign\}$. Alternatively, the payoff formulas are trivial to prove for these special cases, given the equilibrium characterizations described in \autoref{sec: model}.}  Because $\issur_1\geq\cdots\geq\issur_\bign$, these sums are weakly concave in $\litn$ and weakly convex in $\litn$, respectively, and the latter is not affine in $\litn$ unless $\issur$ is a constant vector. Because the investment cost is linear in $\litn$, the observations of the previous sentence apply identically to the best and worst R equilibrium payoffs net of investment costs.

Consider the case of R-best equilibrium selection. Because R's net payoff is weakly concave in investment, an investment level is optimal if and only if R cannot profitably increase or decrease investment by $1$. Meanwhile, the net payoff difference between investment level $\litnt\in[\bign]$ and investment level $\litnt-1$ is $\issur_{\litnt}-\cost$, so the optimality characterization in the lemma follows.

Turn now to the case of R-worst equilibrium selection. Because R's net payoff is weakly convex in investment, one of the extreme investment levels $\{0,\bign\}$ is optimal. The payoff difference between investment levels $\bign$ and $0$ is $\bign(\aissur-\cost)$. Hence, the stated characterization of when each of $\{0,\bign\}$ is optimal follows. 
Finally, given that the objective is weakly convex on its scalar domain, no interior investment level is optimal unless this objective is constant (in which case all are optimal). But the latter condition holds if and only if $\cost=\issur_1=\cdots=\issur_\bign$, in which case this quantity also coincides with $\aissur$. The lemma follows.
\end{proof}

\begin{lemma}\label{lem: generically multiple R payoffs}
Given $\rlo$, take any $\rlo$-perspective $\Viewb$ with associated weights $(\reig_i)_{i=1}^\bign$.
Suppose $\rlo$ is not proportional to $\inv{\mvar}$ (equivalently $\{\reig_i\}_{i=1}^\bign$ do not all coincide). Then, generic $\slo\in\Loss$ has a unique $\slo$-perspective $\View$, which has all entries of $\issur=\simmat{\View}{\Viewb}\reig$ distinct.
\end{lemma}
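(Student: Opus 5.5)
Since \autoref{lem: semialgebraic and dense implies generic} is available, the plan is to show that the set $\Loss_2$ of $\slo\in\Loss$ with distinct weights and with $\simmat{\View}{\Viewb}\reig$ having distinct entries is semialgebraic and dense in $\Loss$. Uniqueness of the perspective follows from \autoref{lem: distinct weights means unique perspective}. For semialgebraicity, I would write $\Loss_2$ as the projection of a semialgebraic set. It consists of those $\slo$ for which there exist $\View$ and diagonal $\somedmat$ with $\View^\top\mvar\View=I$ and $\slo=\View\somedmat\View^\top$, where the diagonal entries of $\somedmat$ are pairwise distinct, and where the entries of $\simmat{\View}{\Viewb}\reig$ (polynomials in the entries of $\View$) are pairwise distinct. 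When the weights are distinct, the vector $\issur$ is determined up to permutation, so the condition does not depend on which perspective is chosen. By Tarski--Seidenberg, the projection is semialgebraic.

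For density, I would fix an arbitrary $\slo\in\Loss$. First I would perturb it into $\Loss_1$ as in \autoref{lem: generically distinct weights}, keeping the same perspective $\View$. Then I would perturb the perspective. For an orthogonal $\somemat$ near $I$, the matrix $\slo'=\View\somemat^\top\somedmat\somemat\View^\top$ is close to $\slo$, has the same distinct weights, and has unique perspective $\View\somemat^\top$. It then suffices to show that every neighborhood of $I$ in the orthogonal group $O(\bign)$ contains some $\somemat$ for which $\issur(\somemat)=\simmat{\View\somemat^\top}{\Viewb}\reig$ has distinct entries. Writing $\someorth=\cormat{\View}{\Viewb}$, we have $\issur(\somemat)_i=[\somemat\someorth\rlob\someorth^\top\somemat^\top]_{ii}$, where $\rlob=\mathrm{diag}(\reig)$.

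The main obstacle is this last step. Each condition $\issur_i=\issur_j$ is a polynomial equation on the connected real-analytic manifold $SO(\bign)$. Such a zero set is either all of $SO(\bign)$ or nowhere dense in it, so a finite union of these sets is nowhere dense unless one of them is everything. It therefore remains to show that, for each pair $i\neq j$, the function $\somemat\mapsto[\somemat B\somemat^\top]_{ii}-[\somemat B\somemat^\top]_{jj}$ is not identically zero on $SO(\bign)$, where $B=\someorth\rlob\someorth^\top$ is symmetric with eigenvalues $\reig$. Since $\someorth^\top\somemat^\top$ ranges over all of $SO(\bign)$ up to a fixed sign, I would choose $\somemat$ to diagonalize $B$ into $\mathrm{diag}(\reig)$ with the entries permuted so that positions $i$ and $j$ hold two distinct weights. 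This is possible because the weights are not all equal, and a sign flip keeps $\somemat$ in $SO(\bign)$. The difference is then nonzero, so each bad set is nowhere dense. Hence near $I$ there is a $\somemat$ avoiding all of them, which proves density, and the lemma follows from \autoref{lem: semialgebraic and dense implies generic}.
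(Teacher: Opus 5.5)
Your proof is correct and follows essentially the same route as the paper. It perturbs into distinct weights and then perturbs the perspective, uses the fact that the real-analytic function $\issur_i-\issur_j$ on a connected component of the orthogonal group is either identically zero or vanishes on a nowhere dense set, and takes a permuted, sign-flipped copy of $\Viewb$ as the witness that it is not identically zero. The differences are minor: you work on $SO(\bign)$ near $I$ and cite the identity theorem directly, whereas the paper proves the clopen argument by hand on all of $\Views$ and uses a column sign flip to pass between its two components; you also justify semialgebraicity explicitly via Tarski--Seidenberg, which the paper leaves implicit.
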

\begin{proof}
Let $\Views\subseteq\matnn$ denote the set of all perspectives. For any distinct $i,j\in[\bign]$, define $\func_{ij}:=(\elem_i-\elem_j)^\top\simmat{\cdot}{\Viewb}\reig:\Views\to\real$.

Below, we will establish that $\func_{ij}$ is not zero on any nonempty neighborhood for any distinct $i,j\in[\bign]$. Let us first see how this would deliver the lemma. In this case, $\Views\setminus\func_{ij}^{-1}(0)$ is dense in $\Views$, and so every $\slo\in\Loss$ with unique perspective is in the closure of the set of such loss matrices with perspective in $\Views\setminus\func_{ij}^{-1}(0)$.\footnote{Holding the $\bign$ distinct weights fixed and perturbing the perspective amounts to a small perturbation of the loss matrix.} But then, \autoref{lem: distinct weights means unique perspective} and 
\autoref{lem: generically distinct weights} imply the matrices in $\Loss$ whose unique perspective lives in $\Views\setminus\func_{ij}^{-1}(0)$ is dense in $\Loss$, and so (since it is semialgebraic) generic by \autoref{lem: semialgebraic and dense implies generic}. But a finite intersection of generic sets is generic; hence, for a generic set of matrices in $\Loss$, the unique perspective is in $\Views\setminus\bigcup_{i,j\in[\bign]:\ i\neq j}\func_{ij}^{-1}(0)$. But any $\View$ in this set is such that $\simmat{\View}{\Viewb}\reig$ has all entries distinct, by definition.

So fixing distinct $i,j\in[\bign]$, it remains to see $\func=\func_{ij}$ is not zero on any neighborhood. To that end, let $\Views_0$ be the interior of $\inv{\func}(0)$ in $\Views$, which is open by construction; and let $\Views_+$ comprise the positive-determinant members of $\Views$.

Now, we argue $\Views_0$ is closed in $\Views$. Consider any $\View$ in its closure. Because $\Views$ is the zero set of the polynomial $\Viewt\mapsto\Viewt^\top\mvar\Viewt-I$ on $\matnn$, the real-analytic implicit function theorem \citep[Theorem 2.3.5,][]{krantz2002primer} yields an open neighborhood $\somenhd$ of a Euclidean space and an analytic map from said neighborhood onto a neighborhood of $\View$ in $\Views$---without loss, say $\somenhd$ is convex. The composition of this analytic map with the polynomial $\func$ is therefore analytic too \citep[Proposition 2.2.8,][]{krantz2002primer}. But because $\View$ is in the closure of $\Views_0$, every element of the convex set $\somenhd$ is on some line that intersects the open neighborhood in $\somenhd$ on which this composed function is zero. 
Applying the identity theorem \citep[Corollary 1.2.7][]{krantz2002primer}, it follows that the composed function is globally zero on $\somenhd$. Therefore, $\func$ is zero in a neighborhood of $\View$, meaning $\View\in\Views_0$ too.

Next, let us see that $\Views_+$ is either disjoint from $\Views_0$ or contained in it. Indeed, because the positive definite matrix $\sqrt{\mvar}$ has strictly positive determinant, and 
the self-map on $\matnn$ given by left-multiplication by $\sqrt{\mvar}$ is an invertible linear map that maps $\Views$ onto the orthogonal group $O(\bign,\real)$, it follows that this linear map  is a homeomorphism between $\Views_+$ and the special orthogonal group. Because the latter is connected \citep[Example 9.14,][]{baker2003matrix}, it follows that $\Views_+$ is. But in light of the previous paragraph (and the definition), $\Views_0$ is clopen, implying $\Views_0\cap\Views_+$ is clopen in $\Views_+$. As the only clopen sets in $\Views_+$ are $\Views_+$ itself and the empty set, the claim follows.

But now, observe that the self-map $\Views$ that multiplies the first column by $-1$ is a homeomorphism that maps the open set $\Views_+$ onto its open complement, and that preserves the value of $\func$. Therefore, $\Views_0$ is invariant under this self-map. Hence, the conclusion of the previous paragraph implies $\Views_0$ is either the empty set or all of $\Views$. 

Given the above paragraph, we will know $\Views_0$ is empty if we determine it is not equal to $\Views$. To see this fact, note that for any permutation matrix $\perm\in\matnn$, the perspective $\Viewb\perm$ has 
$\func(\Viewb\perm)=(\elem_i-\elem_j)^\top\perm^\top\simmat{\Viewb}{\Viewb}\reig=\brac{\perm(\elem_i-\elem_j)}^\top\reig$. But then, because $\reig$ is not a constant vector, some appropriate choice of this permutation yields $\func(\Viewb\perm)\neq0$. Hence, $\Viewb\perm\in\Views\setminus\Views_0$, delivering the lemma.
\end{proof}

\begin{proof}[Proof of \autoref{prop: investment}]
Letting $\issur,\aissur$ be as given in the statement of \autoref{lem: investment full solution}, that lemma tells us the cutoffs $(\hcost,\mcost,\lcost):=(\issur_1,\aissur,\issur_\bign)$ satisfy all three items in the list. Moreover, $\hcost\geq\mcost\geq\lcost>0$ in general, with all inequalities strict if $\issur_1>\issur_\bign$. 

It remains to see that, generically, $\issur_1>\issur_\bign$ for some $\rlo$-extremal $\slo$-perspective. First, \autoref{lem: generically distinct weights} tells us a dense set of $\rlo\in\Loss$ admit an $\rlo$-perspective which has associated weights $\reig=(\reig_i)_{i=1}^\bign$ that are all distinct. Second, by \autoref{lem: generically multiple R payoffs}, for any $\rlo$ of this form, a generic set of $\slo\in\Loss$ admit a unique $\slo$-perspective (which is then $\rlo$-extremal) such that $\simmat{\View}{\Viewb}\reig$ is not a constant vector. In particular, the set of pairs $(\slo,\rlo)\in\Loss^2$ each with distinct weights and yielding the desired $\issur_1>\issur_\bign$ property is dense in $\Loss^2$. Being semialgebraic by definition, it is generic by \autoref{lem: semialgebraic and dense implies generic} as desired.
\end{proof}

\begin{proof}[Proof of \autoref{lem: optimal investment switching}]
Let $\hsomemat:=\simmat{\hView}{\Viewb}$ and $\lsomemat:=\simmat{\lView}{\Viewb}$. Given \autoref{prop: eqm char} and \autoref{prop: eqlbm payoffs}, the R-best equilibrium value differs for S losses $\hslo$ and $\lslo$ for some capacity if and only if some $\litn\in[\bign-1]$ exists such that the sum of the $\litn$ highest entries of $\hsomemat\reig$ and $\lsomemat\reig$ are different. Because $\hsomemat$ and $\lsomemat$ being bistochastic means these two vectors have the same sum of entries ($\sum_{i=1}^\bign\reig_i$), this condition is equivalent to $\hsomemat\reig$ and $\lsomemat\reig$ not being rearrangements of one another. 
So we must show that generic $\reig\in\real^\bign_{++}$ have $\hsomemat\reig\neq\perm\lsomemat\reig$ for every permutation matrix $\perm\in\matnn$, and that all such $\reig$ admit the desired thresholds.

First, let us establish the genericity property. For any permutation matrix $\perm\in\matnn$, note that $\ker(\hsomemat-\perm\lsomemat)$ is a linear subspace of $\real^\bign$, and is a proper subspace because $\hView\nsim_\Viewb\lView$. But then, because $\real^\bign_{++}$ is a convex open subset of $\real^\bign$, it follows that $\real^\bign_{++}\setminus\ker(\hsomemat-\perm\lsomemat)$ is open and dense in $\real^\bign_{++}$ with Lebesgue-null complement in it---so, generic in it. Therefore, the intersection of this finite collection of sets (ranging over the finitely many permutation matrices) is also generic in $\real^\bign_{++}$. In what follows, fix some $\reig$ belonging to this generic set.

Let $\hissur$ and $\lissur$ be the rearrangements of $\hsomemat\reig$ and $\lsomemat\reig$, respectively, with $\hissur_1\geq\cdots\geq\hissur_\bign$ and $\lissur_1\geq\cdots\geq\lissur_\bign$. We have $\hissur\neq\lissur$ by definition of $\reig$, and $\hissur$ majorizes $\lissur$ because $\hView\succsim_\Viewb\lView$. Therefore, every $\litn\in[\bign-1]$ has $
\sum_{i=1}^{\litn}(\hissur_i-\lissur_i)\geq0$, but the inequality cannot hold with equality for every $\litn$. So let $\llitn$ and $\hlitn_0$ be the first and last index, respectively, in $[\bign-1]$ for which the inequality holds strictly, and let $\hlitn:=1+\hlitn_0\in[\bign]$. Now, let 
$(\hhcost,\hcost,\lcost,\llcost):=(\hissur_{\llitn},\lissur_{\llitn},\lissur_{\hlitn},\hissur_{\hlitn}).$ 
The definitions of $\llitn$ and $\hlitn$ tell us every $\litn\in[\bign]$ with $\litn<\llitn$ or $\litn>\hlitn$ has $\hissur_{\llitn}=\lissur_{\llitn}$, and hence 
$\hhcost\neq\hcost\geq\lcost\neq\llcost$. It then follows from the majorization inequalities that $\hhcost>\hcost\geq\lcost>\llcost$. We can now apply \autoref{lem: investment full solution} in each of these three regions. Costs in $(\llcost,\lcost)$ yield $\loptlitnb\geq\hlitn>\hoptlitnb$; costs in $(\hcost,\hhcost)$ yield $\loptlitnb<\llitn\leq\hoptlitnb$; and for costs outside of $[\llcost,\hhcost]$, optimal R-best-equilibrium investment is determined by the exact same first-order conditions for both $\hslo$ and $\lslo$.
\end{proof}

\section{Proofs for Discussion Section}

\begin{lemma}\label{lem: profiles with S best response}
Every $\slo$-projection of rank $\litn$ is equal to $\rstrat\sstrat$ for some R strategy $\rstrat$ and S best response $\sstrat$; and every such strategy profile yields such an $\slo$-projection.
\end{lemma}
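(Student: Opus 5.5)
The plan is to read both directions off \autoref{lem: S best response}, which already characterizes S best responses: a strategy $\sstrat$ is an S best response to $\rstrat$ if and only if $\sstrat$ is linear and $\rstrat\sstrat$ is an $\slo$-projection of rank $\litn$.

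\emph{Second assertion.} Given any R strategy $\rstrat$ and any S best response $\sstrat$ to it, \autoref{lem: S best response} says directly that $\strats:=\rstrat\sstrat$ is a rank-$\litn$ $\slo$-projection, so nothing further is needed.

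\emph{First assertion.} Fix a rank-$\litn$ $\slo$-projection $\strats$. The construction has three steps.
\begin{enumerate}
\item Take $\rstrat\in\matnk$ whose columns are a basis of the $\litn$-dimensional subspace $\ran\strats$. Then $\rstrat$ has full column rank, so it is an R strategy.
\item Because every column of $\strats$ lies in $\ran\strats=\ran\rstrat$, and $\rstrat$ is injective, there is a unique $\sstrat\in\matkn$ with $\strats=\rstrat\sstrat$.
\item Left multiplication by $\sstrat$ is a linear S strategy, and $\rstrat\sstrat=\strats$ is a rank-$\litn$ $\slo$-projection. The ``if'' direction of \autoref{lem: S best response} then says $\sstrat$ is an S best response to $\rstrat$.
\end{enumerate}
For step 2, an explicit formula is available: $\sstrat=(\rstrat^\top\rstrat)^{-1}\rstrat^\top\strats$, using the left inverse of $\rstrat$.

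The only point needing care is step 2, namely that the factorization $\strats=\rstrat\sstrat$ exists. This follows because $\rstrat$ is chosen with range exactly $\ran\strats$ and is injective. Beyond this, the argument is a direct application of \autoref{lem: S best response} with no real obstacle.
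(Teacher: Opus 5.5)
Your proof is correct and essentially matches the paper's: both choose $\rstrat$ with columns forming a basis of $\ran\strats$, and both rest on \autoref{lem: S best response}. The only difference is cosmetic. The paper takes the S best response $\sstrat$ to $\rstrat$ and gets $\rstrat\sstrat=\strats$ because an orthogonal projection is determined by its range; you factor $\strats$ through $\rstrat$ with a left inverse and invoke the ``if'' direction of that lemma, which encodes the same range argument.
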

\begin{proof}
The second assertion is proven in \autoref{lem: S best response}. Toward the first assertion, consider any $\slo$-projection $\strats$ with rank $\litn$. Because it has rank $\litn$, some $\rstrat\in\matnk$ exists with the same range---let the columns be any basis for said range. \autoref{lem: S best response} tells us some S best response $\sstrat$ exists, and this map is a rank-$\litn$ linear map such that $\rstrat\sstrat$ is an $\slo$-projection. But then, $\rstrat\sstrat=\strats$ as required, because an orthogonal projection on any inner-product space is determined by its range.
\end{proof}

\begin{proof}[Proof of \autoref{prop: persuasion}]
In what follows, call our main model \emph{the original game}, and let \emph{the common-interest game} refer to a modification of our main model in which R too has loss matrix $\slo$.

Below, we will establish that some S-preferred strategy profile exists; first, let us see the equivalence assuming existence. Indeed, in this case, a pure strategy profile is S-preferred if and only if it is a best pure-strategy Nash equilibrium of the common-interest game (in normal form). But both in the common-interest game and in the original game, a strategy profile is a Nash equilibrium of the normal form if and only if it is outcome equivalent to a Bayes Nash equilibrium. The equivalence then follows from \autoref{cor: S cares less about R}.

All that remains is existence of an S-preferred strategy profile. By \autoref{lem: S best response}, every R strategy admits an S-best response. So we can without loss show optimality only among strategy profiles in which S is best responding. Given \autoref{lem: profiles with S best response} and \autoref{lem: trace payoff calculation}, we need only see an optimum exists to the program 
$$\max_{\strats \text{ an $\slo$-projection of rank } \litn} \Tr\brac{\paren{2\slo\strats-\strats^\top\slo\strats}\mvar}.$$
Since the domain is compact and the objective continuous, some optimum exists.
\end{proof}

\begin{lemma}\label{lem: FOC for delegation}
Some optimum exists in the delegation problem, and every optimum has linear S strategy. If $(\sstrat,\rstrat)$ is an optimum, and $\strats=\rstrat\sstrat\in\matnn$, then  $$\strats^\top\brac{\rlo(I-\strats)\mvar\slo+\slo\mvar(I-\strats^\top)\rlo}(1-\strats)=0.$$ 
Hence, if $\strats$ is a $\inv{\mvar}$-projection, then it is also an $\rlo$-projection.
\end{lemma}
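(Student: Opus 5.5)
The plan is to reduce the delegation problem to an optimization over the compact set of rank-$\litn$ $\slo$-projections, using the earlier lemmas. Then I derive the first-order condition by perturbing the range of the projection along a curve that stays inside this set.

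\emph{Existence and linearity.} Under commitment, R picks a full-rank $\rstrat$ and S best responds. By \autoref{lem: S best response}, S's best response to any $\rstrat$ is unique and linear, so every optimum has a linear S strategy. By \autoref{lem: profiles with S best response}, the induced matrices $\strats=\rstrat\sstrat$ are exactly the rank-$\litn$ $\slo$-projections. By \autoref{lem: trace payoff calculation}, R's value is
\[
\gain(\strats)=\Tr\brac{\paren{2\rlo\strats-\strats^\top\rlo\strats}\mvar}.
\]
So the delegation problem is to maximize the continuous function $\gain$ over rank-$\litn$ $\slo$-projections. This set is compact: it is closed, since idempotence, $\strats^\top\slo=\slo\strats$ and $\Tr\strats=\litn$ are closed conditions. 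It is bounded, since $\sqrt\slo\strats\inv{\sqrt\slo}$ is an orthogonal projection. An optimum therefore exists, exactly as in the proof of \autoref{prop: persuasion}.

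\emph{First-order condition.} For any $\somesk$ that is $\slo$-skew (that is, $\slo\somesk+\somesk^\top\slo=0$, equivalently $\somesk=\inv\slo\somemat$ with $\somemat$ skew-symmetric), the curve $\strats(t)=e^{t\somesk}\strats e^{-t\somesk}$ stays in the feasible set. The reason is that $e^{t\somesk}$ is an $\slo$-isometry, so the curve preserves idempotence, $\slo$-self-adjointness and rank. Its derivative at $t=0$ is $X=\somesk\strats-\strats\somesk$, and
\[
\tfrac12\tfrac{d}{dt}\gain(\strats(t))\big|_{t=0}=\Tr\brac{(I-\strats^\top)\rlo X\mvar}.
\]
Setting this to zero for all skew $\somemat$ says that a certain matrix, of the form $\brac{\strats\mvar(I-\strats^\top)\rlo-\mvar(I-\strats^\top)\rlo\strats}\inv\slo$, is symmetric. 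I would then sandwich this symmetry condition between $\strats^\top$ on the left and $(I-\strats)$ on the right, using $\strats^\top\slo=\slo\strats$ and $\slo(I-\strats)=(I-\strats^\top)\slo$ to move factors across. This should isolate exactly the displayed identity
\[
\strats^\top\brac{\rlo(I-\strats)\mvar\slo+\slo\mvar(I-\strats^\top)\rlo}(I-\strats)=0.
\]
The expected main obstacle is this algebraic bookkeeping: confirming that the $\slo$-isometry perturbations span the whole tangent space (so that no first-order direction is missed), and that the projected form of the condition matches the stated one rather than a transpose or variant of it. If the conjugation approach gets messy, a fallback is to parametrize ranges via $\rstrat+\varepsilon\somemat$ as in \autoref{lem: R best response}, differentiating S's best-response projection onto $\ran(\rstrat+\varepsilon\somemat)$.

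\emph{Consequence.} Now suppose $\strats$ is also a $\inv\mvar$-projection. Then $\strats\mvar=\mvar\strats^\top$, hence $(I-\strats)\mvar=\mvar(I-\strats^\top)$.
\begin{itemize}
\item The second term vanishes: $\strats^\top\slo\mvar(I-\strats^\top)=\slo\strats\mvar(I-\strats^\top)=\slo\mvar\strats^\top(I-\strats^\top)=0$.
\item The first term simplifies: $\rlo(I-\strats)\mvar\slo(I-\strats)=\rlo\mvar(I-\strats^\top)(I-\strats^\top)\slo=\rlo(I-\strats)\mvar\slo$.
\end{itemize}
The condition therefore reads $\strats^\top\rlo(I-\strats)\mvar\slo=0$. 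Since $\mvar\slo$ is invertible, $\strats^\top\rlo=\strats^\top\rlo\strats$. The right-hand side is symmetric, so $\strats^\top\rlo=\rlo\strats$. Together with idempotence, this makes $\strats$ an $\rlo$-projection.
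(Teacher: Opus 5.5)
Your proposal is correct and follows essentially the paper's proof. The paper first conjugates by $\sqrt{\slo}$ so that it can perturb the ordinary orthogonal projection $\sqrt{\slo}\,\strats\,\sqrt{\slo}^{-1}$ by conjugation with $e^{\scalar Z}$ for skew-symmetric $Z$; this is exactly your family of curves generated by $\slo$-skew matrices, written in transformed coordinates, and your closing step for the $\inv{\mvar}$-projection case matches the paper's argument line by line.

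The sandwich step you deferred does go through. For $Y:=\rlo(I-\strats)\mvar\slo+\slo\mvar(I-\strats^\top)\rlo$, symmetry of your matrix is equivalent to $\strats^\top Y=Y\strats$, and right-multiplying by $I-\strats$ gives the displayed identity directly. You also do not need to check that these perturbations span the tangent space: any feasible curve through the optimum yields a valid necessary condition.
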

\begin{proof}
Any feasible profile, hence any optimum, in the delegation problem has linear S strategy by \autoref{lem: S best response}. 
Given \autoref{lem: profiles with S best response}, a committed R can induce $\act$ if and only if $\act=\strats\state$ for some $\slo$-projection $\strats$ of rank $\litn$. Her payoff from such a strategy profile is $\Tr\brac{\paren{2\rlo\strats-\strats^\top\rlo\strats}\mvar}$ by \autoref{lem: trace payoff calculation}.
She therefore solves the program 
$$\max_{\strats \text{ an $\slo$-projection of rank } \litn} \Tr\brac{\paren{2\rlo\strats-\strats^\top\rlo\strats}\mvar}.$$
Since the domain is compact and the objective continuous, some optimum exists.

Now, define $$(\stratsb,\mvarb,\rlob)=
(\rootslo\strats\inv{\rootslo},\ \rootslo\mvar\rootslo,\ \inv{\rootslo}\rlo\inv{\rootslo}).$$
The matrices $\mvarb$ and $\rlob$ are symmetric, and $\strats$ is $\slo$-orthogonal if and only if $\stratsb$ is an orthogonal matrix. Hence, 
\begin{eqnarray*}
\Tr\brac{\paren{2\rlo\strats-\strats^\top\rlo\strats}\mvar}
&=& \Tr\brac{\rootslo\mvar\paren{2\rlo\strats-\strats^\top\rlo\strats}\inv{\rootslo}} \\
&=& \Tr\paren{2\mvarb\rlob\stratsb-\mvarb\stratsb\rlob\stratsb} 
=\Tr\paren{2\mvarb\rlob\stratsb-\mvarb\stratsb\rlob\stratsb^\top} 
\end{eqnarray*}
Because any orthogonal matrix can be conjugated by $\rootslo$ to get an $\slo$-projection, any optimum for the delegation problem then maximizes $\Tr\paren{2\mvarb\rlob\stratsb-\mvarb\stratsb\rlob\stratsb^\top}$ over all rank-$\litn$ orthogonal projection matrices. Now, for any skew-symmetric matrix $\somesk\in\matnn$, the matrix exponential $e^{\somesk}$ is orthogonal, and so too is $e^{-\scalar\somesk}\stratsb e^{\scalar\somesk}$ for each $\scalar\in\real$. Optimality then requires $\Tr\paren{2\mvarb\rlob e^{-\scalar\somesk}\stratsb e^{\scalar\somesk}-\mvarb e^{\scalar\somesk}\stratsb e^{\scalar\somesk}\rlob e^{-\scalar\somesk}\stratsb e^{-\scalar\somesk}}$ to be maximized at $\scalar=0$. In particular, it requires the first-order condition 
\begin{equation}\label{eqn: delegation FOC}
\tfrac{\partial}{\partial\scalar}\big|_{\scalar=0}\Tr\paren{2\mvarb\rlob e^{-\scalar\somesk}\stratsb e^{\scalar\somesk}-\mvarb e^{-\scalar\somesk}\stratsb e^{\scalar\somesk}\rlob e^{-\scalar\somesk}\stratsb e^{\scalar\somesk}}=0
\end{equation}
to hold. To write \eqref{eqn: delegation FOC} more explicitly, note $e^{0\somesk}=I$ and $\tfrac{\partial}{\partial\scalar}\big|_{\scalar=0}e^{\scalar\somesk}=\somesk$, so that 
$$\tfrac{\partial}{\partial\scalar}\big|_{\scalar=0}\brac{e^{-\scalar\somesk}\stratsb e^{\scalar\somesk}}=I \stratsb \somesk + (-\somesk)\stratsb I = [\stratsb,\somesk].$$
Therefore, \begin{eqnarray*}
&&\tfrac{\partial}{\partial\scalar}\big|_{\scalar=0}\brac{2\mvarb\rlob e^{-\scalar\somesk}\stratsb e^{\scalar\somesk}-\mvarb e^{-\scalar\somesk}\stratsb e^{\scalar\somesk}\rlob e^{-\scalar\somesk}\stratsb e^{\scalar\somesk}}\\
&=& 
2\mvarb\rlob [\stratsb,\somesk]-\mvarb \curlyb{ (\stratsb)\rlob [\stratsb,\somesk]
+
[\stratsb,\somesk]\rlob (\stratsb)
}\\
&=& 
2\mvarb\rlob (\stratsb\somesk-\somesk\stratsb)-\mvarb \stratsb\rlob (\stratsb\somesk-\somesk\stratsb)
-\mvarb
(\stratsb\somesk-\somesk\stratsb)\rlob \stratsb,
\end{eqnarray*}
and hence skew symmetry of $\somesk$ implies \begin{eqnarray*}
&&\tfrac{\partial}{\partial\scalar}\big|_{\scalar=0}\Tr\paren{2\mvarb\rlob e^{-\scalar\somesk}\stratsb e^{\scalar\somesk}-\mvarb e^{-\scalar\somesk}\stratsb e^{\scalar\somesk}\rlob e^{-\scalar\somesk}\stratsb e^{\scalar\somesk}} \\
&=&\Tr\brac{2\mvarb\rlob (\stratsb\somesk-\somesk\stratsb)-\mvarb \stratsb\rlob (\stratsb\somesk-\somesk\stratsb)
-\mvarb
(\stratsb\somesk-\somesk\stratsb)\rlob \stratsb}\\
&=& \Tr\brac{\paren{2\mvarb\rlob \stratsb-2\stratsb\mvarb\rlob-\mvarb \stratsb\rlob \stratsb+\stratsb\mvarb \stratsb\rlob
-\rlob \stratsb\mvarb
\stratsb+\stratsb\rlob \stratsb\mvarb}\somesk}\\
&=& \Tr\brac{\paren{-2 \stratsb\rlob\mvarb-2\stratsb\mvarb\rlob
+\stratsb\rlob \stratsb\mvarb 
+\stratsb\mvarb \stratsb\rlob
+\stratsb\mvarb\stratsb\rlob 
+\stratsb\rlob \stratsb\mvarb}\somesk}\\
&=& -2\Tr\curlyb{\stratsb\brac{ \rlob(I-\stratsb)\mvarb 
+\mvarb(I-\stratsb)\rlob}\somesk}.
\end{eqnarray*}
Requiring \eqref{eqn: delegation FOC} to hold for every skew-symmetric matrix $\somesk$ then amounts to requiring that matrix $\stratsb\somemat$ be symmetric, where we define the symmetric matrix $$\somemat:=\rlob(I-\stratsb)\mvarb 
+\mvarb(I-\stratsb)\rlob.$$ 
But then symmetry of $\stratsb\somemat$ equivalently says $\somemat$ commutes with the orthogonal projection $\stratsb$, so that $\stratsb\somemat(I-\stratsb)=\somemat\stratsb(I-\stratsb)=0$. Hence, \begin{eqnarray*}
0 &=& \rootslo \stratsb\somemat(I-\stratsb) \rootslo \\
&=& \rootslo \stratsb\brac{\rlob(I-\stratsb)\mvarb 
+\mvarb(I-\stratsb)\rlob}(I-\stratsb) \rootslo \\
&=& \slo\strats\inv{\slo}\brac{\rlo(I-\strats)\mvar\slo 
+\slo\mvar(I-\slo\strats\inv{\slo})\rlo}(I-\strats) \\
&=& \strats^\top\brac{\rlo(I-\strats)\mvar\slo 
+\slo\mvar(I-\strats^\top)\rlo}(I-\strats),
\end{eqnarray*}
where the last equality follows from $\strats$ being an $\slo$-projection.

Finally, toward the last assertion, suppose $\strats$ is a $\inv{\mvar}$-projection in addition to being an $\slo$-projection, and is optimal in the delegation problem. In this case, 
$\strats\mvar\slo=\mvar\strats^\top\slo=\mvar\slo\strats,$ so that
\begin{eqnarray*}
0 &=&  \strats^\top\brac{\rlo(I-\strats)\mvar\slo 
+\slo\mvar(I-\strats^\top)\rlo}(I-\strats) \\
&=&  \strats^\top\rlo\mvar\slo(I-\strats) 
(I-\strats) 
+\strats^\top(I-\strats^\top)\slo\mvar\rlo(I-\strats) 
\\
&=&  \strats^\top\rlo\mvar\slo(I-\strats) 
+0\slo\mvar\rlo(I-\strats) 
 \hspace{1.2in} \text{ (since $\strats$ is idempotent)} \\
&=&  \strats^\top\rlo(I-\strats)\mvar\slo 
+0.
\end{eqnarray*}
But then $\mvar\slo$ being invertible means $\strats^\top\rlo(I-\strats)=0$. So $\strats$ is an $\rlo$-projection.
\end{proof}

\begin{proof}[Proof of \autoref{prop: delegation}]
We begin with some useful notation. Let $\Loss_1$ be the set of all $\slo\in\Loss$ such that an $\slo$-perspective is witnessed by $\bign$ distinct weights. Define the correspondence $\stratsset:\Loss\rightrightarrows\matnn$ by letting $\stratsset(\slo)$ be the set of rank-$\litn$ matrices that are simultaneously $\slo$-projections and $\inv{\mvar}$-projections. Define $\func:(\matnn)^2\to\matnn$ by letting $\func(\rlo,\strats):=\strats^\top\rlo\strats-\rlo\strats=(I-\strats)^\top\rlo\strats$. By \autoref{lem: S best response} and \autoref{lem: R best response}, every equilibrium (whatever $\rlo$ is) has a linear S strategy and yields $\act=\strats\state$ for some $\strats\in\stratsset(\slo)$; and \autoref{lem: FOC for delegation} tells us such an equilibrium can also be optimal in the delegation problem only if $\func(\rlo,\strats)$ zero. So the proposition will follow if we establish that generic $(\slo,\rlo)\in\Loss^2$ have $\func(\rlo,\strats)$ nonzero for every $\strats\in\stratsset(\slo)$.

Below, we will show that every $\slo\in\Loss_1$ and $\strats\in\stratsset(\slo)$ admit some $\rlo\in\Loss$ such that $\func(\rlo,\strats)\neq0$. Let us first see how the proposition would follow. Given $\slo\in\Loss_1$ and $\strats\in\stratsset(\slo)$, because $\func(\cdot,\strats)$ is linear, it would follow that a dense set of $\rlo\in\Loss$ have $\func(\rlo,\strats)\neq0$---indeed, any proper convex combination $\rlo$ of $\rlo^0,\rlo^1\in\Loss$ with $\func(\rlo^0,\strats)=0\neq\func(\rlo^1,\strats)$ has $\func(\rlo,\strats)\neq0$. But then (again given such $\slo$ and $\strats$), because the set of such $\rlo\in\Loss$ is semialgebraic, it is generic by \autoref{lem: semialgebraic and dense implies generic}. Now, for every $\slo\in\Loss_1$, it follows from  \autoref{lem: perspective representation of A matrices} that $\stratsset(\slo)$ is finite, and so (a finite intersection of generic sets being generic) generic $\rlo\in\Loss$ have $\{\func(\slo,\rlo,\strats)\}_{\strats\in\stratsset(\slo)}$ all nonzero. In particular, because $\Loss_1$ is dense in $\Loss$ by \autoref{lem: generically distinct weights}, a dense set of $(\slo,\rlo)\in\Loss^2$ has $\func(\rlo,\strats)$ nonzero for every $\strats\in\stratsset(\slo)$. Finally, this set is generic by \autoref{lem: semialgebraic and dense implies generic}. 

So now fix $\slo\in\Loss_1$ and $\strats\in\stratsset(\slo)$. It remains to see some $\rlo\in\Loss$ has $\func(\rlo,\strats)\neq0$. 
And indeed, because the idempotent matrix $\strats^\top$ has rank $\litn\in\{1,\ldots,\bign-1\}$, it has some eigenvectors $\view,\viewb\in\real^\bign$ with respective eigenvalues $1$ and $0$. So consider the loss matrix $\rlo:=\slo+(\view+\viewb)(\view+\viewb)^\top$. That $\strats$ is an $\slo$-perspective then tells us \begin{eqnarray*}
\func(\rlo,\strats)&=&\func(\slo,\strats)+\func\paren{(\view+\viewb)(\view+\viewb)^\top,\strats}\\
&=& 0 + \brac{(I-\strats^\top)(\view+\viewb)}\brac{\strats^\top(\view+\viewb)}^\top \\
&=& \viewb\view^\top,
\end{eqnarray*}
which is nonzero because $\view$ and $\viewb$ are.
\end{proof}

\end{document}